\documentclass[10pt,a4paper]{article}
\usepackage{lipsum}
\usepackage{url}

\PassOptionsToPackage{utf8}{inputenc}
  \usepackage{inputenc}

\PassOptionsToPackage{T1}{fontenc} 
  \usepackage{fontenc}

\PassOptionsToPackage{
  drafting=false,    
  tocaligned=false, 
  dottedtoc=true,   
  eulerchapternumbers=false, 
  floatperchapter=true,      
  eulermath=false,  
  beramono=true,    
  palatino=true,    
  style=classicthesis 
}{classicthesis}

\newcommand{\myTitle}{Cohomology of psl(m|m) in trivial and in adjoint representations\xspace}
\newcommand{\myName}{Thiago Oliveira Ferreira\xspace}

\newcommand{\myDepartment}{Applied Mathematics\xspace}
\newcommand{\myUni}{University of Waterloo\xspace}

\newcommand{\myVersion}{\classicthesis}

\providecommand{\mLyX}{L\kern-.1667em\lower.25em\hbox{Y}\kern-.125emX\@}

\PassOptionsToPackage{canadian}{babel} 
    \usepackage{babel}

\usepackage{csquotes}
\PassOptionsToPackage{%
  backend=biber, 
  bibencoding=auto,
  language=auto,%
  style=numeric-comp,%
  giveninits=true, 
  sorting=none, 
  maxbibnames=10, 
  natbib=true, 
  urldate=long
}{biblatex}
    \usepackage{biblatex}

\PassOptionsToPackage{fleqn}{amsmath}       
  \usepackage{amsmath, amssymb}

\usepackage{graphicx} %
\usepackage{scrhack} 
\usepackage{xspace} 
\PassOptionsToPackage{printonlyused,smaller}{acronym}
  \usepackage{acronym} 

\usepackage{pgfplots} 

\usepackage{tabularx} 
\usepackage{subfig}

\usepackage{listings}
\PassOptionsToPackage{hyperfootnotes=false}{hyperref}
\usepackage{classicthesis}

\hypersetup{%
  colorlinks=true, linktocpage=true, pdfstartpage=3, pdfstartview=FitV,%
  breaklinks=true, pageanchor=true,%
  pdfpagemode=UseNone, %
  plainpages=false, bookmarksnumbered, bookmarksopen=true, bookmarksopenlevel=1,%
  hypertexnames=true, pdfhighlight=/O,
  urlcolor=CTurl, linkcolor=CTlink, citecolor=CTcitation, 
  pdftitle={\myTitle},%
  pdfauthor={\textcopyright\ \myName, \myUni, \myDepartment},%
  pdfsubject={},%
  pdfkeywords={},%
  pdfcreator={pdfLaTeX},%
  pdfproducer={LaTeX with hyperref and classicthesis}%
}

 \makeatletter
 \@ifpackageloaded{babel}%
   {%
     \addto\extrasamerican{%
     }%
     \addto\extrasngerman{%
     }%
     }{\relax}
 \makeatother

\usepackage[super]{nth}
\usepackage{amsthm, mathtools, upgreek}
\usepackage[align, nostrut]{tensor}

\usepackage{multirow, float}

\allowdisplaybreaks

\newtheorem{theorem}{Theorem}
\newtheorem{corollary}{Corollary}[theorem]
\newtheorem{proposition}{Proposition}
\newtheorem{lemma}{Lemma}

\theoremstyle{definition}
\newtheorem{definition}{Definition}

\makeatletter
\newtheorem*{rep@theorem}{\rep@title}
\newcommand{\newreptheorem}[2]{%
    \newenvironment{rep#1}[1]{%
     \def\rep@title{#2 \ref{##1}}%
     \begin{rep@theorem}}%
     {\end{rep@theorem}}}
\makeatother

\newreptheorem{theorem}{Theorem}

\usepackage{pgfplots, tikz}
\usepackage{tikz-cd}
    \pgfplotsset{compat=newest}
    \tikzcdset{arrow style=tikz, diagrams={>=Straight Barb} }
    \usetikzlibrary{babel}

\usepackage[nameinlink]{cleveref}
    \crefname{subsection}{subsection}{subsections}
    \Crefname{subsection}{Subsection}{Subsections}
\usepackage{float}

\newcommand{\sdots}{\mathbin{\reflectbox{$\ddots$}}}

\DeclareRobustCommand\longtwoheadrightarrow
     {\relbar\joinrel\twoheadrightarrow}
\DeclareRobustCommand\longhookrightarrow
     {\lhook\joinrel\longrightarrow}

\usepackage[labelfont=bf]{caption}
\usepackage[affil-it]{authblk}

\hypersetup{pdftitle    = {Cohomology of psl(n|n) in trivial and in adjoint representations},
            pdfauthor   = {{Thiago Oliveira Ferreira}},
            pdfsubject  = {},
            pdfkeywords = {}{}{}{},
			colorlinks  = true,
            }

\makeatletter
\def\@maketitle{%
  \newpage
  \null
  \vskip 2em%
  \begin{center}%
  \let \footnote \thanks
    {\Large\rmfamily\normalfont\spacedallcaps{\@title}\par}%
  \vskip 1.5em%
  { 
  \lineskip .5em%
  \begin{tabular}[t]{c}%
    \textsc{
    \@author%
    }
  \end{tabular}\par
  }%
  \vskip 1em%
  {\normalsize \@date}%
  \end{center}%
  \par
  \vskip 1.5em}
\makeatother

\begin{document}
\pagestyle{plain}

\title{Cohomology of $\mathfrak{psl}(n|n)$ in trivial and in adjoint representations}
\author[1,2,3]{\spacedlowsmallcaps{%
    Thiago~Oliveira~Ferreira%
    \thanks{\texttt{t6olivei@uwaterloo.ca}}}}
\affil[1]{Department of Applied Mathematics, University of Waterloo, Canada}
\affil[2]{Perimeter Institute for Theoretical Physics, Canada}
\affil[3]{Instituto de Física Teórica, São Paulo State University, Brazil}

\date{} 

\maketitle

\begin{abstract}
    \noindent
    We compute the cohomology of the projective special linear Lie superalgebra $\mathfrak{psl}(n|n)$ over a field of characteristic zero in trivial and in adjoint coefficients in full detail through its associated Hochschild--Serre spectral sequences. Applications in physics exist for its real form, the projective special unitary Lie superalgebra $\mathfrak{psu}(n|n)$ --- in particular, $\mathfrak{psu}(2,2|4)$ for AdS/CFT. We also find the cohomology of $\mathfrak{sl}(m|n)$ in trivial coefficients along the way and we point out a discrepancy with a previous statement in the literature. We present the explicit cohomology groups of low degree for $n\leqslant4$.
\end{abstract}

\tableofcontents

\section{Introduction} \label{sec: intro}

The interest in this topic started as a master's project for its applications to physics and, to the best of our knowledge, for the lack of a complete description of the cohomology of the projective special linear Lie superalgebra $\mathfrak{psl}(n|n)$ in the literature.

Besides the applications in physics (which are presented below), supermathematics is a rich field in itself. In particular, the cohomology of Lie superalgebras is an interesting topic: while finite dimensional Lie algebras always have finite dimensional cohomology, that is not the case for Lie superalgebras; the main object of this paper, $\mathfrak{psl}(n|n)$, is one example. This feature is only possible in the ``super'' setting because we allow for \emph{bosonic ghosts} (i.e. odd differential forms), which can tower up without restrictions --- unlike \emph{fermionic ghosts} (even differential forms).

As hinted above, we will borrow jargon from physics. Regarding the $\mathbb{Z}_2$-grading, elements of degree $0$ shall be called ``bosons'', while elements of degree $1$ shall be called ``fermions''. This comes from the supercommutator of an associative superalgebra $\mathfrak{g} = \mathfrak{g}_0 \oplus \mathfrak{g}_1$: on $\mathfrak{g}_0$ we have a commutator $[x,y] = xy-yx$, while on $\mathfrak{g}_1$ we have an anti-commutator $[u,v] = uv+vu$.

\subsection{AdS/CFT and Lie superalgebra cohomology}

The \emph{AdS/CFT correspondence} is one of the central themes of superstring theory. It relies on a duality between a Type IIB supergravity theory on a 5-dimensional anti-de Sitter space crossed with a 5-dimensional sphere ($\mathrm{AdS}_5 \times \mathbb{S}^5$) and a $\mathcal{N}=4$ supersymmetric Yang--Mills (SYM) theory on the boundary of that space \cite{Maldacena_1999, Berkovits_2005}. To accommodate supersymmetries, their gauge symmetries are encoded in a Lie \emph{super}group; namely the projective special unitary Lie supergroup $\mathrm{PSU}(2,2|4)$, whose Lie superalgebra is $\mathfrak{psu}(2,2|4)$.

One of the most straightforward questions one can ask is about the linearized deformations of the $\text{AdS}_5 \times \mathbb{S}^5$ background, which are equivalent to the linearized deformations of $\mathcal{N} = 4$ SYM. They form a representation of the algebra of supersymmetries $\mathfrak{psu}(2,2|4)$ of $\mathrm{AdS}_5 \times \mathbb{S}^5$. This representation is not irreducible, as there are invariant subspaces. Some invariant subspaces have a well-defined Hermitian scalar product;
they are unitary representations. Those subspaces have been well studied in the literature \cite{Gunaydin_2000}.

However, the whole space of deformations is non-unitary, and its structure has not been studied sufficiently well. For example, there are finite-dimensional subspaces, and their structure has only been conjectured \cite{Mikhailov:2011af}.

In the case of $\mathrm{AdS}_3$, the deformation representations were described in \cite{Troost:2011fd, Gaberdiel:2011vf}. They have a rich mathematical structure. But, in the case of $\mathrm{AdS}_5$, such a description is not known.

\bigskip

\emph{What is the cohomology of the supersymmetry algebra $\mathfrak{psu}(2,2|4)$ with coefficients in these representations?} That is a natural question, since cohomology is a natural generalization of invariants. It is a good ``probe'' into the structure of representations. But, besides that, the cohomologies of these representations and their tensor products can be potentially useful for computing the boundary $S$-matrix; that was outlined in \cite{Mikhailov:2019kfm, Mikhailov:2024sef}.

In the pure spinor approach to superstring theory \cite{Berkovits_2009}, the space of linearized deformations is itself a cohomology; namely, of a complex that looks similar to the Lie superalgebra complex. It differs by imposing constraints on the ghosts. The two differentials anticommute, forming a bicomplex. The matching of the two spectral sequences of this bicomplex imposes nontrivial restrictions on the dimensions of the cohomology groups, thus restricting the possible representations. Those restrictions just started to be unveiled in \cite{Ferreira-Mikhailov_2026}.

For this research program, we need to learn how to compute the cohomology spaces of $\mathfrak{psu}(2,2|4)$ with coefficients in various representations. Even for finite-dimensional representations, this is a highly nontrivial linear algebra problem. In this paper, we will compute the cohomology with coefficients in the trivial representation and in the adjoint representation; they are especially important for AdS/CFT as they correspond to BRST cohomology with ghost numbers $0$ and $1$, respectively \cite{Vallilo_2004}.

\subsection{Structure of this paper}

Instead of directly computing the cohomology of $\mathfrak{psu}(2,2|4)$ --- or $\mathfrak{psu}(n|n)$ for a general $n\in\mathbb{N}$ ---, we step back and consider its complexification
\begin{equation} \label{1 psl and psu}
    \mathfrak{psl}_\mathbb{C}(n|n)
        \cong \mathbb{C} \, \mathfrak{psu}(n)
        \coloneqq \mathfrak{psu}(n|n) \otimes_{\mathbb{R}} \mathbb{C} .
\end{equation}

That is because their cohomologies are isomorphic, in the sense that
\begin{equation*}
    H\bigl( \mathfrak{psl}_{\mathbb{C}}(n|n) ; V_{\mathbb{C}} \bigr)
        \cong \mathbb{C} \, H\bigl( \mathfrak{psu}(n|n) , V \bigr) ,
\end{equation*}
where $V_{\mathbb{C}}$ is the complexification of the real module $V$ \cite{Weibel_1994}. Besides, $\mathfrak{psl}(n|n)$ can be easily defined from $\mathfrak{gl}(n|n)$ through the restriction to the kernel of a map (the supertrace) and a quotient by an ideal, as described in \Cref{subsec: super}. Our computation does not rely on the field of complex numbers, so we state our results for any complete and infinite field $\mathbb{K}$ of characteristic zero.

\bigskip

The main results of this paper are the following:

\begin{reptheorem}{thm: psl trivial coefficients}[\cref{subsec: HS trivial coefficients}] \it
The cohomology of $\mathfrak{psl}(n|n)$ in trivial coefficients is given by
\begin{equation*}
    H \bigl( \mathfrak{psl}(n|n) \bigr)
        \cong H(\mathfrak{sl}_n) \otimes \mathbb{K}\bigl[ \eta_2 ,\ \delta^+ ,\ \delta^- \bigr] \Big/ \bigl( \delta^+ \, \delta^- = \tfrac{1}{n!} (\eta_2)^n \bigr) ,
\end{equation*}
where $\delta^\pm = \det(\gamma^\pm)$ are the determinant of the fermionic blocks $\gamma^\pm$ of $\mathfrak{psl}(n|n)$, and $\eta_2 = \operatorname{tr}(\gamma^+ \gamma^-)$ is the trivial central charge associated with $\mathfrak{sl}(n|n)$.
\end{reptheorem}

\begin{reptheorem}{thm: psl adjoint coefficients}[\cref{subsec: 2nd page HS adjoint}] \it
The cohomology of $\mathfrak{psl}(n|n)$ in adjoint coefficients is module over $\mathcal{K}$,
\begin{align*}
    H_{\operatorname{ad}} \bigl( \mathfrak{psl}(n|n) \bigr)
    = \mathcal{K}\bigl\{ (\gamma^+ - \gamma^-) ,\ \mu^+ ,\ \mu^- \bigr\} \big/ \mathcal{J}_{\operatorname{ad}} ,
\end{align*}
where $\mu^\pm$ are the adjugate matrices of $\gamma^\pm$, $\mathcal{K} \coloneqq H\bigl( \mathfrak{psl}(n|n) \bigr)$ is the cohomology in trivial coefficients, and $\mathcal{J}_{\operatorname{ad}}$ is the ideal generated by the relation
\begin{align*} 
    \delta^+ \, \mu^- - \delta^- \, \mu^+
        &= \frac{1}{(n-1)!} \, (\eta_2)^{n-1} \bigl( \gamma^+ - \gamma^- \bigr) .
\end{align*}
\end{reptheorem}

Their proofs use the Hochschild--Serre spectral sequences associated with the pair $\bigl( \mathfrak{psl}(n|n) , \mathfrak{sl}_n \oplus \mathfrak{sl}_n \bigr)$, where $\mathfrak{sl}_n \oplus \mathfrak{sl}_n \cong \mathfrak{psl}(n|n)_0$ is the bosonic subspace of $\mathfrak{psl}(n|n)$. We also present a few examples for $n \leqslant 4$.

\bigskip

To make this paper accessible to both the mathematics and the physics communities, we try to make it as self-contained as possible.
\begin{itemize}
    \item For that reason, in \Cref{sec preliminaries}, we introduce the preliminary definitions and well-established results that will be used later --- namely, definitions in superanalysis, the BRST approach to Lie superalgebra cohomology \emph{à la} AKSZ \cite{AKSZ_1995}, and the Hochschild--Serre spectral sequence. We also revise the cohomology of the general linear Lie superalgebra $\mathfrak{gl}(m|n)$ as done by \textcite{Fuks_1986}.

    \item In \Cref{sec: trivial coefficients}, we perform the computation of the cohomology of $\mathfrak{psl}(n|n)$ in trivial coefficients, invoking the fundamental theorems of invariant theory for $\mathrm{SL}_n$, and give the result for $\mathfrak{sl}(m|n)$ \emph{en passant}. We present an explicit list of generators of low order cohomology spaces for $n\leqslant4$.

    \item In \Cref{sec: adjoint coefficients}, we do the calculation in adjoint coefficients for $\mathfrak{psl}(n|n)$ alone and present a list of generators of low order cohomology spaces for $n=2$.
\end{itemize}

\subsection*{Notation}

We use ``$\coloneqq$'' for definitions and ``$\equiv$'' for equivalent notations. Einstein summation convention is assumed everywhere.

The field $\mathbb{K}$ will be taken to be complete and infinite of characteristic zero. Vector spaces and Lie algebras will be of finite dimension over $\mathbb{K}$. The tensor product of graded spaces is assumed to be graded naturally, viz.
\begin{align*}
    \deg(a \otimes b)
        \coloneqq \deg(a) + \deg(b) .
\end{align*}

The ring of polynomials on the coordinates $x_1,\ldots,x_p$ of a vector space $U$ with coefficients in $\mathbb{K}$ is, as usual, denoted by $\mathbb{K}[x_1,\ldots,x_p] \equiv \mathbb{K}[U] \equiv \mathrm{S}^\bullet(U^\ast)$. If $v_1,\ldots,v_k$ are elements of a module $V$ over the ring $R$, we shall denote by
\begin{equation*}
    R\{ v_{i_1} , \ldots , v_{i_q} \}
        \coloneqq \biggl\{ \sum_{k = 1}^q \alpha_k v_{i_k} \, \bigg| \, \alpha_1,\ldots,\alpha_q \in R \biggr\}
\end{equation*}
the linear span of $v_{i_1} , \ldots, v_{i_q}$ over $R$. Whenever it makes sense, the notation $a \cdot X$ means the elements of the form $a \cdot x$ for $x\in X$, where $\cdot$ is the underlying product. Group actions will be denoted by $\vartriangleright$.

All propositions and theorems without references are, to the best of our knowledge, new --- and, for that reason, followed by proofs.
\section{Preliminaries}
\label{sec preliminaries}

In this Section, we lay out the notation and definitions to be used later.
\begin{itemize}
    \item \Cref{subsec: super} is devoted to defining the Lie superalgebras we are interested in: $\mathfrak{gl}(m|n)$, $\mathfrak{sl}(m|n)$ and $\mathfrak{psl}(n|n)$;
    
    \item \Cref{subsec: BRST} introduces ghost coordinates and the BRST cohomology of Lie (super)algebras;
    
    \item \Cref{subsec: spectralseq} defines the Hochschild--Serre spectral sequence and its associated filtration; and
    
    \item \Cref{subsec: H gl trivial} recalls the computation of $H^\bullet\bigl(\mathfrak{gl}(m|n);\mathbb{K}\bigr)$ through its associated Hochschild--Serre spectral sequence, as found in \textcite{Fuks_1986}.
\end{itemize}

\subsection{Lie superalgebras} \label{subsec: super}

The definitions in this Subsection are due to \textcite{Berezin-Kats_1970, Kac_1977, Fuks_1986}.

As usual, a \emph{supervector space} is a $\mathbb{Z}_2$-graded vector space, a \emph{superalgebra} is a $\mathbb{Z}_2$-graded algebra, and so on. As mentioned in \Cref{sec: intro}, homogeneous elements of degree $0$ will be called ``bosonic'' (or ``even''); homogeneous elements of degree $1$ will be called ``fermionic'' (or ``odd''). The degree of an homogeneous element $X$ will be denoted $\deg(X)$ and, whenever $\deg(X)$ is written, it is assumed that $X$ is homogeneous.

\begin{definition}
A \emph{Lie superalgebra} $\mathfrak{g}$ is a superalgebra $\mathfrak{g} = \mathfrak{g}_0 \oplus \mathfrak{g}_1$ equipped with a binary operation $[\cdot,\cdot] \colon \mathfrak{g} \times \mathfrak{g} \longrightarrow \mathfrak{g}$, called the \emph{Lie superbracket}, such that
\begin{alignat}{3}
    \label{2 superbracket antisymmetry}
    [X,Y] &= -(-1)^{\deg(X)\deg(Y)} [Y,X] ,
    \qquad&\forall X,Y\in\mathfrak{g} ,
    \\
\label{2 superbracket Jacobi}
    \bigl[ X , [Y,Z] \bigr]
        &= \bigl[ [X,Y],Z \bigr]
        + (-1)^{\deg(X)\deg(Y)}\bigl[ Y,[X,Z] \bigr] ,
    \quad&\forall X,Y,Z\in\mathfrak{g} .
\end{alignat}
\end{definition}

The restriction of \eqref{2 superbracket antisymmetry} and \eqref{2 superbracket Jacobi} to $\mathfrak{g}_0 $ defines a Lie algebra structure, so $\mathfrak{g}_0 \longhookrightarrow \mathfrak{g}$ is an inclusion of Lie superalgebras. In turn, $\mathfrak{g}_1$ is a $\mathfrak{g}_0$-module whose $\mathfrak{g}_0$-action is given by the Lie superbracket.

On an associative superalgebra $\mathcal{A}$, the \emph{supercommutator}
\begin{equation} \label{2 supercommutator}
    [X,Y] = XY - (-1)^{\deg(X) \deg(Y)} YX
\end{equation}
turns $\mathcal{A}$ into a Lie superalgebra.

Take $m,n\in\mathbb{N}$ and let
\begin{align*}
    V_0 \coloneqq \mathbb{K}^m ,
    \qquad\text{and}\qquad
    V_1 \coloneqq \mathbb{K}^n .
\end{align*}

\begin{definition}
The \emph{general linear} Lie superalgebra $\mathfrak{gl}(V) \equiv \mathfrak{gl}(m|n)$ (with the field $\mathbb{K}$ implicit) is the associative superalgebra of endomorphisms of ${ V = V_0 \oplus V_1 }$ equipped with the supercommutator. Its homogeneous spaces are
\begin{subequations} 
\begin{align}
    \mathfrak{gl}(m|n)_0
        &\cong \mathfrak{gl}_m \oplus \mathfrak{gl}_n ,
    \\
    \mathfrak{gl}(m|n)_1
        &\cong \mathrm{Hom}(V_0,V_1) \oplus \mathrm{Hom}(V_1,V_0) .
        \label{2 gl m n 1}
\end{align}
\end{subequations}
\end{definition}

For what follows, it will be handy to define the shorter notation
\begin{subequations} \label{2 index structure of U}
\begin{align}
    U^+ &\coloneqq \mathrm{Hom}(V_0,V_1)
        \cong \mathrm{M}_{n \times m}(\mathbb{K}) ,
    \\
    U^- &\coloneqq \mathrm{Hom}(V_1,V_0)
        \cong \mathrm{M}_{m\times n}(\mathbb{K}) .
\end{align}
\end{subequations}
An element in $U^\pm$ is a linear transformation that increases/decreases the degree of a homogeneous supervector in $V$.

An element of $\mathfrak{gl}(m|n)$ can be put in the matrix form
\begin{equation} \label{2 gl decomposition}
    X =
    \begin{pmatrix}
        x & u^- \\ u^+ & y
    \end{pmatrix} ,
    \qquad\text{where}\quad
    \left\{
    \begin{aligned}
        X_0 &\equiv x \oplus y
            &&\in \mathfrak{gl}_m \oplus \mathfrak{gl}_n ,
        \\
        u   &\equiv u^+ \oplus u^-
            &&\in U^+ \oplus U^- .
    \end{aligned}
    \right.
\end{equation}
Beware that \eqref{2 gl decomposition} is simply a decomposition; it is \emph{not} a Lie (super)algebra homomorphism between $\mathfrak{gl}(m|n)$ and $\mathfrak{gl}_{m+n}$, since it clearly does not respect \eqref{2 supercommutator}.

\begin{definition}
The \emph{supertrace} is a map $\operatorname{str}\colon\mathfrak{gl}(m|n) \longrightarrow \mathbb{K}$ such that, for $X$ in the standard decomposition \eqref{2 gl decomposition}, it is given by
\begin{equation} \label{2 def supertrace}
    \operatorname{str}(X)
        \coloneqq \operatorname{tr}(x) - \operatorname{tr}(y) .
\end{equation}
This map does not depend on the choice of basis \cite{Kac_1977}. The \emph{special linear} Lie superalgebra $\mathfrak{sl}(m|n)$ is the kernel of the supertrace.
\end{definition}

Decomposition \eqref{2 gl decomposition} can be carried over to $\mathfrak{sl}(m|n)$, and the map
\begin{subequations} \label{2 isomorphism sl m n 0}
\begin{align}
        \mathfrak{sl}_m \oplus \mathfrak{sl}_n \oplus \mathfrak{u}_1
        &\longrightarrow \mathfrak{sl}(m|n)_0
        \label{2 sl m n 0}
        \\
        (x,y,\lambda) &\longmapsto
        \begin{pmatrix}
            x + \frac{\lambda}{m} \mathbb{1}_m & 0 \\
            0 & y + \frac{\lambda}{n} \mathbb{1}_n
        \end{pmatrix}
\end{align}
\end{subequations}
is an isomorphism of Lie algebras, where $\mathbb{1}_m$ is the $m\times m$ identity matrix. In general, the $1$-dimensional subalgebra $\mathfrak{u}_1$ acts non-trivially on $\mathfrak{sl}(m|n)_1$, since
\begin{align*}
    \operatorname{ad}_{(0,0,1)}
    \begin{pmatrix}
        0 & u^- \\ u^+ & 0
    \end{pmatrix}
        &= \biggl( \frac{1}{m} - \frac{1}{n} \biggr) \,
        \begin{pmatrix}
            0 & -u^+ \\ +u^- & 0
        \end{pmatrix} .
\end{align*}
It is clear that, when $m=n$, the action of $\mathfrak{u}_1$ on $\mathfrak{sl}(n|n)_1$ is trivial: indeed, $\mathfrak{u}_1$ becomes the central ideal generated by the $2n\times2n$ identity $(\mathbb{1}_n,\mathbb{1}_n) = \mathbb{1}_{2n}$.

\begin{definition}
The \emph{projective special linear} Lie superalgebra $\mathfrak{psl}(n|n)$ is the quotient of $\mathfrak{sl}(n|n)$ by the ideal $\mathfrak{i}_{2n} \coloneqq \mathbb{K}\{ \mathbb{1}_{2n} \}$.
\end{definition}

It is immediate to see that
\begin{subequations} \label{2 homogeneous subspaces of psl}
\begin{align}
    \mathfrak{psl}(n|n)_0
        &\cong \mathfrak{sl}_n \oplus \mathfrak{sl}_n ,
        \label{2 psl m n 0}
    \\
    \mathfrak{psl}(n|n)_1
        &\cong U^+ \oplus U^- .
\end{align}
\end{subequations}
The fermionic subspaces of $\mathfrak{gl}(m|n)$ and $\mathfrak{sl}(m|n)$ --- and, for $m=n$, $\mathfrak{psl}(n|n)$ --- are the same. We will write $U = U^+ \oplus U^-$ to indicate the space of odd linear transformations as $\mathfrak{g}_0$-module for the appropriate bosonic part $\mathfrak{g}_0$.

\bigskip

When decomposing $\mathfrak{gl}_n \cong \mathfrak{sl}_n \oplus \mathbb{K}$ as vector spaces, the subspace of $\mathfrak{gl}(m|n)_0$ that is absent in $\mathfrak{sl}(m|n)_0$ is the one spanned by
\begin{equation} \label{2 def J generator of supertrace}
    \mathbb{J}
        \coloneqq \begin{pmatrix}
            +\mathbb{1}_m & 0 \\ 0 & -\mathbb{1}_n
        \end{pmatrix} .
\end{equation}
By the definition of $\mathfrak{sl}(m|n)$ and by later discussion, $\mathbb{J}$ is the element of $\mathfrak{gl}(m|n)$ associated\footnote{More precisely, the \emph{ghost} of $\mathbb{J}$ is the supertrace map \eqref{2 def supertrace}; see \Cref{subsec: BRST}.} to the supertrace \eqref{2 def supertrace} and is the generator of $H^1\bigl( \mathfrak{gl}(m|n) \bigr)$.

\subsection{BRST cohomology of Lie superalgebras} \label{subsec: BRST}

The Lie algebra cohomology was introduced by \textcite{Chevalley-Eilenberg_1948} and can be adapted to Lie superalgebras straightforwardly.

\begin{definition}
The \emph{Chevalley--Eilenberg complex} of cochains of a Lie (super)algebra $\mathfrak{g}$ over $\mathbb{K}$ is
\begin{equation*}
    \mathrm{CE}^\bullet (\mathfrak{g})
        \coloneqq \mathrm{Hom}\bigl( \Lambda^\bullet (\mathfrak{g}) , \mathbb{K} \bigr)
        \cong \Lambda^\bullet (\mathfrak{g}^\ast) ,
\end{equation*}
equipped with the differential $d_{\mathrm{CE}}$ which is dual to the Lie (super)bracket of $\mathfrak{g}$,
\begin{equation*}
    \bigl\langle d_{\mathrm{CE}}(\xi) \,\big|\, X \wedge Y \bigr\rangle
        = -(-1)^{\deg(\xi) \deg(X)} \bigl\langle \xi \,\big|\, [X,Y] \bigr\rangle ,
        \qquad \forall \xi\in\mathfrak{g}^\ast ,\ \forall X,Y\in \mathfrak{g} ,
\end{equation*}
and is extended to $\Lambda^\bullet(\mathfrak{g}^\ast)$ as a derivation. The \emph{Chevalley--Eilenberg cohomology} of $\mathfrak{g}$ is the cohomology of the cochain complex $\bigl( \mathrm{CE}^\bullet(\mathfrak{g}) , d_{\text{CE}} \bigr)$.
\end{definition}

Instead, we can use the language of BRST operator and Faddeev--Popov ghosts \emph{à la} AKSZ \cite{AKSZ_1995}, which is more familiar to the physicists --- and will be used in this paper.

\begin{definition}
Let $V = V_0 \oplus V_1$ be a supervector space. Its \emph{statistics-reversal} is the supervector space $\Pi V$ whose homogeneous subspaces are
\begin{align*}
    (\Pi V)_0 &\coloneqq V_1 ,
    \\
    (\Pi V)_1 &\coloneqq V_0 .
\end{align*}
Elements of $\Pi V$ are called ``ghosts'', so that elements of $(\Pi V)_0$ and $(\Pi V)_1$ are \emph{fermionic ghosts} and \emph{bosonic ghosts}, respectively.
\end{definition}

Let $\Pi\mathfrak{g}$ be the statistics-reversal of $\mathfrak{g}$. The decomposition \eqref{2 gl decomposition} of $\mathfrak{gl}(m|n)$ can be carried over to $\Pi\mathfrak{gl}(m|n)^\ast$ as
\begin{equation} \label{2 gl ghost decomposition}
    c =
    \begin{pmatrix}
        a & \gamma^- \\ \gamma^+ & b
    \end{pmatrix} ,
    \qquad\text{where}\quad
    \left\{
    \begin{aligned}
        c_0
            &\equiv a \oplus b
            &&\in \bigl( \Pi\mathfrak{gl}(m|n)^\ast \bigr){}_1 ,
        \\
        \gamma
            &\equiv \gamma^+ \oplus \gamma^-
            &&\in \bigl( \Pi \mathfrak{gl}(m|n)^\ast \bigr){}_0 .
    \end{aligned}
    \right.
\end{equation}
We will see $a$, $b$ and $\gamma^\pm$ as coordinate functions on $\mathfrak{g}$, as in $\gamma^\pm \colon X \longmapsto u^\pm$ etc.

Denote $\mathrm{Pol}^\bullet(\Pi\mathfrak{g})$ the \emph{algebra of ghost polynomials} of $\mathfrak{g}$.

Let $\{ \tau\indices{_B} \}_B$ be a homogeneous basis of $\mathfrak{g}$, with structure constants $f\indices*{^A_B_C}$, and $\{ c\indices{^A} \}_A \subset \Pi\mathfrak{g}^\ast$ be the \nth{1} order ghost monomials generating $\mathrm{Pol}^\bullet (\Pi\mathfrak{g})$ as a superalgebra, with $\tau\indices{_A}$ and $c\indices{^A}$ being related through statistics-reversal and duality --- i.e. $c\indices{^A} (\tau\indices{_B}) = \delta\indices{^A_B}$ and $\deg(c^A) = \deg(\tau_A) + 1$. We define the \emph{BRST operator} on $\mathrm{Pol}^\bullet(\Pi\mathfrak{g})$ as the odd nilpotent supervector field
\begin{equation*}
    Q \coloneqq
        \frac{1}{2} (-1)^{\bar{B}} \, f\indices*{^A_C_B} \, c\indices{^B} c\indices{^C} \frac{\partial}{\partial c\indices{^A}} ,
    \qquad\text{where}\ \bar{B} \coloneqq \deg(\tau\indices{_B}) .
\end{equation*}
The sign $(-1)^{\bar{B}}$ can be understood as a Koszul sign rule; it is there to make sense of a commutator of a boson and a fermion (e.g. $[x,u] = -[u,x]$) being flipped to a ghost monomial ($a \gamma = \gamma a$). For ungraded Lie algebras, $\bar{B} = 0$ and we recover the usual BRST operator.

It is known \cite{AKSZ_1995, Ferreira_2025} that $\bigl( \mathrm{CE}^\bullet(\mathfrak{g}) , d_{\text{CE}} \bigr) \cong \bigl( \mathrm{Pol}^\bullet(\Pi\mathfrak{g}) , Q \bigr)$ as differential graded superalgebras, hence their cohomologies are isomorphic --- but one generally refers to ``BRST cohomology of $\mathfrak{g}$'' when dealing with ghosts.

\bigskip

The Chevalley--Eilenberg formulation can be extended to compute cohomologies in any representation $\rho$ of a Lie (super)algebra $\mathfrak{g}$ on a $\mathfrak{g}$-(super)module $V$ by taking the space of cochains
\begin{equation*}
    \mathrm{CE}_\rho^\bullet (\mathfrak{g},V)
        \coloneqq \mathrm{Hom}\bigl( \Lambda^\bullet(\mathfrak{g}) , V \bigr)
\end{equation*}
equipped with the differential
\begin{alignat*}{3}
    d_\mathrm{CE}^\rho
        \colon \Lambda^k(\mathfrak{g}^\ast) \otimes V &\longrightarrow \Lambda^{k+1}(\mathfrak{g}^\ast) \otimes V
    \\
        \omega \otimes v &\longmapsto d_{\mathrm{CE}}(\omega) \otimes v + (-1)^k \, \omega \wedge \theta_v ,
\end{alignat*}
where $\theta_v \colon \mathfrak{g} \longrightarrow V$ is the 1-cochain defined by 
\begin{equation*}
    \theta_v(x) \coloneqq \rho(x) \cdot v ,
        \qquad\forall x\in\mathfrak{g} .
\end{equation*}

In BRST terms, there is an isomorphism $\mathrm{CE}_\rho^\bullet (\mathfrak{g}) \cong \mathrm{Pol}^\bullet(\Pi\mathfrak{g}) \otimes V$, where the latter space is equipped with the odd supervector field
\begin{equation} \label{2 BRST operator V}
    Q_\rho \coloneqq Q \otimes \mathrm{id} + \Theta ,
    \qquad\text{where}\
    \Theta \colon (\phi \otimes v) \longmapsto (-1)^{\operatorname{deg}(\phi)} \, \phi \wedge \theta_v ,
\end{equation}
and $\deg(\phi)$ is the statistics of $\phi$ in $\operatorname{Pol}^\bullet(\Pi\mathfrak{g})$.

\begin{proposition}
The BRST operator $Q_\rho$ is nilpotent.
\end{proposition}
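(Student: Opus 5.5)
The plan is to expand $Q_\rho^2$ as a sum of three pieces and show that each vanishes or that they cancel in pairs. Because $Q_\rho = Q\otimes\mathrm{id} + \Theta$ is odd, its square is half its graded self-commutator:
\begin{equation*}
    Q_\rho^2 = (Q\otimes\mathrm{id})^2 + \bigl( (Q\otimes\mathrm{id})\,\Theta + \Theta\,(Q\otimes\mathrm{id}) \bigr) + \Theta^2 .
\end{equation*}
The first term vanishes because $Q$ is nilpotent on $\mathrm{Pol}^\bullet(\Pi\mathfrak{g})$. I take this as known: by the quoted isomorphism $\bigl(\mathrm{CE}^\bullet(\mathfrak{g}),d_{\mathrm{CE}}\bigr)\cong\bigl(\mathrm{Pol}^\bullet(\Pi\mathfrak{g}),Q\bigr)$ it is equivalent to $d_{\mathrm{CE}}^2=0$, i.e.\ to the super-Jacobi identity \eqref{2 superbracket Jacobi}. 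So everything reduces to showing that the cross term cancels against $\Theta^2$.

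To make this concrete, I write $\Theta$ in the basis $\{\tau_A\}$ with ghosts $c^A$ as
\begin{equation*}
    \Theta(\phi\otimes v) = \pm\, \phi\, c^A \otimes \rho(\tau_A) v ,
\end{equation*}
so $\Theta = c^A\,\rho(\tau_A)$ up to Koszul signs. It is multiplication by the $\mathrm{End}(V)$-valued ghost $1$-form $\theta \coloneqq c^A\rho(\tau_A)$. The two remaining terms are then computed as follows.

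\textbf{The square $\Theta^2$.} This equals multiplication by $\theta\theta = c^A c^B\rho(\tau_A)\rho(\tau_B)$. The monomial $c^Ac^B$ is graded-symmetric in the ghost sense, so only the graded-antisymmetric part of $\rho(\tau_A)\rho(\tau_B)$ survives. That part is $\tfrac12\rho([\tau_A,\tau_B]) = \tfrac12 f\indices*{^C_A_B}\rho(\tau_C)$, because $\rho$ is a representation, i.e.\ it preserves the supercommutator \eqref{2 supercommutator}. Hence $\Theta^2 = \tfrac12 c^Ac^B f\indices*{^C_A_B}\rho(\tau_C)$, up to the sign conventions.

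\textbf{The cross term.} Since $Q$ is an odd derivation of $\mathrm{Pol}^\bullet(\Pi\mathfrak{g})$ acting trivially on $V$, the anticommutator $(Q\otimes\mathrm{id})\Theta+\Theta(Q\otimes\mathrm{id})$ is multiplication by $Q(c^C)\rho(\tau_C)$. By the definition of $Q$, $Q(c^C)=\tfrac12(-1)^{\bar B} f\indices*{^C_A_B}c^Bc^A$. So the cross term is minus the expression for $\Theta^2$ once the signs are matched. Conceptually, this is the Maurer--Cartan equation $Q\theta+\theta^2=0$ for the flat connection $\theta$, which is exactly the supercommutator-preservation property of $\rho$ written in ghost language.

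The main obstacle is the sign bookkeeping rather than anything structural. Three sources of signs have to agree:
\begin{itemize}
    \item the factor $(-1)^{\deg\phi}$ in \eqref{2 BRST operator V};
    \item the Koszul factor $(-1)^{\bar B}$ in $Q$;
    \item the super-commutation rules of ghosts, which are even or odd according to $\deg(\tau_A)+1$.
\end{itemize}
I would first check the identity on the generators $1\otimes v$ and $c^D\otimes v$ separately for $\tau$ even and odd, which fixes the conventions. I would then use that both sides act as $\mathrm{Pol}^\bullet(\Pi\mathfrak{g})$-linear operators to extend to all cochains. Equivalently, as a cross-check, one can transport the statement through the isomorphism $\mathrm{CE}_\rho^\bullet(\mathfrak{g})\cong\mathrm{Pol}^\bullet(\Pi\mathfrak{g})\otimes V$ and verify $(d^\rho_{\mathrm{CE}})^2=0$ directly on $\omega\otimes v$ with $\omega\in\mathfrak{g}^\ast$, evaluated on $X\wedge Y\wedge Z$.
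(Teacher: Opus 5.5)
Your proposal is correct and follows essentially the same route as the paper. Both expand $Q_\rho^2 = Q^2 + (Q\Theta + \Theta Q) + \Theta^2$, use the representation property to rewrite $\Theta^2$ as multiplication by $\tfrac12 c^A c^B \rho([\tau_A,\tau_B])$ (up to Koszul signs), identify the cross term with $\phi\wedge Q(\theta_v)$, and cancel the two via the structure constants appearing in $Q(c^C)$. Your Maurer--Cartan framing and the reduction to generators via $\mathrm{Pol}$-linearity only repackage the sign computation that the paper carries out directly for a general $\phi$.
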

\begin{proof}
$Q$ itself is nilpotent, so
\begin{equation*}
    Q_\rho Q_\rho
        = Q\Theta + \Theta Q + \Theta^2 .
\end{equation*}
In coordinates, $\theta_v$ can be written as the \nth{1} order ghost polynomial $c\indices{^A} \otimes \rho(\tau\indices{_A}) \cdot v$, so that
\begin{align*}
    \Theta^2 (\phi\otimes v)
        &= (-1)^{\deg(\phi)} \, \Theta\bigl( \phi \, c\indices{^A} \otimes \rho(\tau\indices{_A}) \cdot v \bigr)
        \\
        &= (-1)^{\deg(\phi)} (-1)^{\deg(\phi c^A)} \, \phi \, c\indices{^A} c\indices{^B} \otimes \rho(\tau\indices{_B}) \, \rho(\tau\indices{_A}) \cdot v . 
\end{align*}
Since $\deg(\phi \, c\indices{^A}) = \deg(\phi) + \bar{A} + 1$, and\footnote{Note that the Koszul sign $(-1)^{\bar{A}}$ is necessary to make this equality hold.}
\begin{equation*}
    (-1)^{\bar{A}} \, c\indices{^A} c\indices{^B} \otimes \rho(\tau\indices{_B}) \, \rho(\tau\indices{_A})
        = \frac{1}{2} (-1)^{\bar{A}} \, c\indices{^A} c\indices{^B} \otimes \rho\bigl([\tau\indices{_B},\tau\indices{_A}]\bigr) ,
\end{equation*}
it follows that
\begin{align*}
    \Theta^2 (\phi\otimes v)
        &= -\frac{1}{2} (-1)^{\bar{A}} \, \phi \, c\indices{^A} c\indices{^B} \otimes \rho\bigl([\tau\indices{_B},\tau\indices{_A}]\bigr) \cdot v . 
\end{align*}
In turn,
\begin{align*}
    \Theta Q(\phi\otimes v)
        &= (-1)^{\deg(Q\phi)} \, Q(\phi) \wedge \theta_v ,
    \\
    Q\Theta(\phi\otimes v)
        &= (-1)^{\deg(\phi)} \Bigl( Q(\phi) \wedge \theta_v + (-1)^{\deg(\phi)} \, \phi \wedge Q(\theta_v) \Bigr) ,
\end{align*}
hence $\bigl( \Theta Q + Q\Theta \bigr)(\phi\otimes v) = \phi \wedge Q(\theta_v)$. Back to coordinates,
\begin{align*}
    Q(\theta_v)
        &= Q\bigl( c\indices{^C} \otimes \rho(\tau\indices{_C}) \cdot v \bigr)
        \\
        &= \frac{1}{2} (-1)^{\bar{A}} f\indices*{^C_B_A} \, c\indices{^A} c\indices{^B} \otimes \rho(\tau\indices{_C}) \cdot v
        \\
        &= \frac{1}{2} (-1)^{\bar{A}} \, c\indices{^A} c\indices{^B} \otimes \rho\bigl( [\tau\indices{_B},\tau\indices{_A}] \bigr) \cdot v ,
\end{align*}
therefore $\phi \wedge Q(\theta_v) = -\Theta^2(\phi\otimes v)$, and $Q_\rho Q_\rho = 0$.
\end{proof}

\bigskip

It shall be useful later to have the expression of $\Theta$ acting on a general basis element of $\mathfrak{g}$. Let us set the following convention:
\begin{subequations} \label{2 index convention}
\begin{align}
    i,j,k,\ell
        &\colon \text{indices for $V_0 = \mathbb{K}^m$} ,
    \\
    I,J,K,L
        &\colon \text{indices for $V_1 = \mathbb{K}^n$} ,
\end{align}
so that
\begin{alignat}{3}
    a &= a\indices{^i_j} \otimes \tau\indices{_i^j} ,
    \qquad&\qquad
    \gamma^+ &= \gamma\indices{^I_j} \otimes \tau\indices{_I^j} ,
    \\
    b &= b\indices{^I_J} \otimes \tau\indices{_I^J} ,
    \qquad&\qquad
    \gamma^- &= \gamma\indices{^i_J} \otimes \tau\indices{_i^J} .
\end{alignat}
\end{subequations}
The chosen basis $\{ \tau\indices{_\bullet^\bullet} \}$ is the canonical one in each homogeneous subspace of $\mathfrak{g}$.

\paragraph{Trivial representation}

We know \cite{Ferreira_2025,Fuks_1986} that the BRST operator of $\mathfrak{gl}(m|n)$ acts on fermionic and bosonic ghost coordinates as
\begin{subequations} \label{2 BRST gl with indices}
\begin{alignat}{3}
    Q( a\indices{^i_j} ) 
        &= - a\indices{^i_\ell} \, a\indices{^\ell_j} 
            - \gamma\indices{^i_L} \, \gamma\indices{^L_j} ,
    &\quad
    Q( \gamma\indices{^i_J} )
        &= - a\indices{^i_\ell} \, \gamma\indices{^\ell_J}
            + \gamma\indices{^i_L} \, b\indices{^L_J} ,
    \\
    Q( b\indices{^I_J} )
        &= - b\indices{^I_L} \, b\indices{^L_J}
            - \gamma\indices{^I_\ell} \, \gamma\indices{^\ell_J} ,
    &\quad
    Q( \gamma\indices{^I_j} )
        &= - b\indices{^I_L} \, \gamma\indices{^L_j}
            + \gamma\indices{^I_\ell} \, a\indices{^\ell_j} ,
    \\
\intertext{%
or yet, in matrix form,%
}
    Q(a) &= - aa - \gamma^- \gamma^+ ,
    \qquad&\qquad
    Q(\gamma^-) &= -a\gamma^- + \gamma^- b ,
    \nonumber \\
    Q(b) &= - bb - \gamma^+ \gamma^- ,
    \qquad&\qquad
    Q(\gamma^+) &= -b\gamma^+ + \gamma^+ a .
    \nonumber 
\end{alignat}
\end{subequations}
We shall prefer the matrix form most of the time. The expressions above will be useful in \Cref{subsec: H gl trivial} when reviewing the cohomology of $\mathfrak{gl}(m|n)$.

As already stated, our main interest is in the subspace $\mathfrak{psl}(m|m)$ of $\mathfrak{gl}(n|n)$. Its BRST operator, just like its Lie superbracket, is nothing but the ``restriction'' of that of $\mathfrak{gl}(n|n)$.

Formally speaking, if $\pi \colon \mathfrak{gl}(n|n) \longtwoheadrightarrow \mathfrak{psl}(n|n)$ is the map that restricts to the kernel of the supertrace and removes the central ideal (effectively projecting out the subalgebra generated by $\mathbb{1}_{2n}$ and $\mathbb{J}$), then
\begin{equation} \label{2 Q psl vs Q gl}
    Q^{\mathfrak{psl}} \, \pi(C)
        = (\pi\wedge\pi) \, Q^\mathfrak{gl}(C) ,
        \qquad\forall C\in \operatorname{Pol}^1\bigl( \Pi\mathfrak{gl}(n|n) \bigr) ,
\end{equation}
extending it to all ghost polynomials by derivation.

Having \eqref{2 Q psl vs Q gl} in mind, we shall use ``$Q$'' for the BRST operator of $\mathfrak{gl}(n|n)$, introducing the necessary projections when needed, and ``$Q^\mathfrak{psl}$'' for the operator of $\mathfrak{psl}(n|n)$.

\paragraph{Adjoint representation}

When $\rho$ is the adjoint representation, we have
\begin{align*}
    Q_{\operatorname{ad}} \bigl( c\indices{^A} \otimes \tau\indices{_B} \bigr)
        = \frac{1}{2}(-1)^{\bar{C}} f\indices*{^A_D_C} \, c\indices{^C} c\indices{^D} \otimes \tau\indices{_B}
            - (-1)^{\bar{A}} f\indices*{_C_B^D} \, c\indices{^A} c\indices{^C} \otimes  \tau\indices{_D} .
\end{align*}
Writing a general $\mathfrak{sl}_m$-valued polynomial as $\phi = \phi\indices{^i_j} \otimes \tau\indices{_i^j}$, where $\phi\indices{^i_j}$ is a scalar-valued polynomial in ghost coordinates, we have
\begin{subequations} \label{2 Theta on coordinates}
\begin{align}
    \Theta\bigl( \phi\indices{^i_j} \otimes \tau\indices{_i^j} \bigr)
        &= \bigl( a\phi - (-1)^{\deg(\phi)} \, \phi a \bigr)
            + (-1)^{\deg(\phi)} \bigl( \gamma^+ \phi - \phi \gamma^- \bigr) ,
\\
\intertext{%
where, again, $\deg(\phi)$ is the degree of $\phi$ in $\operatorname{Pol}(\Pi\mathfrak{g})$, and $\phi\indices{^k_j}$ is understood as the entries of a matrix. Similarly,%
}
    \Theta\bigl( \phi\indices{^I_{\!J}} \otimes \tau\indices{_I^J} \bigr)
        &= \bigl( b \phi - (-1)^{\deg(\phi)} \, \phi b \bigr)
            + (-1)^{\deg(\phi)} \bigl( \gamma^- \phi - \phi\gamma^+ \bigr) ,
    \\
    \Theta\bigl( \phi\indices{^i_J} \otimes \tau\indices{_i^J} \bigr)
        &= \bigl( a\phi - (-1)^{\deg(\phi)} \, \phi b \bigr)
            + (-1)^{\deg(\phi)} \bigl( \gamma^+ \phi + \phi \gamma^+ \bigr) ,
    \\
    \Theta\bigl( \phi\indices{^I_j} \otimes \tau\indices{_I^j} \bigr)
        &= \bigl( b \phi - (-1)^{\deg(\phi)} \, \phi a \bigr)
            + (-1)^{\deg(\phi)} \bigl( \gamma^- \phi + \phi \gamma^- \bigr) .
\end{align}
\end{subequations}

\subsection{The Hochschild--Serre spectral sequence}
\label{subsec: spectralseq}

The main tool to compute cohomology in this paper is the Hochschild--Serre (HS) spectral sequence \cite{Hochschild-Serre_1953}. We start by introducing the basic notions regarding spectral sequences and we refer the reader to references on algebraic topology for more details, e.g. \cite{Weibel_1994, Hatcher_2004}.

\begin{definition}
A \emph{cohomological spectral sequence} $(E,d)$ consists of
\begin{enumerate}
    \item a family $\{ E_r^{p,q} \}$ of abelian groups, for $p$, $q$ and $r\geqslant0$ integers --- and the spaces $E_r^{\bullet,\bullet}$ are said to form the \emph{$r^{\text{th}}$ page} of $E$;
    
    \item homomorphisms $d_r^{p,q} \colon E_r^{p,q} \longrightarrow E_r^{p+r,q-r+1}$, called \emph{differentials}, such that the composition $d_r \circ d_r$ (for the correct upper indices) is zero;
    
    \item and isomorphisms
    \begin{equation} \label{2 def spectral sequence pages}
        E_{r+1}^{p,q} \cong \frac{\ker d_r^{p,q}}{\operatorname{im} d_r^{p-r,q+r-1}} .
    \end{equation}
\end{enumerate}
We say the spectral sequence \emph{abuts to} $E_\infty$ if, for every $p$ and $q$, there exists $r_0 = r_0(p,q)$ such that $E_r^{p,q} \cong E_{r_0}^{p,q}$ for every $r \geqslant r_0$; we then say that $E_\infty^{p,q} \coloneqq E_{r_0}^{p,q}$ is the \emph{limiting term} at $(p, q)$, and $d_{r_0}$ is the \emph{transgression} at $(p,q)$. 
\par\noindent
We say that $E$ \emph{degenerates at $r_0$} if, for all $r\geqslant r_0$, all differentials $d_r$ are trivial.
\end{definition}

Rather than having the entire spectral sequence, we can start at its $r^{\text{th}}$ page $E_r^{p,q}$ and construct the next pages by imposing \eqref{2 def spectral sequence pages} to be equalities. We are then interested at its behaviour as $r\to\infty$. For ungraded Lie algebras, the HS spectral sequence presented below has finitely many nontrivial spaces in its \nth{1} page, so it clearly degenerates at some point.

\begin{theorem}[Hochschild--Serre \cite{Hochschild-Serre_1953}] \label{thm: Hochschild Serre}
Let $\mathfrak{g}$ be a Lie (super)algebra, $\mathfrak{h}$ a Lie sub(super)-algebra of $\mathfrak{g}$, and $V$ a $\mathfrak{g}$-(super)module. There exists a cohomological spectral sequence $(E,d)$ whose \nth{1} page is
\begin{equation}
    E_1^{p,q} \cong H^q \Bigl( \mathfrak{g} ; \mathrm{Hom}\bigl( \Lambda^p\, \mathfrak{h} , V \bigr) \Bigr)
\end{equation}
converging to the cohomology of $\mathfrak{g}$ with coefficients in $V$, i.e.
\begin{equation*}
    H^k(\mathfrak{g};V)
        = \bigoplus_{\substack{p,q=0\\p+q=k}}^k E_\infty^{p,q} .
\end{equation*}
\end{theorem}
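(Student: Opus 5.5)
The plan is to construct the spectral sequence from a decreasing filtration of the cochain complex $\bigl(\operatorname{Pol}^\bullet(\Pi\mathfrak{g})\otimes V, Q_\rho\bigr)$ of \Cref{subsec: BRST}. Its first page will then be the cohomology of the associated graded complex. Afterwards, I will identify that cohomology with the displayed expression $H^q\bigl(\mathfrak{g};\mathrm{Hom}(\Lambda^p\mathfrak{h},V)\bigr)$.

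\textbf{Filtration and convergence.} Choose a homogeneous complement $\mathfrak{m}$ with $\mathfrak{g}=\mathfrak{h}\oplus\mathfrak{m}$ as supervector spaces. This splits the ghosts into $\Pi\mathfrak{h}^\ast$ and $\Pi\mathfrak{m}^\ast$, so that
\begin{equation*}
    \operatorname{Pol}^\bullet(\Pi\mathfrak{g})\otimes V \cong \operatorname{Pol}^\bullet(\Pi\mathfrak{m})\otimes\operatorname{Pol}^\bullet(\Pi\mathfrak{h})\otimes V .
\end{equation*}
Let $F^p$ be the span of the cochains of total degree $k$ that vanish whenever at least $k-p+1$ of their arguments lie in $\mathfrak{h}$. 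Equivalently, $F^p$ is spanned by monomials with at least $p$ ghosts from $\Pi\mathfrak{m}^\ast$. Since $[\mathfrak{h},\mathfrak{h}]\subset\mathfrak{h}$, the BRST operator $Q$ together with $\Theta$ from \eqref{2 BRST operator V} preserves $F^p$. The filtration is bounded in each total degree, because $p\leqslant k$. Hence the standard spectral sequence of a filtered complex exists and converges, giving
\begin{equation*}
    H^k(\mathfrak{g};V)\cong\bigoplus_{p+q=k}E_\infty^{p,q}
\end{equation*}
over a field of characteristic zero. This handles existence and convergence, including in the super case, since only degree bookkeeping changes (bosonic ghosts are allowed to have arbitrary powers).

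\textbf{The $E_0$ and $E_1$ pages.} On $E_0^{p,q}=F^p/F^{p+1}$, the surviving part of $Q_\rho$ is the one that keeps the number of $\mathfrak{m}$-ghosts fixed. This part is the Chevalley--Eilenberg differential of $\mathfrak{h}$ with coefficients in $\mathrm{Hom}\bigl(\Lambda^p(\mathfrak{g}/\mathfrak{h}),V\bigr)$, where $\mathfrak{h}$ acts through the induced action on $\mathfrak{g}/\mathfrak{h}$ and the given action on $V$. The main sign check is that the Koszul sign $(-1)^{\bar B}$ in $Q$ and the sign in $\Theta$ assemble exactly into the twisted differential. Therefore
\begin{equation*}
    E_1^{p,q}\cong H^q\bigl(\mathfrak{h};\mathrm{Hom}(\Lambda^p(\mathfrak{g}/\mathfrak{h}),V)\bigr).
\end{equation*}

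\textbf{Identification with the displayed form (main obstacle).} The remaining step is to match this with $H^q\bigl(\mathfrak{g};\mathrm{Hom}(\Lambda^p\mathfrak{h},V)\bigr)$ as written in \Cref{thm: Hochschild Serre}. The tool I would try first is Shapiro's lemma for Lie superalgebras: for any $\mathfrak{h}$-module $W$,
\begin{equation*}
    H^q(\mathfrak{h};W)\cong H^q\bigl(\mathfrak{g};\mathrm{Hom}_{U\mathfrak{h}}(U\mathfrak{g},W)\bigr).
\end{equation*}
I would apply it with $W=\mathrm{Hom}\bigl(\Lambda^p(\mathfrak{g}/\mathfrak{h}),V\bigr)$. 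The coinduced module $\mathrm{Hom}_{U\mathfrak{h}}(U\mathfrak{g},W)$ is what the paper's shorthand $\mathrm{Hom}(\Lambda^p\,\cdot\,,V)$ must be read as, and making this reading precise is where I expect the difficulty to lie.

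The hard part is precisely this identification. I expect it to hold only under that interpretation of the notation, and not for the literal $\mathfrak{g}$-module $\mathrm{Hom}(\Lambda^p\mathfrak{h},V)$: already for $\mathfrak{h}=0$ the literal reading gives $E_1^{0,q}=H^q(\mathfrak{g};V)$ with all other columns zero, which is consistent, but in general the dimensions do not match. So I would state the result with the $E_1$ page written as
\begin{equation*}
    E_1^{p,q}\cong H^q\bigl(\mathfrak{h};\mathrm{Hom}(\Lambda^p(\mathfrak{g}/\mathfrak{h}),V)\bigr),
\end{equation*}
explain that the displayed formula is the Shapiro-lemma rewriting of it, and use this $\mathfrak{h}$-cohomology form for the pair $\bigl(\mathfrak{psl}(n|n),\mathfrak{sl}_n\oplus\mathfrak{sl}_n\bigr)$ in the later sections.
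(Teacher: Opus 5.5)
Your first two steps are correct and are exactly the construction the paper relies on. The paper gives no proof of this theorem (it cites Hochschild--Serre), but it defines the spectral sequence by the same filtration~\eqref{2 filtration}, which counts ghosts on $\mathfrak{g}/\mathfrak{h}$. Your three checks are the standard argument:
\begin{itemize}
    \item $Q_\rho$ preserves $F^p$ because $[\mathfrak{h},\mathfrak{h}]\subseteq\mathfrak{h}$;
    \item $F^{k+1}\mathcal{C}^k=0$ gives convergence;
    \item $d_0$ is the Chevalley--Eilenberg differential of $\mathfrak{h}$ with coefficients in $\mathrm{Hom}(\Lambda^p(\mathfrak{g}/\mathfrak{h}),V)$.
\end{itemize}
A minor point: the splitting $H^k\cong\bigoplus_{p+q=k}E_\infty^{p,q}$ needs only a field, not characteristic zero, and it is not canonical.

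The final identification step is wrong, and you should drop it rather than repair it. Shapiro's lemma turns $H^q(\mathfrak{h};W)$ into $H^q\bigl(\mathfrak{g};\mathrm{Hom}_{\mathrm{U}(\mathfrak{h})}(\mathrm{U}(\mathfrak{g}),W)\bigr)$. With $W=\mathrm{Hom}(\Lambda^p(\mathfrak{g}/\mathfrak{h}),V)$, this coinduced module still involves $\Lambda^p(\mathfrak{g}/\mathfrak{h})$. No reading of the notation turns it into $\mathrm{Hom}(\Lambda^p\mathfrak{h},V)$, so the displayed formula is not a Shapiro rewriting of your $E_1$.

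The displayed formula is simply a misprint in the statement; it should read $H^q\bigl(\mathfrak{h};\mathrm{Hom}(\Lambda^p(\mathfrak{g}/\mathfrak{h}),V)\bigr)$. Two facts show this:
\begin{itemize}
    \item $\mathrm{Hom}(\Lambda^p\mathfrak{h},V)$ is not even naturally a $\mathfrak{g}$-module unless $\mathfrak{h}$ is an ideal. This objection is cleaner than your dimension heuristic, which by itself proves nothing.
    \item The paper immediately uses the theorem in your form, $E_1^{p,q}=H^q\bigl(\mathfrak{g}_0;\mathrm{S}^p(\mathfrak{g}_1{}^\ast)\bigr)$. This is $H^q\bigl(\mathfrak{h};\mathrm{Hom}(\Lambda^p(\mathfrak{g}/\mathfrak{h}),\mathbb{K})\bigr)$ for $\mathfrak{h}=\mathfrak{g}_0$, since super exterior powers of the purely odd space $\mathfrak{g}/\mathfrak{g}_0\cong\mathfrak{g}_1$ are symmetric powers. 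The later factorization $H^q(\mathfrak{g}_0)\otimes\operatorname{Inv}_{\mathfrak{g}_0}\mathrm{S}^p(\mathfrak{g}_1{}^\ast)$ likewise only makes sense for this $\mathfrak{h}$-cohomology form.
\end{itemize}
So state the theorem with $E_1^{p,q}\cong H^q\bigl(\mathfrak{h};\mathrm{Hom}(\Lambda^p(\mathfrak{g}/\mathfrak{h}),V)\bigr)$, and say explicitly that this corrects the printed formula. With that change your proof is complete.
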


For a Lie superalgebra $\mathfrak{g} = \mathfrak{g}_0 \oplus \mathfrak{g}_1$, there is a canonical choice of subsuperalgebra: its bosonic part $\mathfrak{g}_0$. The associated HS spectral sequence (i.e. that of \Cref{thm: Hochschild Serre}) with trivial coefficients has
\begin{equation}
    E_1^{p,q}
        = H^q \bigl( \mathfrak{g}_0 ; \mathrm{S}^p (\mathfrak{g}_1{}^\ast) \bigr) ,
\end{equation}
where $\mathrm{S}^\bullet (\mathfrak{g}_1{}^\ast) \equiv \mathbb{K}[\mathfrak{g}_1]$ is the ring of polynomials on the coordinates of $\mathfrak{g}_1$.

When $\mathfrak{g}_0$ is a semisimple Lie algebra, the \nth{1} page decomposes as
\begin{align}
    E_1^{p,q}
        &\cong H^q (\mathfrak{g}_0) \otimes \operatorname{Inv}_{\mathfrak{g}_0} \mathrm{S}^p ( \mathfrak{g}_1{}^\ast) ,
        \label{2 1st page decomposition}
\end{align}
i.e. as the tensor product of the cohomology of $\mathfrak{g}_0$ and the $\mathfrak{g}_0$-invariant polynomials on the coordinates of $\mathfrak{g}_1$ \cite{Fuks_1986, Cartier_1955}.

\bigskip

The HS spectral sequence is a particular case of a spectral sequence of a filtered complex. This notion will be useful when moving to the next pages of the HS spectral sequence.

\begin{definition}
A \emph{decreasing filtration} on a differential graded module $(\mathcal{C},Q)$ is a family of subspaces $\{ F^p \mathcal{C} \subseteq \mathcal{C} \}_p$ such that $F^{p+1} \mathcal{C} \subseteq F^p \mathcal{C}$ and $Q(F^p \mathcal{C}) \subseteq F^p \mathcal{C}$ for every $p\in\mathbb{Z}$ and $\mathcal{C} = \bigcup_p F^p \mathcal{C}$.
\end{definition}

A decreasing filtration induces short exact sequences
\begin{equation} \label{2 short exact sequence filtration}
  \begin{tikzcd}
    0 \arrow{r}
        & F^{p+1} \mathcal{C} \arrow[hook]{r}{i \vphantom{j}}
        & F^p \mathcal{C} \arrow[two heads]{r}{j}
        & E_0^{p,\bullet} 
        \arrow{r}
        & 0 ,
  \end{tikzcd}
\end{equation}
where
\begin{equation*}
    E_0^{p,q} \coloneqq \frac{F^p \mathcal{C}^{p+q}}{F^{p+1} \mathcal{C}^{p+q}} .
\end{equation*}
is equipped with a differential $d_0$: the unique map defined by $Q$ and $j$ \cite{Hatcher_2004}.

Higher pages of that spectral sequence are defined as the cohomology of the previous ones, and the differentials are schematically given by
\begin{equation} \label{2 dr schematic}
    d_r
        \simeq [\cdot] \circ \bigl( j^\ast \, \underbrace{(i^\ast)^{-1} \cdots (i^\ast)^{-1}}_{r-1\ \text{times}} \, \Delta \bigr) \circ [\cdot]^{-1} ,
\end{equation}
where $[\cdot]$ and $[\cdot]^{-1}$ represent taking the cohomology class (with respect to $d_{r-1}$) and one of its representatives, respectively; $i^\ast$ and $j^\ast$ are the induced morphisms in cohomology from \eqref{2 short exact sequence filtration}. The map $\Delta$ is the connecting homomorphism associated to the long exact sequence in cohomology for \eqref{2 short exact sequence filtration} and is schematically given by
\begin{equation} \label{2 delta schematic}
    \Delta
        \simeq [\cdot] \circ \bigl( i^{-1} \, Q \, j^{-1} \bigr) \circ [\cdot]^{-1} .
\end{equation}
Although schematic, \eqref{2 dr schematic} and \eqref{2 delta schematic} are well-defined maps \cite{Lima_2021}.

\bigskip

The HS spectral sequence of $(\mathfrak{g} , \mathfrak{h} ; V)$ is then the spectral sequence associated to the differential graded module $\mathcal{C}^\bullet \coloneqq \operatorname{Pol}^\bullet_\rho(\Pi\mathfrak{g}) \otimes V$ and the filtration
\begin{equation} \label{2 filtration}
    F^p \mathcal{C}^{p+q}
        \coloneqq \bigl\{ \phi \in \mathcal{C}^{p+q} \,\big|\, \phi(X_1,\ldots,X_{p+q}) = 0, \quad \forall X_1,\ldots,X_{q+1}\in\mathfrak{h} \bigr\} ,
\end{equation}
i.e. $F^p \mathcal{C}^{p+q}$ contains the ghost polynomials of order $p+q$ with \emph{at least} $p$ ghost coordinates on $\mathfrak{g}/\mathfrak{h}$. The expression \eqref{2 delta schematic} for the connecting homomorphism makes it clear how the BRST cohomology is encoded in the HS spectral sequence.

\subsection{The cohomology of \texorpdfstring{$\mathfrak{gl}(m|n)$}{gl(m|n)} with trivial coefficients} \label{subsec: H gl trivial}

In this Subsection, we want to delineate the main arguments in the computation of $H\bigl( \mathfrak{gl}(m|n) \bigr)$ following what was done originally by Fuks in \cite{Fuks_1986}. The statement is as follows.

\begin{theorem}[Fuks \cite{Fuks_1986}] \label{thm: Fuks}
The natural inclusion $\mathfrak{gl}(m|n)_0 \longhookrightarrow \mathfrak{gl}(m|n)$ induces the isomorphism in cohomology
\begin{equation} \label{2 cohomology gl mn}
    H^\bullet \bigl( \mathfrak{gl}(m|n) \bigr)
        \cong H^\bullet \bigl( \mathfrak{gl}_{\max\{m,n\}} \bigr) .
\end{equation}
\end{theorem}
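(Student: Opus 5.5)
We compute $H^\bullet\bigl(\mathfrak{gl}(m|n)\bigr)$ through the Hochschild--Serre spectral sequence of the pair $\bigl(\mathfrak{gl}(m|n),\mathfrak{gl}(m|n)_0\bigr)$ from \Cref{thm: Hochschild Serre}. Without loss of generality we take $m\geqslant n$.

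Since $\mathfrak{gl}_m\oplus\mathfrak{gl}_n$ is reductive and $\mathfrak{g}_1$ is a semisimple module on which the centre acts semisimply, the decomposition \eqref{2 1st page decomposition} still holds in the form
\begin{equation*}
    E_1^{p,q}\cong H^q(\mathfrak{gl}_m\oplus\mathfrak{gl}_n)\otimes\operatorname{Inv}_{\mathfrak{gl}_m\oplus\mathfrak{gl}_n}\mathrm{S}^p(U^{+\ast}\oplus U^{-\ast}).
\end{equation*}
We then apply the first fundamental theorem of invariant theory for $\mathrm{GL}_m\times\mathrm{GL}_n$ acting on pairs $(u^+,u^-)\in\mathrm{Hom}(V_0,V_1)\oplus\mathrm{Hom}(V_1,V_0)$. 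The invariant ring is generated by the traces $\eta_{2k}\coloneqq\operatorname{tr}\bigl((\gamma^+\gamma^-)^k\bigr)$. Since $\gamma^+\gamma^-$ is an $n\times n$ matrix, the second fundamental theorem (Cayley--Hamilton) shows that $\eta_2,\ldots,\eta_{2n}$ are algebraically independent and generate the ring. Hence
\begin{equation*}
    E_1^{\bullet,\bullet}\cong\Lambda[\alpha_1,\ldots,\alpha_{2m-1}]\otimes\Lambda[\beta_1,\ldots,\beta_{2n-1}]\otimes\mathbb{K}[\eta_2,\ldots,\eta_{2n}],
\end{equation*}
where $\alpha_{2k-1}=\operatorname{tr}(a^{2k-1})$ and $\beta_{2k-1}=\operatorname{tr}(b^{2k-1})$ are the primitive generators of the bosonic cohomology.

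Next we determine the differential. I expect the only nontrivial differentials to be transgressions $\alpha_{2k-1}\pm\beta_{2k-1}\mapsto c_k\,\eta_{2k}$ with $c_k\neq0$, for $k\leqslant n$. To see this, we use the BRST formulas \eqref{2 BRST gl with indices} to build an explicit cochain. In the spirit of Chern--Simons transgression forms, we look for a polynomial $T_{2k-1}$ in $a,b,\gamma$ whose lowest-order (purely bosonic) part is $\operatorname{tr}(a^{2k-1})-\operatorname{tr}(b^{2k-1})$ and which satisfies $Q(T_{2k-1})=c_k\,\eta_{2k}$ up to higher filtration. Concretely, we consider the supertrace of powers of the full ghost matrix, $\operatorname{str}(c^{2k-1})$, with $c$ as in \eqref{2 gl ghost decomposition}. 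These are $Q$-closed, since $Q(c)=-c^2$ in matrix form. We then compare them with the purely bosonic invariants: using $Q(a)=-a^2-\gamma^-\gamma^+$, the failure of $\operatorname{tr}(a^{2k-1})-\operatorname{tr}(b^{2k-1})$ to be closed is a multiple of $\eta_{2k}$ plus $Q$-exact corrections, which is exactly the connecting map \eqref{2 delta schematic}. Meanwhile, $\operatorname{str}(c^{2k-1})$ with leading term $\operatorname{tr}(a^{2k-1})+\operatorname{tr}(b^{2k-1})$ (up to sign conventions) survives.

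With these differentials in hand, the computation proceeds as for a Koszul complex. The page $E_1$ splits as
\begin{equation*}
    \Lambda[\alpha_{2k-1}^{\mathrm{surv}}]_{k\leqslant m}\otimes\bigl(\Lambda[\beta'_1,\ldots,\beta'_{2n-1}]\otimes\mathbb{K}[\eta_2,\ldots,\eta_{2n}],\ \beta'_{2k-1}\mapsto\eta_{2k}\bigr).
\end{equation*}
The second factor is the Koszul complex of a regular sequence, so it is acyclic except in degree $0$. Here we need a multiplicative Zeeman-type argument: the differentials are derivations, and we may order the transgressions by $k$ to show that each kills the pair $(\beta'_{2k-1},\eta_{2k})$. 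What remains is $\Lambda[\alpha_1,\ldots,\alpha_{2m-1}]\cong H^\bullet(\mathfrak{gl}_m)$. Its classes are represented by the closed cochains $\operatorname{str}(c^{2k-1})$ for $k\leqslant n$, and by the truly closed extensions for $n<k\leqslant m$. Since these restrict to the primitive generators on $\mathfrak{gl}(m|n)_0\to\mathfrak{gl}_m$, the isomorphism is induced by the inclusion. No extension problems arise, because the limit is a free graded-commutative algebra.

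The main obstacle is the middle step: proving that the transgression of the difference class is exactly a nonzero multiple of $\eta_{2k}$, and that no other differentials occur. To handle this, I would first verify the statement in degree $k=1$ by direct computation, where $Q(\operatorname{tr}a-\operatorname{tr}b)=-2\operatorname{tr}(\gamma^-\gamma^+)$. For general $k$, I would use the closed cochains $\operatorname{str}(c^{2k-1})$ together with a degree count to exclude all other possible differentials.
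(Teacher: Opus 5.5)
Your overall architecture matches the paper's: the HS spectral sequence of $(\mathfrak{gl}(m|n),\mathfrak{gl}(m|n)_0)$, $E_1$ from $\mathrm{GL}_m\times\mathrm{GL}_n$-invariants, and pairwise transgressions onto $\eta_{2k}$. However, you have the two combinations swapped, and the base-case check you propose would contradict your own claim. From the BRST formulas, $Q(\operatorname{tr}a)=-\operatorname{tr}(\gamma^-\gamma^+)$ and $Q(\operatorname{tr}b)=-\operatorname{tr}(\gamma^+\gamma^-)$; the terms $\operatorname{tr}(aa)$ and $\operatorname{tr}(bb)$ vanish. Since bosonic ghosts commute, both equal $-\eta_2$. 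Hence $Q(\operatorname{tr}a-\operatorname{tr}b)=0$: this is the supertrace $1$-cocycle, the generator of $H^1$. It is $\operatorname{tr}a+\operatorname{tr}b$ that transgresses, to $-2\eta_2$. In general, $\varphi'_{2k-1}$ and $\varphi''_{2k-1}$ each transgress to the same multiple of $\eta_{2k}$, and $\psi_{2k-1}=\varphi'_{2k-1}-\varphi''_{2k-1}$ survives, as in the paper. Relatedly, $Q(c)=-c^2$ fails for the naive matrix product: the $\gamma^-b$ and $\gamma^+a$ terms of $Q(\gamma^\pm)$ come out with the wrong sign. It holds for the Koszul-signed product of $c=c^A\otimes\tau_A$, and with that product the cocycle $\operatorname{str}(c^{2k-1})$ restricts to $\mathfrak{g}_0$ as $\psi_{2k-1}$, not as $\varphi'_{2k-1}+\varphi''_{2k-1}$. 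So your descriptions of $T_{2k-1}$ and of $\operatorname{str}(c^{2k-1})$ both have the combinations reversed.

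The more serious gap is the one you flag yourself: nothing in the proposal shows $c_k\neq0$ for $2\leqslant k\leqslant n$, and your suggested remedy cannot do it. A closed cochain such as $\operatorname{str}(c^{2k-1})$ only certifies that its leading term is a permanent cycle. It says nothing about whether the complementary generator transgresses. No degree count forces $c_k\neq0$ either: if $c_k=0$, then $\varphi'_{2k-1}$, $\eta_{2k}$ and their products would simply survive, giving a perfectly consistent $E_\infty$. The paper does not prove this step either; it cites Fuks' Weyl-algebra argument and offers only a heuristic.

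Your Chern--Simons idea does close the gap if you apply it to the $\mathfrak{g}_0$-component of $c$ rather than to $c$ itself. Set $\theta\coloneqq a\oplus b$ and $F\coloneqq Q\theta+\theta^2=-(\gamma^-\gamma^+\oplus\gamma^+\gamma^-)$. By $QM=[M,a]$ and $QN=[N,b]$, this satisfies $QF=F\theta-\theta F$. The standard homotopy formula then gives, on the $\mathfrak{gl}_m$ block,
\[
    CS_k \coloneqq k\int_0^1 \operatorname{tr}\Bigl( \theta \, \bigl( tF + (t^2-t)\,\theta^2 \bigr)^{k-1} \Bigr) \, dt ,
    \qquad
    Q(CS_k) = \operatorname{tr}(F^k) = (-1)^k \, \eta_{2k} .
\]
The purely-$a$ part of $CS_k$ is $k(-1)^{k-1}\frac{((k-1)!)^2}{(2k-1)!}\operatorname{tr}(a^{2k-1})\neq0$.

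Since $Q(CS_k)$ lies in $F^{2k}\mathcal{C}$, it follows that $d_r(\varphi'_{2k-1})=0$ for $r<2k$, and that $d_{2k}(\varphi'_{2k-1})$ is a nonzero multiple of $[\eta_{2k}]$. This class is nonzero in $E_{2k}^{2k,0}$, the degree-$2k$ part of $\mathbb{K}[\eta_2,\ldots,\eta_{2n}]/(\eta_2,\ldots,\eta_{2k-2})$, because $\eta_2,\ldots,\eta_{2n}$ are algebraically independent. That independence comes from $u^+u^-$ sweeping out all of $\mathfrak{gl}_n$ when $m\geqslant n$, not from Cayley--Hamilton, which only gives generation. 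The same argument works for $\varphi''_{2k-1}$ with the same constant. All differentials are then fixed by the Leibniz rule, and your Koszul bookkeeping goes through.
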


Firstly, recall that the cohomology ring of $\mathfrak{gl}_m$ is the exterior algebra in $m$ generators of degrees $1,3,5,\ldots,2m-1$. Such generators can be seen as the maps $x \longmapsto \operatorname{tr}(x^{2k+1})$ for $x\in\mathfrak{gl}_m$, which in BRST language \eqref{2 gl ghost decomposition} read as
\begin{equation*}
    \varphi_{2k+1} = \operatorname{tr}(a^{2k+1}) .
\end{equation*}

Although $\mathfrak{gl}(m|n)_0$ is not semisimple, the decomposition \eqref{2 1st page decomposition} still holds for $\mathfrak{gl}_m \oplus \mathfrak{gl}_n$, and
\begin{equation} \label{2 ring invariants gl}
    \operatorname{Inv}_{\mathfrak{gl}(m|n)_0} \mathrm{S}^\bullet \bigl( \mathfrak{gl}(m|n)_1{}^\ast \bigr)
        = \mathbb{K}\bigl[ \eta_{2k} \,\big|\, 1\leqslant k \leqslant \min\{m,n\}  \bigr] ,
\end{equation}
where
\begin{equation*}
    \eta_{2k} \coloneqq \operatorname{tr}(\gamma^+\gamma^-)^k
        \colon u \longmapsto \operatorname{tr}\bigl( (u^+ u^-)^k \bigr) .
\end{equation*}
In words: the $\mathfrak{gl}(m|n)_0$-invariant polynomials in the coordinates of $\mathfrak{gl}(m|n)_1$ are algebraic combinations of $\eta_{2k}$'s.

It is worth noting that we only have $\min\{m,n\}$ algebraically independent such maps, because every $\operatorname{tr}(\gamma^+ \gamma^-)^{k}$ for $k > \min\{m,n\}$ can be written in terms of the ones in \eqref{2 ring invariants gl} due to the Cayley--Hamilton theorem.

\bigskip

Thus the \nth{1} page of the HS spectral sequence associated with the pair $\bigl( \mathfrak{gl}(m|n) , \mathfrak{gl}(m|n)_0 \bigr)$ in trivial coefficients is the tensor algebra generated by the fermionic ghost polynomials\footnote{The lower index in each of the generators represents its order as a polynomial.}
\begin{subequations} \label{2 fermionic generators}
\begin{alignat}{5}
    \varphi'_1 &\in E_1^{0,1} ,
    \quad &
    \varphi'_{3} &\in E_1^{0,3} ,
    \quad &
    &\ldots ,
    \quad &
    \varphi'_{2m-1} &\in E_1^{0,2m-1}
    \qquad &&\mathbin{\reflectbox{$\leadsto$}}
    H^\bullet(\mathfrak{gl}_m) ,
    \\
    \varphi''_1 &\in E_1^{0,1} ,
    \quad &
    \varphi''_{3} &\in E_1^{0,3} ,
    \quad &
    &\ldots ,
    \quad &
    \varphi''_{2n-1} &\in E_1^{0,2n-1}
    \qquad &&\mathbin{\reflectbox{$\leadsto$}}
    H^\bullet(\mathfrak{gl}_n) ,
\end{alignat}
\end{subequations} 
and the bosonic ghost polynomials
\begin{align*}
    \eta_{2} \in E_1^{2,0} ,
    \quad
    \eta_{4} \in E_1^{4,0} ,
    \quad
    \ldots,
    \quad
    \eta_{2\min\{m,n\}} \in E_1^{2\min\{m,n\},0} .
\end{align*}

\subsubsection{Back to the \texorpdfstring{\nth{0}}{0th} page}

Although we already know what the \nth{1} page of the HS spectral sequence of $\bigl( \mathfrak{gl}(m|n) , \mathfrak{gl}(m|n)_0 ; \mathbb{K} \bigr)$ looks like, it is a good exercise to understand intuitively what happened in $E_0$ to reach $E_1$

Recall that the differential graded module used to build the HS spectral sequence is the Chevalley--Eilenberg complex itself, viz. $\mathcal{C}^\bullet \cong \operatorname{Pol}^\bullet(\Pi\mathfrak{g})$, with the filtration \eqref{2 filtration}. The space $E_0^{p,q}$ on the \nth{0} page contains the $(p+q)^{\text{th}}$ order polynomials which have exactly $p$ coordinates on $U \equiv \mathfrak{g}_1$.

The differential $d_0$ on the \nth{0} page is the unique map such that $d_0 j = j Q$ for the projection map $j$ in \eqref{2 short exact sequence filtration} and the BRST operator $Q$. Due to the projections, $d_0$ acts on fermionic ghost monomials in the same way as $Q$, but ``forgets'' about the $\gamma\gamma$-term of $Qa$ and $Qb$ in \eqref{2 BRST gl with indices}:
\begin{align*}
    d_0 (a) &= -aa ,
    \\
    d_0 (b) &= -bb .
\end{align*}
Thus the restriction of $d_0$ to $\mathfrak{g}_0$ is the BRST operator of the Lie algebra $\mathfrak{g}_0$, taking its cohomology $H^\bullet(\mathfrak{g}_0)$ to the \nth{1} page of the HS spectral sequence.

On the other hand, the action of $d_0$ on bosonic ghost monomials is exactly that of $Q$, which is the Lie algebra action of $\mathfrak{g}_0$ on $U \equiv \mathfrak{g}_1$. Thus, being ``closed'' is the same as being an invariant polynomial --- yielding the ring \eqref{2 ring invariants gl}.

\subsubsection{Transgressions}

Finally, it can be shown that
\begin{equation*}
    d_{2k}^{} ( \varphi'_{2k-1} )
        = d_{2k}^{} (\varphi''_{2k-1})
        = \eta_{2k}^{}
\end{equation*}
is a transgression map at the spot $(0,2k-1)$ of the HS spectral sequence. \textcite{Fuks_1986} shows that through Weyl algebras.

An intuitive approach is the following. The connecting homomorphism $\Delta$ from the short exact sequence \eqref{2 short exact sequence filtration} acts on the generator $\varphi'_{2k-1} = \operatorname{tr}(a^{2k-1})$ as
\begin{align*}
    \delta\bigl[\operatorname{tr}(a^{2k-1})\bigr]
        &= \bigl[ \bigl( i^{-1} Q \, j^{-1} \bigr) \operatorname{tr}(a^{2k-1}) \bigr]
        \\
        &= \bigl[ \operatorname{tr}(\gamma^+ a^{2k} \gamma^-) \bigr]
        & \in H^{2k}( F^1 \mathcal{C} ) .
\end{align*}
By applying $2k-1$ times the ``map'' $(i^\ast)^{-1}$, the above object must land in $H^{2k}(F^{2k} \mathcal{C})$, which will then by mapped by $j^\ast$ to $E_{2k}^{2k,0}$, finishing the transgression. The only valid choice in $H^{2k}(F^{2k} \mathcal{C})$ is the class corresponding to $\eta_{2k} \equiv \operatorname{tr}(\gamma^+\gamma^-)^k$ --- because $\eta_{2k}$ and $\operatorname{tr}(\gamma^+ a^{2k} \gamma^-)$ were cohomologous before the filtration, and $E_{2k}^{2k,0}$ is 1-dimensional.

Consequently, the fermionic ghost polynomial
\begin{equation*}
    \psi_{2k-1}^{}
        \coloneqq \varphi'_{2k-1} - \varphi''_{2k-1}
        \quad \in \ker d_{2k}^{}
\end{equation*}
survives transgression and goes to $E^{0,2k-1}_\infty$, generating cohomology for $\mathfrak{gl}(m|n)$. That ghost polynomial can be seen as the map $X_0 \longmapsto \operatorname{str}(X_0^{2k-1})$, where ${ X_0 \in \mathfrak{gl}(m|n)_0 }$ is in the standard decomposition \eqref{2 gl decomposition}, i.e. $\psi_{2k-1}$ is the ghost polynomial
\begin{equation*}
    \psi_{2k-1}
        \equiv \operatorname{str}\bigl(c_0^{2k-1}\bigr)
        = \operatorname{tr}(a^{2k-1}) - \operatorname{tr}(b^{2k-1}) .
\end{equation*}

While $k \leqslant \min\{m,n\}$, each pair $\varphi'_{2k-1},\varphi''_{2k-1}$ of fermionic ghost polynomials transgresses to the bosonic ghost polynomials $\eta_{2k}^{}$, and their difference generates coholomogy in the $\infty^{\text{th}}$ page. When $k > \min\{m,n\}$, there are no bosonic ghost generators left, and the remaining $|m-n|$ fermionic ghost generators transgress to zero --- also surviving until $E_\infty$. \Cref{fig: HS gl mn} shows schematically what happens to the generators of $E_1$.

Therefore, there are exactly $\max\{m,n\}$ fermionic ghost generators for the cohomology of $\mathfrak{gl}(m|n)$, and they are in degrees $1,3,\ldots,2\max\{m,n\}-1$, proving \Cref{thm: Fuks}.

\bigskip

\begin{figure}[H]
\centering
\begin{tikzpicture}
\begin{axis}[
    axis lines = middle,
    axis line style = {->, >=Stealth},
    xlabel = {$p$},
    ylabel = {$q$},
    x label style = {anchor = west},
    y label style = {anchor = south},
    xmin = -0.5,    xmax = 12,
    ymin = -0.5,    ymax = 11,
    xtick = {2, 4, 8, 10},
    xticklabels = {$2$, $4$, $2\tilde{n}$, $2\tilde{n}+2$},
    ytick = {1, 3, 7, 9},
    yticklabels = {$1$, $3$, $2\tilde{n}-1$, $2\tilde{n}+1$},
    grid = none,
    axis x line = bottom,
    axis y line = left,
    x axis line style = {draw=none},
    y axis line style = {draw=none},
    label style = {font=\small},
]


\draw (axis cs:-1,0) -- (axis cs:5.9,0);
\draw[->, >=Stealth] (axis cs:6.1,0) -- (axis cs:12,0);
\draw[] (axis cs:5.9,-0.2) -- (axis cs:5.9,+0.2);
\draw[] (axis cs:6.1,-0.2) -- (axis cs:6.1,+0.2);

\draw (axis cs:0,-1) -- (axis cs:0,4.9);
\draw[->, >=Stealth] (axis cs:0,5.1) -- (axis cs:0,11);
\draw[] (axis cs:-0.2,4.9) -- (axis cs:+0.2,4.9);
\draw[] (axis cs:+0.2,5.1) -- (axis cs:-0.2,5.1);

\filldraw[black] (0,1.1) circle (1.5pt);
    \draw (0,1.4) node[anchor=west] {\scriptsize$\varphi'_1$};
\filldraw[black] (0,0.9) circle (1.5pt);
    \draw (0,0.6) node[anchor=west] {\scriptsize$\varphi''_1$};

\filldraw[black] (0,3.1) circle (1.5pt);
    \draw (0,3.4) node[anchor=west] {\scriptsize$\varphi'_3$};
\filldraw[black] (0,2.9) circle (1.5pt);
    \draw (0,2.6) node[anchor=west] {\scriptsize$\varphi''_3$};

\filldraw[black] (0,7.1) circle (1.5pt);
    \draw (0,7.4) node[anchor=west] {\scriptsize$\varphi'_{2\tilde{n}-1}$};
\filldraw[black] (0,6.9) circle (1.5pt);
    \draw (0,6.6) node[anchor=west] {\scriptsize$\varphi''_{2\tilde{n}-1}$};

\filldraw[black] (0,9) circle (1.5pt);
    \draw (0,9.4) node[anchor=west] {\scriptsize$\varphi'_{2\tilde{n}+1}$};

\filldraw[black] (2,0) circle (1.5pt);
    \draw (2.25,0.4) node {\footnotesize$\eta_2$};

\filldraw[black] (4,0) circle (1.5pt);
    \draw (4.25,0.4) node {\footnotesize$\eta_4$};

\filldraw[black] (8,0) circle (1.5pt);
    \draw (8.25,0.4) node {\footnotesize$\eta_{2\tilde{n}}$};

\draw[black, fill=white] (10,0) circle (1.5pt);

\draw[|->, red, shorten <=3pt, shorten >=2pt] 
    (axis cs:0,1) -- (axis cs:2,0);
\draw[|->, red, shorten <=3pt, shorten >=2pt]
    (axis cs:0,3) -- (axis cs:4,0);
\draw[|->, red, shorten <=3pt, shorten >=2pt]
    (axis cs:0,7) -- (axis cs:8,0);
\draw[|->, red, shorten <=3pt, shorten >=2pt]
    (axis cs:0,9) -- (axis cs:10,0);
\draw[white] (axis cs:0,11) --node[midway, red]{$\sdots$} (axis cs:12,0);

\end{axis}
\end{tikzpicture}
\caption{\label{fig: HS gl mn}
    Schematization of \nth{1} page and higher differentials of the HS spectral sequence for $\mathfrak{gl}(m|n)$, with $\tilde{n} \coloneqq \min\{m,n\}$. Fully black dots correspond to the ghost generators of $E_1^{\bullet,\bullet}$ as a tensor algebra. While $k \leqslant \tilde{n}$, the fermionic ghost generators $\varphi'_{2k-1},\varphi''_{2k-1}$ transgress to $\eta_{2k}$ in pairs, thus $\psi_{2k-1}$ transgresses to zero. When $k > \tilde{n}$, there is only one fermionic ghost generator $\varphi'_{2k-1}$ and no bosonic ghost generators left for it to transgress to.}
\end{figure}

\subsubsection{Invariants and covariants} \label{sec: invariant theory}

Already hinted in \eqref{2 ring invariants gl}, the computation of Lie superalgebra cohomology through its HS spectral sequence is deeply related to invariant theory. For that reason, we conclude this Section of preliminaries by defining the jargon of invariants and covariants as taken from \cite{Weyl_1966, Kraft_1996, Procesi_2007}.

\begin{definition}
Let $V$ and $W$ be $G$-modules. The submodule of \emph{$G$-invariants of $V$} is
\begin{equation} \label{def Lie group invariants}
    V^G \equiv \operatorname{Inv}_G (V)
        \coloneqq \bigl\{ v\in V \,\big|\, g \vartriangleright v = v ,\ \forall g\in G \bigr\} .
\end{equation}
A \emph{covariant\footnote{Also called \emph{intertwiner}.} of $V$ of type $W$} is a polynomial map $\varphi \colon V \longrightarrow W$ that preserves the $G$-actions, viz.
\begin{equation} \label{def equivariant morphism}
    \varphi\bigl( g \vartriangleright v \bigr)
        = g \vartriangleright \varphi(v) , 
    \qquad \forall g\in G, \ \forall v\in V .
\end{equation}
\end{definition}

In the space of $G$-module homomorphisms $\mathrm{Hom}(V,W) \cong V^\ast \otimes W$, there is an induced $G$-action on the tensor product. Firstly,
\begin{align}
    \bigl( g \vartriangleright \phi \bigr) \colon
        V &\longrightarrow \mathbb{K}
        \nonumber \\
        v &\longmapsto \phi\bigl( g^{-1} \vartriangleright v \bigr)
        \label{2 def contragradient action}
 \end{align}
defines the \emph{contragradient action} of $g \in G$ on $\phi \in V^\ast$, so the action of $g\in G$ on $\phi\otimes w \in \operatorname{Hom}(V,W)$ is
\begin{equation*}
    \bigl( g \vartriangleright (\phi \otimes w) \bigr) (v)
        \coloneqq \phi\bigl( g^{-1} \vartriangleright v \bigr) \, \bigl( g \vartriangleright w \bigr) ,
        \qquad \forall v\in V .
\end{equation*}
It is easy to see that $\phi \otimes w$ is a covariant map if and only if it is a $G$-invariant of $\operatorname{Hom}(V,W)$ under the action above.

\bigskip

The HS spectral sequence will require knowledge about Lie algebra invariants and covariants, which are defined analogously: if $\mathfrak{g}$ is a Lie algebra, and $V$ is a $\mathfrak{g}$-module, then the submodule of $\mathfrak{g}$-invariants is
\begin{equation} \label{def Lie algebra invariants}
    V^\mathfrak{g} \equiv \operatorname{Inv}_\mathfrak{g} (V)
        \coloneqq \bigl\{ v\in V \,\big|\, x \vartriangleright v = 0, \ \forall x\in\mathfrak{g} \bigr\} .
\end{equation}

The connection between $G$-invariants and $\mathfrak{g}$-invariants is immediate:

\begin{lemma} \label{lemma: connected group}
Let $G$ be a Lie group, with $\mathfrak{g}$ its Lie algebra, and $V$ a $G$-module. If $G$ is connected, the $G$-invariants of $V$ are in 1-to-1 with the $\mathfrak{g}$-invariants of $V$ (with respect to the infinitesimal action of $G$ on $V$).
\end{lemma}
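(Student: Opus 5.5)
The plan is to prove the two inclusions $V^G \subseteq V^{\mathfrak{g}}$ and $V^{\mathfrak{g}} \subseteq V^G$ separately. The correspondence is then simply the identity on the common subspace of $V$. Throughout, the infinitesimal action is the differential of the representation $\rho \colon G \longrightarrow \mathrm{GL}(V)$ at the identity, so that $x \vartriangleright v = \frac{d}{dt}\big|_{t=0} \rho(\exp(tx))\, v$ for $x \in \mathfrak{g}$.

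\emph{First inclusion.} Take $v \in V^G$ and $x \in \mathfrak{g}$. Then $\rho(\exp(tx))\,v = v$ for all $t$. Differentiating at $t=0$ gives $x \vartriangleright v = 0$. Connectedness plays no role here.

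\emph{Second inclusion.} Take $v \in V^{\mathfrak{g}}$ and $x\in\mathfrak{g}$. Consider the curve $f(t) \coloneqq \rho(\exp(tx))\,v$. Since $\rho$ is a homomorphism, $\rho(\exp(tx)) = \exp\bigl(t\, d\rho(x)\bigr)$, and therefore
\begin{equation*}
    f'(t) = \rho(\exp(tx))\, \bigl(x \vartriangleright v\bigr) = 0 .
\end{equation*}
Hence $f$ is constant, and $\rho(\exp(tx))\,v = v$. Equivalently, one can expand the exponential series term by term, since $d\rho(x)^k v = 0$ for all $k \geqslant 1$. It follows that $v$ is fixed by every element of $\exp(\mathfrak{g})$, and therefore by the subgroup these elements generate.

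The main obstacle is to show that this generated subgroup is all of $G$, and this is exactly where connectedness enters. The exponential map is a local diffeomorphism near $0$, so $\exp(\mathfrak{g})$ contains an open neighbourhood $W$ of the identity. The subgroup $H$ generated by $W$ is open, because $H = \bigcup_{h\in H} hW$. Its complement is a union of cosets of $H$, each of which is open, so $H$ is also closed. Since $G$ is connected, $H = G$. Thus $v$ is $G$-invariant, which gives $V^{\mathfrak{g}} \subseteq V^G$.

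Over a general field $\mathbb{K}$, as in this paper's algebraic setting with $\mathrm{SL}_n$ acting on polynomial spaces, "connected" is read in the Zariski sense, and the analogue of the final step is the standard fact that a connected algebraic group is generated by its one-parameter unipotent and toral subgroups. I would simply cite \cite{Kraft_1996} for that version rather than reprove it.
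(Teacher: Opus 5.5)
The paper gives no proof of this lemma. It calls the statement ``immediate'' and moves on, so your proposal supplies an argument the paper omits rather than competing with one.

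Your argument is the standard one and it is correct:
\begin{itemize}
\item differentiating $t\mapsto\rho(\exp(tx))\,v$ gives $V^G\subseteq V^{\mathfrak{g}}$;
\item the identity $\rho(\exp(tx))=\exp\bigl(t\,d\rho(x)\bigr)$ shows that $\mathfrak{g}$-invariant vectors are fixed by $\exp(\mathfrak{g})$;
\item the open-and-closed subgroup argument shows that a neighbourhood of the identity generates a connected Lie group.
\end{itemize}

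Two remarks relate it to how the paper actually uses the lemma.

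First, the module in question, $\mathbb{K}[U]$, is infinite dimensional. You should say that you run the argument on each homogeneous component, which is finite dimensional and stable under the group.

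Second, your final paragraph is not a side issue. The paper allows any complete field of characteristic zero, for instance $\mathbb{Q}_p$. Over such a field $\mathrm{SL}_n(\mathbb{K})$ is a totally disconnected Lie group, so the topological statement does not apply. The Zariski reading is the one actually needed.

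For $\mathrm{SL}_n$, the only group the paper applies this to, you can make that version self-contained instead of citing it:
\begin{itemize}
\item $\mathrm{SL}_n(\mathbb{K})$ is generated by the elementary matrices $\mathbb{1}_n+sE_{ij}=\exp(sE_{ij})$ with $i\neq j$, where $E_{ij}$ are the matrix units.
\item On each graded piece these act by the finite sum $\exp\bigl(s\,d\rho(E_{ij})\bigr)$.
\item $\mathfrak{sl}_n$ is generated as a Lie algebra by the same $E_{ij}$ with $i\neq j$.
\item Since $\mathbb{K}$ is infinite, a vector is fixed by every $\exp(sE_{ij})$ exactly when it is killed by every $d\rho(E_{ij})$.
\end{itemize}
This yields $V^{\mathrm{SL}_n(\mathbb{K})}=V^{\mathfrak{sl}_n}$ over any field of characteristic zero. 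It needs no appeal to connectedness and no appeal to the structure theorem you cite.
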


\section{Cohomology with trivial coefficients} \label{sec: trivial coefficients}

In this Section, we explicitly compute $H\bigl( \mathfrak{psl}(n|n) ; \mathbb{K} \bigr)$ and $H\bigl( \mathfrak{sl}(m|n) ; \mathbb{K} \bigr)$ via their HS spectral sequences.
\begin{itemize}
    \item \Cref{subsec: invariant polynomials} is a review of the Fundamental Theorems of representation theory for $\mathrm{SL}_n$ and how they apply to our problem;

    \item In \Cref{subsec: HS trivial coefficients}, we give the closed expression for $H\bigl( \mathfrak{psl}(n|n) \bigr)$;

    \item \Cref{subsec: examples trivial} shows explicit bases of $H^k \bigl( \mathfrak{psl}(n|n) \bigr)$ for small $n$ and $k$; and
    
    \item Finally, in \Cref{subsec: sl trivial} we give the expression for $H\bigl( \mathfrak{sl}(m|n) \bigr)$ for arbitrary $m,n \in \mathbb{N}$.
\end{itemize}

\bigskip

The bosonic subspaces of $\mathfrak{sl}(m|n)$ and $\mathfrak{psl}(n|n)$ were presented in \eqref{2 sl m n 0} and \eqref{2 psl m n 0}, respectively. They are semisimple, so we can use the decomposition \eqref{2 1st page decomposition}.

The cohomology of $\mathfrak{sl}_n$ is well-known: it is the exterior algebra in $n-1$ generators of degrees $3,5,\ldots,2n-1$. It remains to understand the ring of invariant polynomials
\begin{align} \label{3 Inv complete}
    \operatorname{Inv}_{\mathfrak{sl}(m|n)_0} \mathrm{S}^\bullet ( U^\ast )
        \cong \operatorname{Inv}_{\mathfrak{u}_1} \operatorname{Inv}_{\mathfrak{sl}_m} \operatorname{Inv}_{\mathfrak{sl}_n} \mathrm{S}^\bullet (U^\ast) .
\end{align}
We will, for now, consider only the right-most piece of \eqref{3 Inv complete}, namely the space 
\begin{equation*}
    \operatorname{Inv}_{\mathfrak{sl}_n} \mathrm{S}^\bullet (U^\ast) .
\end{equation*}

\subsection{Invariant polynomials} \label{subsec: invariant polynomials}

The fermionic subspace of $\mathfrak{sl}(m|n)$ is the same as that of $\mathfrak{gl}(m|n)$, presented in \eqref{2 gl m n 1} and \eqref{2 index structure of U}. Seeing $U^+$ and $U^-$ as the spaces of $n\times m$ and $m\times n$ matrices over $\mathbb{K}$, we can regard them as $\mathfrak{sl}_n$-modules by taking
\begin{align*}
    U^+
        &\cong (V_1)^m ,
    \\
    U^-
        &\cong (V_1{}^\ast)^m ,
\end{align*}
where the $\mathfrak{sl}_n$-action on $V_1$ is by left-multiplication (the natural action on $n$-dimensional column vectors), and the $\mathfrak{sl}_n$-action on $V_1{}^\ast$ is by right-multiplication (the contragradient action on $n$-dimensional row vectors, see \eqref{2 def contragradient action}). Hence, we wish to characterize the ring
\begin{equation} \label{3 invariants sl n}
    \operatorname{Inv}_{\mathfrak{sl}_n} \mathrm{S}^\bullet (U^\ast)
        \cong \mathbb{K}\bigl[ (V_1)^m \oplus (V_1{}^\ast)^m \bigr]^{\mathrm{SL}_n} .
\end{equation}

The special linear Lie group $\mathrm{SL}_n$ is connected, thus we can use \Cref{lemma: connected group} and the First Fundamental Theorem (FFT) for $\mathrm{SL}_n$.

\begin{theorem}[FFT for $\mathrm{SL}_n$ \cite{Kraft_1996, Weyl_1966}] \label{thm: fft}
Let $V = \mathbb{K}^n$. The ring $\mathbb{K}[ V^p \oplus (V^\ast)^q ]^{\mathrm{SL}_n}$ of invariant polynomials of $V$ is generated by the contractions 
\begin{equation} \label{3 FFT contractions}
    \langle i \,|\, j \rangle
        \colon (v_1,\ldots,v_p;\xi_1,\ldots,\xi_q)
        \longmapsto \langle v_i | \xi_j \rangle \coloneqq \xi_j (v_i)
\end{equation}
and the minor $n\times n$ determinants 
\begin{subequations} \label{3 FFT minors}
\begin{alignat}{3}
    \label{3 FFT det}
    [i_1 , \ldots, i_n]^\uparrow
        &\colon (v_1,\ldots,v_p;\xi)
        &\longmapsto \det(v_{i_1},\ldots,v_{i_n})
        \qquad &(\text{if}\ p \geqslant n) ,
    \\
    \label{3 FFT det ast}
    [j_1 , \ldots , j_n]^\downarrow
        &\colon (v;\xi_1,\ldots,\xi_q)
        &\longmapsto \det(\xi_{j_1},\ldots,\xi_{j_n})
        \qquad &(\text{if}\ q \geqslant n) ,
\end{alignat}
\end{subequations}
for $i \in \{1,\ldots,p\}$ and $j \in \{ 1,\ldots,q \}$, with $v \in V$ vectors, and $\xi \in V^\ast$ covectors.
\end{theorem}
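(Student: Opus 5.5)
The plan is the classical Weyl argument: reduce to multilinear invariants, use the First Fundamental Theorem for $\mathrm{GL}_n$, and descend from $\mathrm{GL}_n$ to $\mathrm{SL}_n$ by weight.

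\textbf{Step 1 (reduction to multilinear invariants).} Since $\mathrm{SL}_n$ acts linearly, it preserves the multigrading of $\mathbb{K}[V^p \oplus (V^\ast)^q]$ by degree in each vector $v_i$ and each covector $\xi_j$. So it suffices to describe invariants that are homogeneous of multidegree $(d_1,\ldots,d_p;e_1,\ldots,e_q)$. By full polarization, available because $\operatorname{char}\mathbb{K}=0$, each such invariant is the restitution of a multilinear invariant on $V^{d}\oplus (V^\ast)^{e}$, where $d=\sum d_i$ and $e=\sum e_j$. Restitution means identifying variables, and it sends contractions and determinants to contractions and determinants. We may therefore assume the invariant $f$ is multilinear in $v_1,\ldots,v_d$ and $\xi_1,\ldots,\xi_e$. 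Over an infinite field, invariance under the group is equivalent to invariance under $\mathfrak{sl}_n$ (\Cref{lemma: connected group}), so we may argue group-theoretically.

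\textbf{Step 2 (weight analysis).} The scalar matrices $t\mathbb{1}_n$ with $t^n=1$ lie in $\mathrm{SL}_n$ and act on $f$ by $t^{d-e}$. This forces $d\equiv e \pmod n$. More usefully, $f$ is $\mathrm{SL}_n$-invariant if and only if it is a $\mathrm{GL}_n$-semi-invariant, $f(g\vartriangleright\cdot)=\det(g)^{k}f$, where $d-e=kn$. This holds because $\mathrm{GL}_n/\mathrm{SL}_n\cong\mathbb{K}^\times$ is abelian and acts on the (finite-dimensional) multilinear invariant space through a character, which on the torus must be $\det^k$.

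\textbf{Step 3 (trading determinants for contractions).} Suppose $k\geqslant0$. Adjoin $kn$ auxiliary covectors $\zeta$ and multiply $f$ by $k$ products $[\zeta_{1},\ldots,\zeta_{n}]^\downarrow$. The result is a genuine $\mathrm{GL}_n$-invariant on vectors and covectors. By the FFT for $\mathrm{GL}_n$ (Capelli/Weyl, or Schur--Weyl duality for the multilinear case), it is a linear combination of products of contractions $\langle i|j\rangle$. Applying the Cayley $\Omega$-process in the $\zeta$'s $k$ times recovers $f$ up to a nonzero constant; this again needs characteristic zero. The operator $\Omega=\det(\partial/\partial\zeta)$ applied to products of contractions $\langle v_i|\zeta_j\rangle$ produces determinants $[i_1,\ldots,i_n]^\uparrow$ times contractions. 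The case $k<0$ is symmetric, with auxiliary vectors producing the $[\cdot]^\downarrow$ minors. This also explains the restrictions $p\geqslant n$ and $q\geqslant n$: a nonzero weight requires at least $n$ distinct vectors (or covectors), since multilinear alternating forms of a single repeated vector vanish.

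Alternatively, one can avoid $\Omega$ by a direct multilinear argument. Multilinear $\mathrm{SL}_n$-invariants of $V^{\otimes d}\otimes (V^\ast)^{\otimes e}$ are spanned by elements built from $\delta$-tensors and $\varepsilon$-tensors. This follows from Schur--Weyl duality after contracting with $\varepsilon$ to convert surplus vectors into covectors ($\Lambda^{n-1}V\cong V^\ast\otimes\det$).

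\textbf{Main obstacle.} The crux is Step 3: proving the FFT for $\mathrm{GL}_n$, that is, that multilinear invariants are spanned by complete contractions. I would cite it via Schur--Weyl duality: $\operatorname{End}_{\mathrm{GL}_n}(V^{\otimes d})$ is spanned by the permutation operators. Combined with the $\Omega$-process transfer, this completes the statement as quoted from \cite{Kraft_1996, Weyl_1966}.
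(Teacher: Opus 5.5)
The paper gives no proof of this statement. It is imported from Weyl and Kraft--Procesi and used as a black box for the invariant rings in Section~3. Your outline is a correct rendering of the classical argument behind those citations:
\begin{enumerate}
\item polarize to multilinear invariants;
\item recognise $\mathrm{SL}_n$-invariants as $\mathrm{GL}_n$-semi-invariants of weight $\det^k$;
\item cancel the weight with $k$ auxiliary determinant blocks, so that the FFT for $\mathrm{GL}_n$ (Schur--Weyl) applies;
\item come back with the $\Omega$-process.
\end{enumerate}
The last step works exactly as you claim, for three reasons. Each auxiliary covector $\zeta_a$ can only be contracted with a vector. Next, $\Omega_\zeta \prod_{a=1}^n \langle v_{i_a}|\zeta_a\rangle = \det(v_{i_1},\ldots,v_{i_n})$. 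Finally, $\Omega_\zeta \det(\zeta)=n!$. Hence $f=(n!)^{-k}\,\Omega^k F$ is a polynomial in the minors $[\cdot]^\uparrow$ and the contractions. What this gives beyond the bare citation is that it shows where characteristic zero enters (polarization, the constant $(n!)^k$, semisimplicity of $\mathbb{K}[\mathcal{S}_d]$) and where infiniteness of $\mathbb{K}$ enters. That is what one needs to apply the theorem over the paper's general field $\mathbb{K}$.

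One step in Step~2 needs a small repair. Scalars $t$ with $t^n=1$ force $d\equiv e \pmod n$ only if $\mathbb{K}$ contains the $n$-th roots of unity. This fails, for instance, for $\mathbb{K}=\mathbb{R}$ and $n\geqslant 3$. Separately, the claim that $\mathrm{GL}_n/\mathrm{SL}_n\cong\mathbb{K}^\times$ acts ``through a character'' is asserted rather than proved. Both follow at once from using all scalars:
\begin{itemize}
\item The $\mathbb{K}^\times$-action on the finite-dimensional space of multilinear $\mathrm{SL}_n$-invariants is rational, realised by $\operatorname{diag}(t,1,\ldots,1)$ acting through Laurent polynomials in $t$.
\item Comparing coefficients over the infinite field $\mathbb{K}$, the space therefore splits into weight spaces for $t\mapsto t^k$.
\item The scalar $s\cdot\mathrm{id}_V$ has determinant $s^n$ and acts by $s^{d-e}$ (in your convention) for infinitely many $s$.
\end{itemize}
So only the weight with $nk=d-e$ can be nonzero. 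This yields both the divisibility and the single character $\det^k$ that Step~3 relies on.
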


The contractions \eqref{3 FFT contractions} are precisely those found by \textcite{Fuks_1986}, as they are also the invariants of $\mathrm{GL}_n$. Thus the only ``new'' invariants (with respect to $\mathrm{GL}_n$) are the minor determinants \eqref{3 FFT det} and \eqref{3 FFT det ast}.

In particular, if $p,q<n$, we are left only with the contractions \eqref{3 FFT contractions}. Back to \eqref{3 Inv complete} and \eqref{3 invariants sl n}, we see that $m\neq n$ implies that the only $\mathfrak{sl}(m|n)_0$-invariants are those of $\mathfrak{gl}(m|n)_0$. If $m=n$, the $\mathfrak{u}_1$ sector is generated by the identity, thus acting trivially on $\mathfrak{sl}(m|m)_1$, and the minor determinants become precisely the determinants, viz.
\begin{align*}
    [1,\ldots,n]^\uparrow
        &\equiv \det(\gamma^+)
        \eqqcolon \delta^+ ,
    \\
    [1,\ldots,n]^\downarrow
        &\equiv \det(\gamma^-) 
        \eqqcolon \delta^- .
\end{align*}

\bigskip

There are, however, non-trivial relations between the generating invariant polynomials in the FFT, which are encoded in the Second Fundamental Theorem (SFT) for $\mathrm{SL}_n$.

\begin{theorem}[SFT for $\mathrm{SL}_n$ \cite{Weyl_1966}] \label{thm: sft}
All relations among the generators in \eqref{3 FFT contractions} and \eqref{3 FFT minors} are algebraic consequences of the following:
\begin{subequations}
\begin{alignat}{3}
    \sum_{t=0}^{n} (-1)^t \,[i_0,\ldots,\widehat{i_t},\ldots,i_n]^\uparrow \; \langle i_t | j \rangle 
        &= 0
        &(\text{if}\ p > n) ,
    \label{3 SFT a}
    \\
    \sum_{t=0}^{n} (-1)^t \, [j_0,\ldots,\widehat{j_t},\ldots,j_n]^\downarrow \; \langle i | j_t \rangle
        &= 0
        &(\text{if}\ q > n) ,
    \label{3 SFT b}
    \\
    \sum_{t=0}^{n} (-1)^t \, [i_0,\ldots,\widehat{i_t},\ldots,i_n]^\uparrow \; [i_t,k_2,\ldots,k_n]^\uparrow
        &= 0
        &(\text{if}\ p > n) ,
    \label{3 SFT c}
    \\
    \sum_{t=0}^{n} (-1)^t \, [j_0,\ldots,\widehat{j_t},\ldots,j_n]^\downarrow \; [j_t,\ell_2,\ldots,\ell_n]^\downarrow
        &= 0 
        &(\text{if}\ q > n) ,
    \label{3 SFT d}
    \\
    [i_1,\ldots,i_n]^\uparrow \; [j_1,\ldots,j_n]^\downarrow
    - \begin{vmatrix}
            \langle i_1 | j_1 \rangle & \cdots & \langle i_1 | j_n \rangle
            \\
            \vdots & \ddots & \vdots \\
            \langle i_n | j_1 \rangle & \cdots & \langle i_n | j_n \rangle
        \end{vmatrix} 
        &= 0 
        \quad &(\text{if}\ p,q \geqslant n) ,
        \!\!\!\!\!\!
    \label{3 SFT e}
\end{alignat}
\end{subequations}
where a hat over an index indicates the omission of that index from the sequence.
\end{theorem}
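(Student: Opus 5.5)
The relations themselves are easy to verify, so the real work is \emph{completeness}: every relation among the generators of \Cref{thm: fft} must follow from \eqref{3 SFT a}--\eqref{3 SFT e}. For completeness I would use a straightening (standard monomial) argument in the style of Hodge and De Concini--Procesi, rather than re-deriving the theorem from Capelli identities. Since $\mathbb{K}$ is infinite, a polynomial identity on $V^p\oplus(V^\ast)^q$ can be checked pointwise; when an identity needs a genuine matrix we may pass to the algebraic closure.

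\textbf{Step 1: the relations hold.} Each relation comes from a vanishing determinant.
\begin{itemize}
    \item Relations \eqref{3 SFT a} and \eqref{3 SFT c} are the Laplace expansions, along the first row, of $(n+1)\times(n+1)$ determinants with two equal rows. For \eqref{3 SFT a}, take the matrix whose columns are $v_{i_0},\dots,v_{i_n}$, with the extra row $\xi_j(v_{i_t})$ adjoined. Equivalently, $n+1$ vectors in $\mathbb{K}^n$ are linearly dependent. Relation \eqref{3 SFT c} is the Plücker relation obtained the same way.
    \item Relations \eqref{3 SFT b} and \eqref{3 SFT d} are the same statements for covectors.
    \item Relation \eqref{3 SFT e} is multiplicativity of the determinant: $\det(\xi_{j_\cdot})\det(v_{i_\cdot})=\det\bigl(\xi_{j_s}(v_{i_r})\bigr)$.
\end{itemize}

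\textbf{Step 2: grading.} The scalar torus $\mathrm{GL}_n/\mathrm{SL}_n$ acts on the invariant ring. Under it, $\langle i|j\rangle$ has weight $0$, $[\cdot]^\uparrow$ has weight $+1$ and $[\cdot]^\downarrow$ has weight $-1$. Let $A$ be the polynomial ring on formal symbols for the generators, and $I$ the ideal generated by \eqref{3 SFT a}--\eqref{3 SFT e}. Both $A$ and $I$ are multigraded: by weight, and by the degree in each $v_i$ and each $\xi_j$.

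Using \eqref{3 SFT e}, every monomial containing both an up-bracket and a down-bracket can be rewritten modulo $I$. So each monomial reduces to one of two forms: contractions times up-brackets only, or contractions times down-brackets only.

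\textbf{Step 3: straightening.} Order the generators, and define standard monomials by the usual tableau conditions:
\begin{itemize}
    \item rows of brackets are weakly increasing columnwise;
    \item contractions form a standard bitableau in the De Concini--Procesi sense, indexed by pairs of rows of the $p\times q$ matrix $\bigl(\langle i|j\rangle\bigr)$;
    \item mixed terms are ordered consistently with the brackets.
\end{itemize}
The reduction uses the relations as follows.
\begin{itemize}
    \item The Plücker relations \eqref{3 SFT c} and \eqref{3 SFT d} straighten products of brackets.
    \item Relations \eqref{3 SFT a} and \eqref{3 SFT b} trade a bracket times a contraction for lexicographically smaller terms.
    \item The determinantal relations among contractions, namely vanishing $(n+1)$-minors of $\bigl(\langle i|j\rangle\bigr)$, are consequences of \eqref{3 SFT e} and \eqref{3 SFT a}. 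To see this, expand such a minor after multiplying by a bracket; since the invariant ring is a domain, the relation can then be cancelled. This cancellation needs care, because $I$ must be shown to contain the minor itself, not just a multiple of it.
\end{itemize}
A decreasing-induction argument on a suitable monomial order then shows that $A/I$ is spanned by standard monomials.

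\textbf{Step 4: linear independence (main obstacle).} It remains to show that the images of the standard monomials in $\mathbb{K}[V^p\oplus(V^\ast)^q]$ are linearly independent. I would try a leading-term argument first. Choose a monomial order on the coordinates of $v_i$ and $\xi_j$ under which each bracket, and each contraction, has its diagonal product as leading term. Then check that distinct standard monomials have distinct leading monomials; this is the classical computation for Plücker coordinates and for bideterminants.

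The delicate point is the interaction between contractions and brackets of one type. My fallback is to reduce to the $\mathrm{GL}_n$ statement: weight-$k$ invariants are $\det^k$-semi-invariants of $\mathrm{GL}_n$, and one can "multiply by $\det(\xi)^k$" on $n$ auxiliary covectors to turn them into honest $\mathrm{GL}_n$-invariants. The SFT for $\mathrm{GL}_n$, that all relations come from vanishing $(n+1)$-minors of contractions, then transfers back through \eqref{3 SFT e}.

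Once Step 4 holds, $A/I\to\mathbb{K}[V^p\oplus(V^\ast)^q]^{\mathrm{SL}_n}$ is injective. Together with surjectivity from \Cref{thm: fft}, this proves the statement.
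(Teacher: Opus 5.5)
The paper does not prove this theorem; it quotes it from Weyl. So there is no argument to match, and your standard-monomial route in the spirit of De Concini--Procesi is a genuinely different approach, one that would also give an explicit basis of the invariant ring. Steps~1 and~2 are fine.

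Step~1 is in fact valid for every choice of indices. For $p=n$ the indices in \eqref{3 SFT a} and \eqref{3 SFT c} must repeat, and then the only two surviving terms cancel, so these relations hold trivially. The remark after the theorem in the paper, that they ``cannot hold'' for $p=n$, is therefore mistaken; the hypothesis $p>n$ only discards trivial instances.

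There is, however, a concrete gap in Step~3. Cancelling a bracket ``because the invariant ring is a domain'' is circular. The domain property only tells you that the $(n+1)$-minor maps to zero in the invariant ring, which you already know. Cancelling modulo $I$ instead requires the bracket to be a non-zero-divisor in $A/I$, which is essentially the injectivity you are proving.

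No multiplication is needed. Expand the minor along its first column and apply \eqref{3 SFT e} to the cofactors:
\begin{align*}
  \det\bigl(\langle i_r|j_s\rangle\bigr)_{r,s=0}^{n}
  &= \sum_{t=0}^{n} (-1)^t \, \langle i_t|j_0\rangle \, \det\bigl(\langle i_r|j_s\rangle\bigr)_{r\neq t,\ 1\leqslant s\leqslant n}
  \\
  &\equiv [j_1,\ldots,j_n]^\downarrow \sum_{t=0}^{n} (-1)^t \, [i_0,\ldots,\widehat{i_t},\ldots,i_n]^\uparrow \, \langle i_t|j_0\rangle
  \equiv 0 \pmod{I} ,
\end{align*}
where the last step is \eqref{3 SFT a}.

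The same cofactor trick supplies something else Step~3 needs. At the interface between the bracket rows and the left tableau of a bideterminant, you must straighten a bracket against an $r\times r$ minor $m(k_1,\ldots,k_r)\coloneqq\det\bigl(\langle k_a|j_b\rangle\bigr)$, not just against a single contraction. The basic exchange relation
\[
  [i_1,\ldots,i_n]^\uparrow \, m(k_1,k_2,\ldots,k_r)
  - \sum_{s=1}^{n} [i_1,\ldots,i_{s-1},k_1,i_{s+1},\ldots,i_n]^\uparrow \, m(i_s,k_2,\ldots,k_r)
  \in I
\]
holds for this reason. Expand every $m$ along its first row; the cofactors $C_b$ are common to all terms. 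This turns the left-hand side into $\sum_b \pm C_b$ times an instance of \eqref{3 SFT a} with indices $(k_1,i_1,\ldots,i_n)$ and $j=j_b$.

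The fallback in Step~4 has the same defect. Let $F$ be a weight-$k$ relation and multiply by $([\varepsilon_1,\ldots,\varepsilon_n]^\downarrow)^k$. Invoking the $\mathrm{GL}_n$ theorem in its ideal-theoretic form only shows that $([\varepsilon_1,\ldots,\varepsilon_n]^\downarrow)^k F$ lies in the relation ideal of the enlarged formal ring. Deducing $F\in I$ from this is again a division you cannot perform.

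What does transfer is linear independence, once Step~3 gives spanning. Order the auxiliary $\varepsilon_s$ before all $\xi_j$. Then multiplication by $([\varepsilon_1,\ldots,\varepsilon_n]^\downarrow)^k$ followed by \eqref{3 SFT e} does the following to a standard monomial with $k$ up-brackets, whose bracket rows sit on top of the left tableau:
\begin{itemize}
  \item it produces a $\mathrm{GL}_n$-standard bitableau whose right tableau begins with $k$ rows $(\varepsilon_1,\ldots,\varepsilon_n)$;
  \item this assignment is injective.
\end{itemize}
So what finishes the job is the basis form of the $\mathrm{GL}_n$ theorem (standard bitableaux with rows of length at most $n$ are linearly independent), not merely generation by the $(n+1)$-minors.

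Alternatively, compare dimensions via Cauchy's formula. The module $S_\lambda(V^\ast)\otimes S_\mu(V)$ contains a one-dimensional space of $\mathrm{SL}_n$-invariants if $\lambda-\mu=(k,\ldots,k)$ for some $k\in\mathbb{Z}$, and none otherwise. This matches the count of standard monomials in each multidegree.

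If you keep the leading-term route, remember that a contraction is not a monomial. By Cauchy--Binet, under a lex order with the $v$-coordinates first, the leading term of an $r$-minor of contractions is the diagonal term of the $r$-minor of the $v$'s in coordinates $1,\ldots,r$, times the corresponding term for the $\xi$'s. The leading monomial of a standard monomial then records the column contents of both tableaux. Together with the convention that bracket rows come first, these contents determine the standard monomial.
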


Although not explicit in \cite{Weyl_1966}, the relations \eqref{3 SFT a}--\eqref{3 SFT d} only make sense when $p>n$ (or $q>n$, accordingly). Indeed, taking $p=n$ and recalling that the $n+1$ indices $i_0,\ldots,i_n$ must range in $\{1,\ldots,p=n\}$, at least two of them must be equal. This implies that the relations \eqref{3 SFT a} and \eqref{3 SFT c} cannot hold for $p=n$, otherwise $[1,\ldots,n] \, \langle i | j \rangle$ would vanish for all $i$ and $j$.

\bigskip

Applying \Cref{thm: fft} and \Cref{thm: sft} to our cases of interest, we present following proposition.

\begin{proposition} \label{prop: sl mn invariants}
If $m\neq n$, then
\begin{align*}
    \operatorname{Inv}_{\mathfrak{sl}(m|n)_0} \mathrm{S}^\bullet \bigl( \mathfrak{sl}(m|n)_1{}^\ast \bigr)
        &= \operatorname{Inv}_{\mathfrak{gl}(m|n)_0} \mathrm{S}^\bullet\bigl( \mathfrak{gl}(m|n)_1{}^\ast \bigr)
        \\
        &= \mathbb{K}\bigl[ \eta_{2k} \,\big|\, 1 \leqslant k \leqslant \min\{m,n\} \bigr] .
\end{align*}
If $m = n$, then
\begin{equation} \label{3 def R}
    \operatorname{Inv}_{\mathfrak{sl}(m|m)_0} \mathrm{S}^\bullet\bigl( \mathfrak{sl}(m|m)_1{}^\ast \bigr)
        = \mathbb{K}\bigl[ \eta_{2k} ,\ \delta^+ ,\ \delta^- \,\big|\, 1 \leqslant k \leqslant m \bigr] \big/ \mathcal{J} ,
\end{equation}
where $\mathcal{J}$ is the ideal generated by the relation \eqref{3 SFT e}.
\end{proposition}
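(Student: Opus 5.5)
The plan is to compute the invariants in stages, following the factorization \eqref{3 Inv complete}: first take $\mathfrak{sl}_n$-invariants (acting on the row index of $U^+$ and the column index of $U^-$), then $\mathfrak{sl}_m$-invariants, and finally $\mathfrak{u}_1$-invariants. By \Cref{lemma: connected group}, each Lie algebra invariant ring can be replaced by the corresponding $\mathrm{SL}$-invariant ring. Hence \Cref{thm: fft} applies to \eqref{3 invariants sl n} with $p=q=m$.

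\textbf{Case $m\neq n$.} If $m<n$, the FFT for $\mathrm{SL}_n$ with $p=q=m<n$ gives only the contractions $\langle i|j\rangle=(\gamma^-\gamma^+)^i{}_j$. These are the entries of the $m\times m$ matrix $\gamma^-\gamma^+$, on which $\mathrm{SL}_m$ acts by conjugation.

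The $\mathrm{SL}_m$-invariant polynomials in these entries are to be handled as follows. Rather than invoking invariant theory of matrices, I would apply the FFT for $\mathrm{SL}_m$ directly to $U^+\oplus U^-\cong (V_0^\ast)^n\oplus (V_0)^n$, now with $p=q=n>m$. This gives generators that are contractions $(\gamma^+\gamma^-)^I{}_J$ together with $m\times m$ minors of $\gamma^+$ and $\gamma^-$.

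Joint invariance is then obtained from the full $\mathrm{GL}_m\times\mathrm{GL}_n$-structure. The $\mathfrak{u}_1$ factor acts with weight $\pm(\tfrac1m-\tfrac1n)\neq0$ on $\gamma^\pm$, so an invariant must have equal degree in $\gamma^+$ and $\gamma^-$. Together with the $\mathfrak{sl}_m\oplus\mathfrak{sl}_n$ invariance, this means the polynomial is invariant under the identity components of $\mathrm{GL}_m\times\mathrm{GL}_n$ modulo the central scalars acting trivially. More cleanly, a scalar $t\mathbb 1_m$ acts as $\gamma^+\mapsto\gamma^+t^{-1}$ and $\gamma^-\mapsto t\gamma^-$, so bidegree balance is exactly invariance under both centers. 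We thereby reduce to $\mathfrak{gl}(m|n)_0$-invariants, which \eqref{2 ring invariants gl} identifies as $\mathbb K[\eta_{2k}]$.

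Stated differently, any $\mathrm{SL}_n$ minor $\det(\gamma^+)$ would have unbalanced bidegree $(n,0)$, so it is killed by the $\mathfrak{u}_1$-condition. Hence the "new" FFT generators never contribute, and the answer coincides with Fuks' ring.

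\textbf{Case $m=n$.} Now $\mathfrak{u}_1$ acts trivially, and the determinants $\delta^\pm$ are $\mathrm{SL}_n\times\mathrm{SL}_n$-invariant. For the two-sided invariants, I would view $U^+\oplus U^-$ under $\mathrm{SL}_n\times\mathrm{SL}_n$ as a module whose invariants are polynomials in the entries of $\gamma^+\gamma^-$ (conjugation-invariant), together with $\delta^+$ and $\delta^-$. Using the $\mathrm{SL}_n$ FFT/SFT for the first factor and then Procesi's theorem for conjugation invariants of one matrix, namely $\operatorname{tr}$ of powers, gives the generators $\eta_{2k}$ ($k\le n$) and $\delta^\pm$.

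The relations among them come from \Cref{thm: sft}. Since $p=q=n$, only \eqref{3 SFT e} survives (per the remark following the theorem), and it reads $\delta^+\delta^-=\det(\gamma^+\gamma^-)$. The right-hand side is expressed through Newton's identities as a polynomial in $\eta_2,\dots,\eta_{2n}$. The $\eta_{2k}$ with $k\le n$ are algebraically independent, being coefficients of the characteristic polynomial of a generic matrix product.

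\textbf{Main obstacle.} I expect the hardest step to be justifying that the iterated invariants commute with the presentation, i.e.\ that the SFT relations for the first $\mathrm{SL}_n$ descend to give no further relations after taking the second group's invariants. I would try to settle this by a dimension/Hilbert series check: compare $\mathbb K[\eta,\delta^\pm]/\mathcal J$ with the invariant ring via the slice $\gamma^-=\mathbb 1$, on which the invariants restrict to $\mathrm{SL}_n$-conjugation invariants of $\gamma^+$ together with $\det\gamma^+$. This should show injectivity of the quotient map.
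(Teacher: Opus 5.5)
Your proposal is correct. For $m=n$ it follows the paper's route: the FFT and SFT for one copy of $\mathrm{SL}_n$ give the contractions plus $\delta^\pm$, with \eqref{3 SFT e} the only nontrivial relation since $p=q=n$. You then take conjugation invariants of the contraction matrix under the other copy.

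For $m\neq n$ your argument is genuinely different. The paper stays inside the FFT and discards the minor determinants as non-invariant under the larger special linear group. This actually shows that $\mathfrak{sl}_m\oplus\mathfrak{sl}_n$ alone has no invariants beyond the $\mathfrak{gl}$ ones. You instead use the nonzero $\mathfrak{u}_1$-weight to force bidegree balance, note that balance is exactly invariance under the two central elements of $\mathfrak{gl}_m\oplus\mathfrak{gl}_n$, and quote Fuks. Your version needs no invariant theory beyond Fuks's ring, so the detour through the FFT for $\mathrm{SL}_m$ with $p=q=n$ is unnecessary. The paper's version proves the slightly stronger statement without using $\mathfrak{u}_1$.

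The descent issue you flag for $m=n$ is real, and the paper passes over it. The quickest fix is linear reductivity. Write $\mathbb{K}[U]^{\mathrm{SL}_n}\cong\mathbb{K}[N,\delta^\pm]/(r)$, with $N=\gamma^+\gamma^-$ treated as a matrix of free variables and $r=\delta^+\delta^--\det N$. Since $r$ is invariant under the remaining $\mathrm{SL}_n$, which acts on $N$ by conjugation, the Reynolds operator gives $\bigl(\mathbb{K}[N,\delta^\pm]/(r)\bigr)^{\mathrm{SL}_n}=\mathbb{K}[N,\delta^\pm]^{\mathrm{SL}_n}/(r)$. This yields generation and the absence of further relations at once.

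Your slice check also works, but only degree by degree, because on $\gamma^-=\mathbb{1}$ the inhomogeneous element $\delta^--1$ restricts to zero. First use $\mathcal{J}$ to eliminate $\eta_{2n}$; its coefficient in Newton's expression for the determinant is $(-1)^{n-1}/n\neq0$. The presented ring then becomes the polynomial ring $\mathbb{K}[\eta_2,\ldots,\eta_{2n-2},\delta^+,\delta^-]$. In a fixed degree the exponent of $\delta^-$ is determined by the other exponents. Injectivity then follows from the algebraic independence of $\operatorname{tr}(A^k)$ for $k<n$ together with $\det A$.
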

\begin{proof}
As already argued, when $m\neq n$, the minor determinants are not invariant polynomials: if $m>n$, the $n\times n$ minors cannot be invariants of $\mathrm{SL}_m$. Thus we have the same invariant polynomials as we had for $\mathfrak{gl}(m|n)$.
\par\noindent
When $m=n$, we have the determinants $\delta^\pm$, and the only surviving relation from \Cref{thm: sft} is \eqref{3 SFT e}.
\end{proof}

For what follows, let $\mathcal{R}^\bullet$ be the ring defined on \eqref{3 def R}, which is naturally graded by the polynomial degree. Since $\mathfrak{sl}(n|n)_0 \cong \mathfrak{psl}(n|n)_0 \oplus \mathfrak{i}_{2n}$ (where $\mathfrak{i}_{2n}$ is the ideal generated by the identity matrix), it is immediate to see that
\begin{equation*}
    \mathcal{R}^\bullet
        = \operatorname{Inv}_{\mathfrak{psl}(n|n)_0} \mathrm{S}^{\bullet}\bigl( \mathfrak{psl}(n|n)_1 \bigr) .
\end{equation*}
In words: the rings of invariant polynomials appearing in the HS spectral sequences of $\mathfrak{sl}(n|n)$ and $\mathfrak{psl}(n|n)$ are the same.

The rings described in \Cref{prop: sl mn invariants} appeared in a different fashion in \cite[Section 8.11]{Boe_2006}, computed through the Luna--Richardson theory. Our computation here is based on the first and second fundamental theorems for invariant theory.

\subsection[HS spectral sequence in trivial coefficients]{Hochschild--Serre spectral sequence with trivial coefficients} \label{subsec: HS trivial coefficients}

The previous discussion culminates in the first main theorem of this paper.

\begin{theorem} \label{thm: psl trivial coefficients}
The cohomology of $\mathfrak{psl}(n|n)$ in trivial coefficients is given by
\begin{equation} \label{3 cohomology psl m m}
    H \bigl( \mathfrak{psl}(n|n) \bigr)
        \cong H(\mathfrak{sl}_n) \otimes \mathbb{K}\bigl[ \eta_2 ,\ \delta^+ ,\ \delta^- \bigr] \big/ \mathcal{J}_0 ,
\end{equation}
where $\mathcal{J}_0$ is the ideal generated by the relation
\begin{equation} \label{3 nontrivial relation cohomology}
    \delta^+ \, \delta^-
        = \frac{1}{n!} \, (\eta_2)^n .
\end{equation}
\end{theorem}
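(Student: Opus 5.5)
We will run the Hochschild--Serre spectral sequence for the pair $\bigl(\mathfrak{psl}(n|n),\mathfrak{sl}_n\oplus\mathfrak{sl}_n\bigr)$ in trivial coefficients, in close parallel with the review of Fuks' computation in \Cref{subsec: H gl trivial}.

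\textbf{First page.} Since $\mathfrak{psl}(n|n)_0\cong\mathfrak{sl}_n\oplus\mathfrak{sl}_n$ is semisimple, the decomposition \eqref{2 1st page decomposition} applies. Together with \Cref{prop: sl mn invariants} (which also covers $\mathfrak{psl}(n|n)$), it gives
\begin{equation*}
    E_1^{p,q}\cong H^q(\mathfrak{sl}_n\oplus\mathfrak{sl}_n)\otimes\mathcal{R}^p .
\end{equation*}
The factor $H(\mathfrak{sl}_n\oplus\mathfrak{sl}_n)$ is the exterior algebra on $\varphi'_{2k-1},\varphi''_{2k-1}$ for $2\leqslant k\leqslant n$. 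The traces of odd powers of $a$ and of $b$ are now traceless-projected, and $\varphi'_1,\varphi''_1$ are absent. The factor $\mathcal{R}$ is generated by $\eta_2,\ldots,\eta_{2n},\delta^\pm$ subject to \eqref{3 SFT e}. For $p=q=n$ the relation \eqref{3 SFT e} reads $\delta^+\delta^-=\det(\gamma^+\gamma^-)$. By Newton's identities, $\det(\gamma^+\gamma^-)$ is a polynomial in $\eta_2,\ldots,\eta_{2n}$ whose only term involving $\eta_2$ alone is $\tfrac{1}{n!}(\eta_2)^n$.

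\textbf{Transgressions.} We show, as in the $\mathfrak{gl}$ case, that
\begin{equation*}
    d_{2k}(\varphi'_{2k-1})=d_{2k}(\varphi''_{2k-1})=\eta_{2k}\quad\text{for }2\leqslant k\leqslant n ,
\end{equation*}
possibly modulo decomposables in lower $\eta$'s, which do not affect the argument. To establish this, we compute the connecting map $\Delta$ on $\operatorname{tr}(a^{2k-1})$ using \eqref{2 Q psl vs Q gl}. The removal of the identity and of $\mathbb{J}$ only changes $Q(a)$ by trace parts, and these drop out of $\operatorname{tr}(\gamma^+a^{2k-2}\gamma^-)$-type expressions up to lower-order terms. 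This lets us follow the same zig-zag down to $E^{2k,0}$. Meanwhile $\eta_2$ is no longer hit, because $\varphi'_1$ and $\varphi''_1$ are gone, and $\delta^\pm$ sit in $E^{n,0}$ with nothing transgressing to them beyond the $\eta$'s. The outcome is as follows. The differences $\psi_{2k-1}=\varphi'_{2k-1}-\varphi''_{2k-1}$ survive and generate a copy of $H(\mathfrak{sl}_n)$. The ideal generated by $\eta_4,\ldots,\eta_{2n}$ is killed, provided the $\eta_{2k}$ together with $\psi$'s and $\varphi'$'s behave like a Koszul complex on a regular sequence in $\mathcal{R}$. What remains on the base row is
\begin{equation*}
    \mathcal{R}/(\eta_4,\ldots,\eta_{2n})=\mathbb{K}[\eta_2,\delta^\pm]\big/\bigl(\delta^+\delta^--\tfrac{1}{n!}\eta_2^n\bigr) .
\end{equation*}

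\textbf{Main obstacle.} The hardest step is justifying that the pages beyond the transgressions are exactly $H(\mathfrak{sl}_n)\otimes\mathcal{R}/(\eta_4,\ldots,\eta_{2n})$, with no further differentials. We will reduce this to showing that $\eta_4,\ldots,\eta_{2n}$ form a regular sequence in $\mathcal{R}$. First, $\mathbb{K}[\eta_2,\ldots,\eta_{2n}]$ is a polynomial ring, and $\mathcal{R}$ is a free module over it with basis $\{(\delta^+)^i,(\delta^-)^j\}$; here we use \eqref{3 SFT e} to eliminate products $\delta^+\delta^-$. Since $\mathcal{R}$ is free, any subset of the polynomial variables is regular. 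We will then use a Zeeman-type comparison argument, modelled on Fuks' $\mathfrak{gl}$ argument, to conclude that the spectral sequence is the Koszul resolution. Consequently $E_\infty=H(\mathfrak{sl}_n)\otimes\mathcal{R}/(\eta_4,\ldots,\eta_{2n})$, with the $\psi$'s lifting as genuine cocycles $\operatorname{str}(c_0^{2k-1})$. Multiplicativity of the spectral sequence, together with the freeness of the result over the $\psi$'s, removes any extension problem and yields \eqref{3 cohomology psl m m}.
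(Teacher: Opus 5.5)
You follow the paper's route almost step for step: the same $E_1$, transgressions borrowed from $\mathfrak{gl}(n|n)$, and survival of $\psi_{2k-1}$, $\eta_2$, $\delta^\pm$. Your regular-sequence/Koszul argument is a real improvement, since the paper simply asserts the degeneration.

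\textbf{The gap.} You claim that $d_{2k}(\varphi'_{2k-1})$ is $\eta_{2k}$ up to decomposables ``which do not affect the argument''. That is false, and the paper's proof makes the same slip when it says that all $\eta_{2k}$ with $k\geqslant 2$ become cohomologous to zero.

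In $\mathfrak{psl}(n|n)$ the odd--odd bracket is projected onto the traceless part. So for the $\mathfrak{sl}_n$ ghost $a$ one has $Q^{\mathfrak{psl}}a=-a^2-\pi_0(M)$ with $M=\gamma^-\gamma^+$, and the curvature is $F\coloneqq Q^{\mathfrak{psl}}a+a^2=-\pi_0(M)$. Consider the Chern--Simons cochain $\mathrm{CS}_{2k-1}(a)=k\int_0^1\operatorname{tr}\bigl(a\,F_t^{\,k-1}\bigr)\,dt$ with $F_t=t\,Q^{\mathfrak{psl}}a+t^2a^2$. It satisfies $Q^{\mathfrak{psl}}\,\mathrm{CS}_{2k-1}(a)=\operatorname{tr}(F^k)$; for $k=2$ this is $Q\operatorname{tr}\bigl(aF-\tfrac13a^3\bigr)=\operatorname{tr}(F^2)$. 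Its component in $E_0^{0,2k-1}$ is a nonzero multiple of $\operatorname{tr}(a^{2k-1})$.

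Hence the transgression of $\varphi'_{2k-1}$ is, up to a nonzero scalar, $\tau_k\coloneqq\operatorname{tr}\bigl((\pi_0M)^k\bigr)=\operatorname{tr}\bigl((M-\tfrac{\eta_2}{n}\mathbb{1})^k\bigr)$. This contains the pure power $(\eta_2)^k$, which no earlier page removes; for example $\tau_2=\eta_4-\tfrac1n(\eta_2)^2$.

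The only differential that can reach $E^{4,0}$ is $d_4$ out of the two-dimensional $E^{0,3}$, on which $\psi_3$ is a permanent cycle. So $\eta_4$ itself is not exact: $\eta_4\sim\tfrac1n(\eta_2)^2\neq0$. The ideal to divide by is therefore $(\tau_2,\ldots,\tau_n)$, not $(\eta_4,\ldots,\eta_{2n})$. Equivalently, in cohomology $\eta_{2k}=n^{1-k}(\eta_2)^k$ for $2\leqslant k\leqslant n$, as if $M$ were the scalar matrix $\tfrac{\eta_2}{n}\mathbb{1}$.

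\textbf{Consequence for the relation.} Modulo $(\tau_2,\ldots,\tau_n)$ all power sums, hence all elementary symmetric functions, of $\pi_0M$ vanish. Therefore
$\delta^+\delta^-=\det M=\det\bigl(\tfrac{\eta_2}{n}\mathbb{1}+\pi_0M\bigr)\equiv n^{-n}(\eta_2)^n$.
For $n=2$: $\det M=\tfrac12\bigl((\eta_2)^2-\eta_4\bigr)\equiv\tfrac14(\eta_2)^2$, not $\tfrac12(\eta_2)^2$.

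Your Newton-identity observation about the $\tfrac{1}{n!}(\eta_2)^n$ term is correct, but the other terms of $\det M$ do not vanish in cohomology. So with $\delta^\pm=\det\gamma^\pm$ and $\eta_2=\operatorname{tr}(\gamma^+\gamma^-)$, the relation $\delta^+\delta^-=\tfrac1{n!}(\eta_2)^n$ is false for $n\geqslant2$ and cannot be proved. The stated isomorphism holds only abstractly, after rescaling $\delta^+$ by $n^n/n!$.

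\textbf{What survives.} The rest of your framework goes through with the correction. The list $\eta_2,\tau_2,\ldots,\tau_n$ is a triangular change of polynomial generators of $\mathbb{K}[\eta_2,\ldots,\eta_{2n}]$, so $\tau_2,\ldots,\tau_n$ is still a regular sequence on the free module $\mathcal{R}$.

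One further fix: $\operatorname{str}(c_0^{2k-1})$ is not a cocycle. Its differential is $-(2k-1)\bigl(\operatorname{tr}(Ma^{2k-2})-\operatorname{tr}(Nb^{2k-2})\bigr)$. The genuine lift of $\psi_{2k-1}$ is $\mathrm{CS}_{2k-1}(a)-\mathrm{CS}_{2k-1}(b)$, which is closed because $\operatorname{tr}\bigl((\pi_0M)^k\bigr)=\operatorname{tr}\bigl((\pi_1N)^k\bigr)$. These cochains also supply the cochain-level comparison map your Zeeman argument needs.
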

\begin{proof}
From \Cref{thm: Hochschild Serre}, we know that the \nth{1} page of the Hochschild--Serre spectral sequence for $\bigl( \mathfrak{psl}(n|n) , \mathfrak{sl}_n \oplus\mathfrak{sl}_n \bigr)$ with trivial coefficients is
\begin{equation*}
    E_1^{p,q}
        \cong H^q \bigl( \mathfrak{sl}_n \oplus \mathfrak{sl}_n \bigr)
            \otimes \mathcal{R}^p .
\end{equation*}
It is the tensor algebra generated by the fermionic ghost polynomials
\begin{alignat*}{5}
    \varphi'_3 &\in E_1^{0,3} ,
    \quad
    \varphi'_5 &&\in E_1^{0,5} ,
    \quad &&\ldots ,
    \quad
    \varphi'_{2m-1} &&\in E_1^{0,2m-1} ,
    \\
    \varphi''_3 &\in E_1^{0,3} ,
    \quad
    \varphi''_5 &&\in E_1^{0,5} ,
    \quad &&\ldots ,
    \quad
    \varphi''_{2n-1} &&\in E_1^{0,2n-1}
\end{alignat*}
--- each line for one of the copies of $\mathfrak{sl}_n$ ---, and the bosonic ghost polynomials
\begin{align*}
    \eta_2 \in E_1^{2,0} ,
    \quad
    \eta_4 \in E_1^{4,0} ,
    \quad
    \ldots,
    \quad
    \eta_{2n} \in E_1^{2n,0} ,
    \qquad\quad
    \delta^+ , \delta^- \in E_1^{n,0} ,
\end{align*}
which generate the ring $\mathcal{R}$ of invariant polynomials.
\par\noindent
The BRST differential of $\mathfrak{psl}(n|n)$ is inherited from $\mathfrak{gl}(n|n)$ according to \eqref{2 Q psl vs Q gl}, so we already know that $\varphi'_{2k-1} , \varphi''_{2k-1}$ transgress to $\eta_{2k}$. Again, this means that
\begin{equation*}
    \psi_{2k-1}^{}
        \coloneqq \varphi'_{2k-1} - \varphi''_{2k-1}
        = \operatorname{str}\bigl(C_0^{2k-1}\bigr)
\end{equation*}
survives transgression and goes to $E_\infty$. Moreover, these transgressions exhaust all fermionic ghost polynomials and all trace-like bosonic ghost polynomials but one: $\eta_2$.
\par\noindent
The bosonic ghost polynomial $\eta_2$ generates the \nth{2} cohomology class of $\mathfrak{psl}(n|n)$ associated with the central extension $\mathfrak{sl}(n|n)$; we show it explicitly for $\mathfrak{psl}(2|2)$ below, but the argument holds for any $n$. The determinant-like bosonic ghost polynomials $\delta^\pm$ can never be in the image of HS differentials (since we already know how they act on all other generators), but will always be in their kernels (as they lie on the $E^{\bullet,0}$ row), hence going to $E_\infty$ too. Therefore, we already know that the cohomology in trivial coefficients must be generated by the tensor product of
\begin{align*}
    H(\mathfrak{sl}_n) \cong \Lambda \bigl( \mathbb{K}\bigl\{ \psi_3 , \ldots, \psi_{2n-1} \bigr\} \bigr)
\end{align*}
with the subring of polynomials $\mathbb{K}[\eta_2, \ \delta^+,\ \delta^-]$.
\par\noindent
Finally, these generators are not completely free. By the Cayley--Hamilton theorem and Newton's identities, we know that the $2n^{\text{th}}$ order polynomial
\begin{equation*}
    \delta^+ \, \delta^-
        = \det\bigl( \gamma^+ \gamma^- \bigr)
\end{equation*}
can be written in terms of $\eta_{2k} = \operatorname{tr}\bigl(\gamma^+ \gamma^-\bigr)^k$. But, apart from $\eta_2$, all the other $\eta_{2k}$'s are cohomologous to zero by the transgressions, therefore
\begin{equation*}
    \delta^+ \, \delta^-
        = \frac{1}{n!} \, (\eta_2)^n 
        \qquad\text{in cohomology} ,
\end{equation*}
concluding the proof.
\end{proof}

\subsection{Explicit examples} \label{subsec: examples trivial}

In the following Subsubsections, we will look at the lower order cohomology groups for $n \leqslant4$ and give their explicit generators.

\subsubsection{Cohomology of \texorpdfstring{$\mathfrak{psl}(1|1)$}{psl(1|1)}}

The case $n=1$ is pathological in the sense that $\mathfrak{psl}(1|1)$ has \emph{no fermionic ghosts at all} (because $\mathfrak{sl}_1 = 0$). The two bosonic ghosts, $\gamma^\pm$, turn out to be the invariants themselves, so
\begin{equation*}
    H\bigl( \mathfrak{psl}(1|1) \bigr)
        = \mathbb{K}[ \delta^+ , \delta^- ] .
\end{equation*}
Note that this cohomology ring is free.

\subsubsection{Cohomology of \texorpdfstring{$\mathfrak{psl}(2|2)$}{psl(2|2)} and its central extensions}

Moving on to $n=2$, the only non-trivial cohomology of $\mathfrak{sl}_2$ is in \nth{3} order, so the only fermionic ghost generator for $\mathfrak{psl}(2|2)$ is $\psi_3^{} = \varphi'_3 - \varphi''_3$. The determinants $\delta^\pm$ are both of \nth{2} order. \Cref{table: psl 2 2} shows the low order cohomology spaces of $\mathfrak{psl}(2|2)$, their generators from the HS spectral sequence, and their dimensions (Betti numbers).

\begin{table}[H]
\centering
\begin{tabular}{|c|c|c|}
    \hline\hline
    $\boldsymbol{k}$
        & \textbf{generators of} $\boldsymbol{H^k\bigl( \mathfrak{psl}(2|2) \bigr)}$
        & $\boldsymbol{\beta^k}$
    \\
    \hline\hline
    $1$ & $1$ & $1$
    \\ \hline
    $2$ & $\eta_2$, $\delta^\pm$
        & $3$
    \\ \hline
    $3$ & $\psi_3$
        & $1$
    \\ \hline
    $4$ & $(\eta_2)^2$, $(\delta^\pm)^2$, $\eta_2 \, \delta^\pm$
        & $5$
    \\ \hline
    $5$ & $\psi_3 \eta_2$, $\psi_3 \delta^\pm$
        & $3$
    \\ \hline
    \multicolumn{3}{c}{$\vdots$}
\end{tabular}
\caption{Low order cohomology spaces of $\mathfrak{psl}(2|2)$ in trivial coefficients, their generators and their dimensions (Betti numbers).}
\label{table: psl 2 2}
\end{table}


In general, we have the recursion formulas
\begin{align*}
    H^{2k} \bigl( \mathfrak{psl}(2|2) \bigr)
        &\cong \eta_2 \cdot H^{2(k-1)} \bigl( \mathfrak{psl}(2|2) \bigr)
            \;\oplus\; \mathbb{K} \bigl\{ (\delta^\pm)^k \bigr\} ,
    \\
    H^{2k+1} \bigl( \mathfrak{psl}(2|2) \bigr)
        &\cong \psi_3 \cdot H^{2(k-1)} \bigl( \mathfrak{psl}(2|2) \bigr) ,
\end{align*}
which correspond to the Betti numbers ($k>0$)
\begin{align*}
    \beta^{2k}\bigl( \mathfrak{psl}(2|2) \bigr)
        &= 2k+1 ,
    \\
    \beta^{2k+1} \bigl( \mathfrak{psl}(2|2) \bigr)
        &= 2k-1 .
\end{align*}

\bigskip

The \nth{2} cohomology of $\mathfrak{psl}(2|2)$ is 3-dimensional, as expected \cite{Scheunert-Zhang_1998}; hence, there are three inequivalent central extensions. The first one is obvious,
\begin{equation*}
    \eta_2 \ \leftrightsquigarrow \ \mathfrak{sl}(2|2)
        \cong \mathfrak{psl}(2|2) \oplus \mathbb{K} ,
\end{equation*}
with Lie superbracket given by
\begin{align*}
    \bigl[ \tau\indices*{^+_\alpha} , \tau\indices*{^-_\beta} \bigr]_{\mathfrak{sl}}
        &= \bigl[ \tau\indices*{^+_\alpha} , \tau\indices*{^-_\beta} \bigr]_{\mathfrak{psl}} + \delta\indices{_\alpha_\beta} \, \mathbb{1} ,
    \\
    \bigl[ \mathbb{1}^+ , \mathbb{1}^- \bigr]_{\mathfrak{sl}}
        &= \bigl[ \mathbb{1}^+ , \mathbb{1}^- \bigr]_{\mathfrak{psl}} + \mathbb{1} ,
\end{align*}
where $\{ \mathbb{1}^+ \} \cup \{ \tau\indices*{^+_\alpha} \}_\alpha^{}$ is a basis for $U^+ \cong \mathfrak{gl}_2$, with $\mathbb{1}^+$ the identity matrix, and $\{ \tau\indices*{^+_\alpha} \}_\alpha$ is a Pauli basis for $\mathfrak{sl}_2 \subset U^+$. Similarly for $U^- \cong \mathfrak{gl}_2$.

The other two ghosts, $\delta^\pm = \det(\gamma^\pm)$, 
are associated with the central extensions
\begin{align*}
    \delta^\pm \ \leftrightsquigarrow \ \mathfrak{psl}^\pm(2|2)
        \coloneqq \mathfrak{psl}(2|2) \oplus \mathfrak{e}^\pm ,
\end{align*}
whose Lie superbrackets are given by
\begin{align*}
    \bigl[ \tau\indices*{^\pm_\alpha} , \tau\indices*{^\pm_\beta} \bigr]_{\mathfrak{psl}^\pm}
        &= \bigl[ \tau\indices*{^\pm_\alpha} , \tau\indices*{^\pm_\beta} \bigr]_{\mathfrak{psl}} + \delta\indices{_\alpha_\beta} \, e^\pm ,
    \\
    \bigl[ \mathbb{1}^\pm , \mathbb{1}^\pm \bigr]_{\mathfrak{psl}^\pm}
        &= \bigl[ \mathbb{1}^\pm , \mathbb{1}^\pm \bigr]_{\mathfrak{psl}} + e^\pm ,
\end{align*}
where $e^\pm$ is the generator of the abelian Lie algebra $\mathfrak{e}^\pm$.

\subsubsection{Cohomology of \texorpdfstring{$\mathfrak{psl}(3|3)$}{psl(3|3)} and its 2-extensions}

\Cref{thm: psl trivial coefficients} tells us that the cohomology of $\mathfrak{psl}(3|3)$ with trivial coefficients is generated by the fermionic ghost polynomials $\psi_3$ and $\psi_5$ and the bosonic ghost polynomials $\eta_2$ and $\delta^\pm$. Furthermore, the non-trivial relation $\delta^+ \, \delta^- = (\eta_2)^3$ means that $\eta_2$ and $(\eta_2)^2$ are in independent cohomology classes. \Cref{table: psl 3 3} describes the first few cohomology spaces of $\mathfrak{psl}(3|3)$.

\begin{table}[H]
\centering
\begin{tabular}{|c|c|c|}
    \hline\hline
    $\boldsymbol{k}$
        & \textbf{generators of} $\boldsymbol{H^k\bigl( \mathfrak{psl}(3|3) \bigr)}$
        & $\boldsymbol{\beta^k}$
    \\
    \hline\hline
    $1$ & $1$ & $1$
    \\ \hline
    $2$ & $\eta_2$
        & $1$
    \\ \hline
    $3$ & $\psi_3$,
            $\delta^\pm$
        & $3$
    \\ \hline
    $4$ & $(\eta_2)^2$
        & $1$
    \\ \hline
    $5$ & $\psi_5$,
            $\psi_3 \eta_2$, 
            $\psi_3 (\eta_2)^2$
        & $4$
    \\ \hline
    $6$ & $\psi_3 \delta^\pm$, 
            $(\eta_2)^3$
        & $3$
    \\ \hline
    $7$ & $\psi_5 \eta_2$,
            $\psi_3 (\eta_2)^2$,
            $\delta^\pm (\eta_2)^2$
        & $4$
    \\ \hline
    $8$ & $\psi_5 \psi_3$,
            $\psi_5 \delta^\pm$,
            $\psi_3 \delta^\pm \eta_2$,
            $(\delta^\pm)^2 \eta_2$,
            $(\eta_2)^4$
        & $8$
    \\ \hline
    $9$ & $\psi_5 (\eta_2)^2$, 
            $\psi_3 (\eta_2)^3$,
            $\psi_3 (\delta^\pm)^2$,
            $\delta^\pm (\eta_2)^3$
        & $6$
    \\ \hline
    \multicolumn{3}{c}{$\vdots$}
\end{tabular}
\caption{Low order cohomology spaces of $\mathfrak{psl}(3|3)$ in trivial coefficients, their generators and their dimensions.}
\label{table: psl 3 3}
\end{table}

The \nth{3} cohomology of a Lie (super)algebra $\mathfrak{g}$ with coefficients in $V$ is associated with our ability to extend $\mathfrak{g}$ by $V$ to a non-strict skeletal Lie 2-(super)algebra $\mathfrak{g} \ltimes V \rightrightarrows \mathfrak{g}$. More generally, $H^k(\mathfrak{g};V)$ tells us in how many inequivalent ways we can extend $\mathfrak{g}$ by $V$ into a Lie $k$-(super)algebra \cite{Baez-Crans_2010, Huerta_2011}.

Since $\psi_3$ generates \nth{3} cohomology for every $\mathfrak{psl}(n|n)$ (for $n>1$), all of these Lie superalgebras can be extended to Lie 2-superalgebras. From \Cref{table: psl 3 3}, we see that $\mathfrak{psl}(3|3)$ admits \emph{three inequivalent extensions by $\mathbb{K}$} to Lie 2-superalgebras: the other two associated with the cochains $\delta^\pm \colon \mathfrak{psl}(3|3)_1{}^{\wedge3} \longrightarrow \mathbb{K}$.

\subsubsection{Cohomology of \texorpdfstring{$\mathfrak{psl}(4|4)$}{psl(4|4)}}

Finally, let us look at $n=4$. The cohomology ring of $\mathfrak{psl}(4|4)$ with trivial coefficients is generated by the fermionic ghost polynomials $\psi_3$, $\psi_5$ and $\psi_7$, and by the bosonic ghost polynomials $\delta^\pm$ and $\eta_2$ (up to \nth{3} power).

\Cref{table: psl 4 4} gives a basis for the low order cohomology spaces of $\mathfrak{psl}(4|4)$.

\begin{table}[ht]
\centering
\begin{tabular}{|c|c|c|}
    \hline\hline
    $\boldsymbol{k}$
        & \textbf{generators of} $\boldsymbol{H^k\bigl( \mathfrak{psl}(4|4) \bigr)}$
        & $\boldsymbol{\beta^k}$
    \\
    \hline\hline
    $1$ & $1$ & $1$
    \\ \hline
    $2$ & $\eta_2$
        & $1$
    \\ \hline
    $3$ & $\psi_3$
        & $1$
    \\ \hline
    $4$ & $(\eta_2)^2$,
            $\delta^\pm$
        & $3$
    \\ \hline
    $5$ & $\psi_5$,
            $\psi_3 \eta_2$
        & $2$
    \\ \hline
    $6$ & $\delta^\pm \eta_2$, 
            $(\eta_2)^3$
        & $3$
    \\ \hline
    $7$ & $\psi_5 \eta_2$,
            $\psi_3 (\eta_2)^2$,
            $\psi_3 \delta^\pm$
        & $4$
    \\ \hline
    $8$ & $\psi_5 \psi_3$,
            $(\delta^\pm)^2$,
            $\delta^\pm \eta_2$,
            $(\eta_2)^4$
        & $6$
    \\ \hline
    $9$ & $\psi_5 (\eta_2)^2$, 
            $\psi_5 \delta^\pm$,
            $\psi_3 \delta^\pm \eta_2$,
            $\psi_3 (\eta_2)^3$
        & $6$
    \\ \hline
    \multicolumn{3}{c}{$\vdots$}
\end{tabular}
\caption{Low order cohomology spaces of $\mathfrak{psl}(4|4)$ in trivial coefficients, their generators and their dimensions.}
\label{table: psl 4 4}
\end{table}


\subsection{Cohomology of \texorpdfstring{$\mathfrak{sl}(m|n)$}{sl(m|n)} with trivial coefficients} \label{subsec: sl trivial}

It is almost straightforward to see the cohomology of $\mathfrak{sl}(m|n)$ in trivial coefficients too.

For $m=n$, the generators of $H\bigl( \mathfrak{sl}(n|n) \bigr)$ and $H\bigl( \mathfrak{psl}(n|n) \bigr)$ are the same except for $\eta_2$. This means that

\begin{corollary}
The cohomology of $\mathfrak{sl}(n|n)$ in trivial coefficients is given by
\begin{equation} \label{3 cohomology sl m m}
    H\bigl( \mathfrak{sl}(n|n) \bigr)
        \cong H(\mathfrak{sl}_n) \otimes \mathbb{K}[\delta^+ ,\ \delta^-] \big/ \bigl( \delta^+ \, \delta^- = 0 \bigr) .
\end{equation}
\end{corollary}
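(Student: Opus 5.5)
The plan is to run the Hochschild--Serre spectral sequence for the pair $\bigl(\mathfrak{sl}(n|n), \mathfrak{sl}(n|n)_0\bigr)$ exactly as in the proof of \Cref{thm: psl trivial coefficients}. The ghost of the central element $\mathbb{1}_{2n}$ is then an additional fermionic ghost in $E_1^{0,1}$.

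\textbf{Step 1: the first page.} By \eqref{2 sl m n 0}, the bosonic part is $\mathfrak{sl}(n|n)_0 \cong \mathfrak{sl}_n \oplus \mathfrak{sl}_n \oplus \mathfrak{u}_1$. Since $\mathfrak{u}_1$ is central, it acts trivially on $\mathfrak{sl}(n|n)_1$. By the discussion after \Cref{prop: sl mn invariants}, the ring of invariants is the same $\mathcal{R}$ as for $\mathfrak{psl}(n|n)$. The decomposition \eqref{2 1st page decomposition} still applies to the reductive algebra $\mathfrak{sl}(n|n)_0$, as it did for $\mathfrak{gl}(m|n)_0$. Hence
\begin{equation*}
    E_1 \cong \Lambda\bigl(\mathbb{K}\{\varphi_1, \varphi'_3, \varphi''_3, \ldots, \varphi'_{2n-1}, \varphi''_{2n-1}\}\bigr) \otimes \mathcal{R},
\end{equation*}
where $\varphi_1 \in E_1^{0,1}$ is the degree-one ghost dual to $\mathbb{1}_{2n}$.

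\textbf{Step 2: the differentials.} This is the main step. From $Q(a) = -aa - \gamma^-\gamma^+$ and $Q(b) = -bb - \gamma^+\gamma^-$, the component of the BRST operator along the identity direction picks up $\operatorname{tr}(\gamma^+\gamma^-)$. Concretely, projecting $a$ and $b$ onto their trace parts, I expect $\varphi_1$ to be proportional to $\operatorname{tr} a = \operatorname{tr} b$ on $\mathfrak{sl}(n|n)$, and $Q(\varphi_1) = -\operatorname{tr}(\gamma^-\gamma^+) = -\eta_2$ up to a nonzero constant. Since $\mathbb{1}_{2n}$ is central and $\operatorname{tr}(aa) = 0$, this should give $d_2(\varphi_1) = \eta_2$ up to a nonzero scalar. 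This is the dual statement to $\mathfrak{sl}(n|n)$ being the central extension of $\mathfrak{psl}(n|n)$ classified by $\eta_2$.

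The higher transgressions $\varphi'_{2k-1}, \varphi''_{2k-1} \mapsto \eta_{2k}$ are inherited from $\mathfrak{gl}(n|n)$ through the restriction map, as in \eqref{2 Q psl vs Q gl}. The differences $\psi_{2k-1}$ with $k \geqslant 2$ survive. The point I would check most carefully is that $d_2(\varphi_1) = \eta_2$ is not killed by the trace normalisation: I would verify it directly from \eqref{2 BRST gl with indices} by taking the trace.

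\textbf{Step 3: the limit page.} The first page is a Koszul-type complex. Killing $\eta_2, \eta_4, \ldots, \eta_{2n}$ in $\mathcal{R}$ via the odd generators $\varphi_1, \varphi'_{2k-1}$, and keeping the $\psi_{2k-1}$ for $k \geqslant 2$, leaves
\begin{equation*}
    \Lambda(\psi_3, \ldots, \psi_{2n-1}) \otimes \mathcal{R}/(\eta_2, \ldots, \eta_{2n}).
\end{equation*}
For this to be the full cohomology, the sequence $\eta_2, \ldots, \eta_{2n}$ must be regular in $\mathcal{R}$. That holds because $\mathcal{R}$ is a free module over $\mathbb{K}[\eta_2, \ldots, \eta_{2n}]$ with basis $1$, $(\delta^+)^j$, $(\delta^-)^j$ for $j \geqslant 1$: relation \eqref{3 SFT e} lets us reduce $\delta^+\delta^-$ to a polynomial in the $\eta$'s.

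The quotient is therefore $\mathbb{K}[\delta^+, \delta^-]/(\delta^+\delta^-)$, since $\delta^+\delta^- = \det(\gamma^+\gamma^-)$ is a polynomial in the $\eta_{2k}$ without constant term. Hence $\delta^+\delta^- = 0$ in cohomology. The powers $(\delta^\pm)^j$ lie on the bottom row and are never hit, so no further differentials act.

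Finally, the only possible extension problem is multiplicative. I would settle it by noting that every product of the representatives $\delta^+$ and $\delta^-$ with other classes can be written as the cocycle $\delta^+\delta^- = Q(\cdots)$, so no nontrivial extension arises. This yields \eqref{3 cohomology sl m m}.
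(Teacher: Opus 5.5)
Your proposal is correct and follows essentially the paper's own route. The paper likewise notes that the ghost $\hat{\iota}$ of $\mathbb{1}_{2n}$ satisfies $Q(\hat{\iota}) \propto \eta_2$, so that in $\mathfrak{sl}(n|n)$ the class $\eta_2$ dies and the $\mathfrak{psl}(n|n)$ relation $\delta^+\delta^- = \frac{1}{n!}(\eta_2)^n$ collapses to $\delta^+\delta^- = 0$; this is your transgression $d_2(\varphi_1) \propto \eta_2$, and your check that $\eta_2,\ldots,\eta_{2n}$ is a regular sequence on $\mathcal{R}$ (free over $\mathbb{K}[\eta_2,\ldots,\eta_{2n}]$ with basis $1,(\delta^\pm)^j$) supplies a step the paper leaves implicit.
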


We can visualize it more clearly in the following way. Let $\hat{\iota}$ be the fermionic ghost associated (through duality and statistics-reversal) with the identity matrix $\mathbb{1}_{2n}$ in $\mathfrak{sl}(n|n)$ --- which is removed when going to $\mathfrak{psl}(n|n)$. It is no surprise that $\eta_2$ is the image of $\hat{\iota}$ under the BRST differential on $\mathfrak{gl}(n|n)$, viz.
\begin{equation*}
    Q^{\mathfrak{gl}} (\hat{\iota})
        = \eta_2 .
\end{equation*}
Since $\eta_2$ is exact in $\mathfrak{gl}(n|n)$, it is obviously closed.

Going to $\mathfrak{sl}(n|n)$ and later to $\mathfrak{psl}(n|n)$, $\eta_2$ remains closed because the BRST differential is carried over through \eqref{2 Q psl vs Q gl}. By removing $\mathbb{1}_{2n}$ (and its associated ghost $\hat{\iota}$) to form $\mathfrak{psl}(n|n)$, the polynomial $\eta_2$ is no longer exact, thus generating \nth{2} cohomology for $\mathfrak{psl}(n|n)$.

\bigskip

The most interesting example is $n=1$, as $\mathfrak{sl}_1$ has nontrivial cohomology only in degree $0$. Then
\begin{equation*}
    H\bigl( \mathfrak{sl}(1|1) \bigr)
        \cong \mathbb{K}[\delta^+,\delta^-] \big/\bigl( 
        \delta^+ \, \delta^- = 0 \bigr)
        \cong \mathbb{K}[\gamma^+] \oplus \mathbb{K}[\gamma^-] ,
\end{equation*}
i.e. we have two independent ``towers'' of powers of each determinant.

\bigskip

When $m \neq n$, the invariant polynomials are precisely the same as those for $\mathfrak{gl}(m|n)$. The generator of the \nth{1} cohomology of $\mathfrak{gl}(m|n)$, namely $\psi_1^{} = \varphi'_1 - \varphi''_1$, is associated with the supertrace in $\mathfrak{gl}(m|n)$. By going to $\mathfrak{sl}(m|n)$, we set $\psi_1$ to zero; thus we have $\varphi'_1 + \varphi''_1$ transgressing to $2\eta_2$. No \nth{1} cohomology is left for $\mathfrak{sl}(m|n)$, recovering the result of \textcite{Fuks_1986}:

\begin{corollary}
When $m\neq n$, the cohomology of $\mathfrak{sl}(m|n)$ in trivial coefficients is given by
\begin{equation}
    H\bigl( \mathfrak{sl}(m|n) \bigr)
        \cong H \bigl( \mathfrak{sl}_{\max\{m,n\}} \bigr) .
\end{equation}
\end{corollary}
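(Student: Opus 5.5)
I will run the Hochschild--Serre spectral sequence for the pair $\bigl(\mathfrak{sl}(m|n), \mathfrak{sl}(m|n)_0\bigr)$ with $m\neq n$, in the same way as the proof of \Cref{thm: psl trivial coefficients}. Assume without loss of generality $m>n$. The bosonic part $\mathfrak{sl}_m\oplus\mathfrak{sl}_n\oplus\mathfrak{u}_1$ is reductive, and $\mathfrak{u}_1$ acts semisimply on $U$, so the decomposition \eqref{2 1st page decomposition} still applies. By \Cref{prop: sl mn invariants}, the first page is
\begin{equation*}
    E_1^{p,q} \cong H^q\bigl(\mathfrak{sl}_m\oplus\mathfrak{sl}_n\oplus\mathfrak{u}_1\bigr)\otimes \mathbb{K}\bigl[\eta_{2k}\,\big|\,1\leqslant k\leqslant n\bigr]^p .
\end{equation*}
This is the same page as for $\mathfrak{gl}(m|n)$: its fermionic generators are $\varphi'_1,\varphi'_3,\ldots,\varphi'_{2m-1}$ and $\varphi''_1,\ldots,\varphi''_{2n-1}$, with the two degree-one classes now identified.

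In $\mathfrak{sl}(m|n)$ the supertrace vanishes identically, so $\psi_1=\varphi'_1-\varphi''_1$ is zero on $\mathfrak{sl}(m|n)_0$. The unique degree-one generator is then the $\mathfrak{u}_1$-ghost $\lambda$, which is proportional to $\varphi'_1+\varphi''_1$.

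Next I compute the differentials. For $k\geqslant 2$, the relation \eqref{2 Q psl vs Q gl} (with $\pi$ now the projection onto $\mathfrak{sl}(m|n)$) carries the $\mathfrak{gl}(m|n)$ transgressions over unchanged. Thus $d_{2k}\varphi'_{2k-1}=d_{2k}\varphi''_{2k-1}=\eta_{2k}$ for $2\leqslant k\leqslant n$, while $\varphi'_{2k-1}$ for $n<k\leqslant m$ transgresses to zero.

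For the $\mathfrak{u}_1$ generator I will compute directly with \eqref{2 BRST gl with indices}. Write $\lambda$ through $a=\tilde a+\frac{\lambda}{m}\mathbb 1$ and $b=\tilde b+\frac{\lambda}{n}\mathbb 1$, so that $\lambda$ is proportional to $\operatorname{tr}(a)+\operatorname{tr}(b)$ restricted to $\mathfrak{sl}(m|n)$. Applying $Q$ gives $-\operatorname{tr}(\gamma^-\gamma^+)-\operatorname{tr}(\gamma^+\gamma^-)=-2\eta_2$ up to normalisation. Hence $d_2\lambda$ is a nonzero multiple of $\eta_2$, so $\eta_2$ is killed and no degree-one class survives.

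The main obstacle is the assembly step: I must justify that $E_\infty$ is exactly the exterior algebra on $\psi_3,\ldots,\psi_{2n-1},\varphi'_{2n+1},\ldots,\varphi'_{2m-1}$. I plan to use a Koszul-type argument, as in Fuks' treatment of $\mathfrak{gl}(m|n)$. Choose new generators
\begin{equation*}
    \lambda,\quad \varphi'_{2k-1}\ (2\leqslant k\leqslant n),\quad \psi_{2k-1}\ (2\leqslant k\leqslant n),\quad \varphi'_{2k-1}\ (n<k\leqslant m).
\end{equation*}
The first two families pair off with the free polynomial generators $\eta_2,\eta_4,\ldots,\eta_{2n}$. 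Each such pair forms an acyclic Koszul model $\Lambda[x]\otimes\mathbb{K}[y]$ with $x\mapsto y$. The remaining generators are permanent cycles, since they transgress to zero.

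Because $\eta_2,\ldots,\eta_{2n}$ form a regular sequence, I expect the spectral sequence to be modelled by the tensor product of these pieces. Taking cohomology then leaves $\Lambda\bigl(\psi_3,\ldots,\psi_{2n-1},\varphi'_{2n+1},\ldots,\varphi'_{2m-1}\bigr)$, with generators in degrees $3,5,\ldots,2m-1$. This is $H(\mathfrak{sl}_{\max\{m,n\}})$.

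Finally, multiplicativity of the spectral sequence and freeness of the exterior algebra rule out extension problems. The isomorphism is compatible with restriction to the subalgebra $\mathfrak{sl}_m$, which identifies the generators with $\operatorname{str}(c_0^{2k-1})$, and this gives the stated isomorphism.
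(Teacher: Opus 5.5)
Your proposal is correct and follows the paper's own route: the Hochschild--Serre spectral sequence with the same invariant ring $\mathbb{K}[\eta_2,\ldots,\eta_{2\min\{m,n\}}]$ as for $\mathfrak{gl}(m|n)$, the $\mathfrak{u}_1$-ghost $\lambda\propto\varphi'_1+\varphi''_1$ transgressing to a nonzero multiple of $\eta_2$, and the higher transgressions inherited from Fuks. Your page-by-page Koszul bookkeeping (together with your correct remark that $\mathfrak{sl}(m|n)_0$ is reductive rather than semisimple) makes explicit the assembly of $E_\infty$ that the paper leaves implicit when it says it is ``recovering the result of Fuks''.
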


We would like to stress that Theorem 2.6.1 in \cite{Fuks_1986} is correct for $\mathfrak{gl}(m|n)$ for any $m,n\in\mathbb{N}$, and correct for $\mathfrak{sl}(m|n)$ \emph{as long as} $m \neq n$. When $m=n$, which is physically more relevant as discussed in \Cref{sec: intro}, we also have bosonic ghost polynomials generating of cohomology.

\section{Cohomology in the adjoint representation} \label{sec: adjoint coefficients}

In this Section, we will compute the cohomology of $\mathfrak{psl}(n|n)$ --- henceforth simply denoted $\mathfrak{g}$ --- in the adjoint representation, viz.
\begin{equation*}
    H^\bullet_{\mathrm{ad}} ( \mathfrak{g} )
        \equiv H^\bullet ( \mathfrak{g} ; \mathfrak{g} ) .
\end{equation*}
As before, to keep track of each bosonic subsector of $\mathfrak{g}$ (and their index structure), we will postpone the imposition of $m=n$ till later. This means that, in the next Subsection, we will effectively work with\footnote{
    This does not apply to $\mathfrak{sl}(m|n)$ directly, since there is a missing $\mathfrak{u}_1$ part in $\mathfrak{g}_0$; see \eqref{2 isomorphism sl m n 0}.
}
\begin{align*}
    \mathfrak{g}_0
        &\coloneqq \mathfrak{sl}_m \oplus \mathfrak{sl}_n ,
    \\
    \mathfrak{g}_1
        &\coloneqq U^+ \oplus U^- \equiv U .
\end{align*}

We will do as follows.
\begin{itemize}
    \item In \Cref{subsec: covariants}, we construct explicitly the covariant maps that appear in the \nth{1} page of the HS spectral sequence for $(\mathfrak{g},\mathfrak{g}_0;\mathrm{ad})$. This computation is split into three parts.

    \item In \Cref{subsec: 1st page HS adjoint}, we give the full description of the module of covariant maps over invariant polynomials, analyzing their non-trivial relations.

    \item In order to compute the next pages of the HS spectral sequence, we need to understand how the HS differentials act; this is done in detail in \Cref{subsec: BRST on covariants}.
    
    \item In \Cref{subsec: 2nd page HS adjoint}, we compute the \nth{2} and higher pages of the HS spectral sequence, arriving at the final result for the cohomology of $\mathfrak{psl}(n|n)$ in adjoint coefficients.
\end{itemize}

\subsection{\texorpdfstring{$\mathfrak{g}_0$}{g0}-invariant \texorpdfstring{$\mathfrak{g}$}{g}-valued polynomials in ghost coordinates of \texorpdfstring{$\mathfrak{g}_1$}{g1}} \label{subsec: covariants}

In order to use the Hochschild--Serre spectral sequence \eqref{2 def spectral sequence pages}, we need to understand the ``$\mathfrak{g}_0$-invariant $\mathfrak{g}$-valued polynomials in ghost coordiantes of $\mathfrak{g}_1$'' --- or, using the terminology of \Cref{sec: invariant theory}, the $\mathfrak{g}_0$-covariants of $\mathbb{K}[U]$ of type $\mathfrak{g}$.

If we let
\begin{equation*}
    \mathcal{I}
        \coloneqq \operatorname{Inv}_{\mathfrak{g}_0}
            \operatorname{Hom}\bigl( \mathrm{S}^\bullet(U) , \mathfrak{g} \bigr)
        \cong \operatorname{Inv}_{\mathfrak{g}_0} \bigl( \mathbb{K}[U] \otimes \mathfrak{g} \bigr) ,
\end{equation*}
the \nth{1} page of the HS spectral sequence for $(\mathfrak{g},\mathfrak{g}_0;\mathrm{ad})$ is
\begin{equation*}
    E_1^{p,q}
        \cong H^q (\mathfrak{g}_0) \otimes \mathcal{I}^p ,
\end{equation*}
where $\bigoplus_p \mathcal{I}^p$ is the natural $\mathbb{N}_0$-grading of $\mathcal{I}$ by polynomial degree.

The Lie superbracket is a map of degree $0$, so the action of $\mathfrak{g}_0$ on $\mathfrak{g} = \mathfrak{g}_0 \oplus \mathfrak{g}_1$ does not mix components, and we can make the further decomposition
\begin{align*}
    \mathcal{I}
        \cong \mathcal{I}_{0} \oplus \mathcal{I}_1 ,
    \qquad\text{where}\quad
    \mathcal{I}_i
        \coloneqq \operatorname{Inv}_{\mathfrak{g}_0} \bigl( \mathbb{K}[U] \otimes \mathfrak{g}_i \bigr)
        \quad \text{and} \quad i\in\mathbb{Z}_2 .
\end{align*}

\bigskip

As one might expect from the discussion about trivial coefficients in \Cref{subsec: invariant polynomials}, we will have trace-like and determinant-like covariants. Since the latter are special to $\mathfrak{sl}(m|n)$ only when $m=n$, they are presented last. Indeed, it all boils down to the following.

\begin{theorem}[tensor FFT for $\mathrm{SL}_n$ \cite{Erickson_2025, Schrijver_2008}]
Let $V = \mathbb{K}^n$ and $\mathcal{T}^{p,q}(V) = V^{\otimes p} \otimes (V^\ast)^{\otimes q}$ be the space of mixed tensors of type $(p,q)$ over $V$. A basis for $\bigl( \mathcal{T}^{p,q}(V) \bigr)^{\operatorname{SL}_n}$ is composed of tensors whose coefficients are products, $(\mathcal{S}_p \times \mathcal{S}_q)$-permutations and contractions of
\begin{align*}
    \delta\indices{^i_j}  ,
    \qquad \epsilon\indices{^{i_1}^\ldots^{i_n}} ,
    \qquad \epsilon\indices{_{j_1}_\ldots_{j_n}} ,
\end{align*}
with all indices ranging in $\{1,\ldots,n\}$.
\end{theorem}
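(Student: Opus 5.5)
The plan is to deduce the tensor statement from the vector-space FFT already recorded as \Cref{thm: fft}, by polarization and multilinearization. Identify $\mathcal{T}^{p,q}(V)$ with the space of multilinear functions on $(V^\ast)^{p}\oplus V^{q}$. Such a function is simply a polynomial on $(V^\ast)^p \oplus V^q$ that is multihomogeneous of degree $(1,\ldots,1;1,\ldots,1)$. The action of $\mathrm{SL}_n$ preserves this multigrading. Consequently, an $\mathrm{SL}_n$-invariant tensor is exactly a multilinear element of $\mathbb{K}[(V^\ast)^p \oplus V^q]^{\mathrm{SL}_n}$.

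\Cref{thm: fft}, applied with the roles of $V$ and $V^\ast$ interchanged, states that this invariant ring is generated by three kinds of functions: the contractions $\langle i|j\rangle$, the determinants $[i_1,\ldots,i_n]^\uparrow$ of $n$ covectors, and the determinants $[j_1,\ldots,j_n]^\downarrow$ of $n$ vectors. Each generator is itself multihomogeneous. Taking the multidegree-$(1,\ldots,1;1,\ldots,1)$ component of an arbitrary polynomial in the generators leaves only monomials in which every argument appears exactly once, so the multilinear invariants are spanned by such monomials.

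Next, read each generator in index form. The contraction $\langle i|j\rangle$ is given by the tensor $\delta^{a}{}_{b}$. The two determinants are $\epsilon^{a_1\ldots a_n}$ and $\epsilon_{b_1\ldots b_n}$ contracted against the relevant slots. A monomial that uses each slot exactly once is therefore a product of $\delta$'s and $\epsilon$'s whose free indices are assigned to slots by some permutation in $\mathcal{S}_p\times\mathcal{S}_q$. Contractions among these factors are also permitted, although by the previous paragraph they are not needed to produce a spanning set.

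The main obstacle is the passage from spanning set to basis. The SFT (\Cref{thm: sft}) shows that the permuted products are linearly dependent; for example, an $\epsilon\epsilon$ product equals a determinant of $\delta$'s by \eqref{3 SFT e}, and the Plücker-type relations also hold. I would therefore read the theorem as asserting that a basis can be chosen from among such tensors, and this follows at once because any spanning set contains a basis. Producing an explicit basis, such as the one given by web or standard-tableau bases, is a harder task, and for it I would rely on the cited references \cite{Erickson_2025, Schrijver_2008}.

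A second point is that $\mathbb{K}$ need not be algebraically closed. Here I would use the fact that $\mathbb{K}$ is infinite, so that $\mathrm{SL}_n(\mathbb{K})$ is Zariski-dense and its invariants agree with those over $\overline{\mathbb{K}}$. Since the spanning tensors are defined over $\mathbb{Q}$, this agreement descends to $\mathbb{K}$.
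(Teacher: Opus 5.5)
Your argument is correct. It does not follow the paper, because the paper gives no proof of this statement: it imports the tensor FFT wholesale from \cite{Erickson_2025, Schrijver_2008}, where it is established directly at the tensor level. Schrijver's framework of tensor subalgebras closed under products, permutations of factors and contractions is where the statement's phrasing comes from.

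You instead derive it inside the paper from the polynomial FFT, \Cref{thm: fft}. Invariant tensors are the multidegree-$(1,\ldots,1;1,\ldots,1)$ part of $\mathbb{K}[(V^\ast)^p\oplus V^q]^{\mathrm{SL}_n}$. Since the generators are multihomogeneous, that part is spanned by monomials using each argument exactly once, i.e.\ by slot-permuted products of $\delta$'s and $\epsilon$'s. This is the easy direction of polarization (no restitution is needed).

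What your route buys is economy and consistency with the paper:
\begin{itemize}
\item it uses only a theorem the paper already quotes;
\item it shows that contractions are not even needed for spanning;
\item your reading of ``basis'' as ``a basis extracted from this spanning family'' matches the paper's own remark that $\epsilon\epsilon=\det(\delta)$ makes the family dependent;
\item your Zariski-density descent from $\overline{\mathbb{K}}$ is sound, though redundant if \Cref{thm: fft} is already granted over the paper's $\mathbb{K}$.
\end{itemize}

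What it does not buy is independence or explicitness. In many textbook treatments the polynomial FFT is itself deduced from the multilinear one. Your derivation is therefore non-circular only because \Cref{thm: fft} is used as an independently proved black box (for instance via Weyl's Capelli-identity argument). As you note, it also yields a spanning family rather than an explicit basis. The cited tensor-level approach, by contrast, stands on its own and, in Schrijver's case, places several classical groups in a single framework.
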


Mind that there is a non-trivial relation between the Kronecker deltas and the Levi-Civita symbols, namely
\begin{equation*}
    \epsilon\indices{^{i_1}^\ldots^{i_n}} \, \epsilon\indices{_{j_1}_\ldots_{j_n}}
        = \begin{vmatrix}
            \delta\indices{^{i_1}_{j_1}}
                & \cdots
                & \delta\indices{^{i_1}_{j_n}}
            \\
            \vdots
                & \ddots
                & \vdots
            \\
            \delta\indices{^{i_n}_{j_1}}
                & \cdots
                & \delta\indices{^{i_n}_{j_n}}
        \end{vmatrix}
\end{equation*}
--- which is the analogue of \eqref{3 SFT e} for tensors.

In the following Subsubsections, we will give an intuitive approach on how to find these invariant tensors. The tensors whose coefficients are entirely in terms of Kronecker deltas are called \emph{trace-like}; the remaining are called \emph{determinant-like}.

\subsubsection{Description of \texorpdfstring{$\mathcal{I}_1^p$}{I1p}}

Let us first take a look at maps of the type $\mathrm{S}^p (U) \longrightarrow \mathfrak{g}_1 \equiv U$.

Given the index structure of $U = U^+ \oplus U^-$ in \eqref{2 index structure of U}, it is impossible to send two elements of $U$ and to another one in $U$, i.e.
\begin{equation*}
    \mathcal{I}_1^{2k} = 0 .
\end{equation*}

For $p=1$, there are the two $\mathfrak{g}_0$-covariant maps:
\begin{equation} \label{4 ad-invariant i=1 p=1}
    \gamma^\pm \colon (u^+,u^-) \longmapsto u^\pm .
\end{equation}
Up to linear combinations, these are the only elements in $\mathcal{I}_1^1 \cong \operatorname{Hom}(U,U)^{\mathfrak{g}_0}$. It is easy to convince oneself once we write $U = U^+ \oplus U^-$ and apply Schur's Lemma on each component of a map $U \longrightarrow U$ \cite{Fulton-Harris_2004}.

For $p=3$, the maps $\eta_2 \gamma^\pm$ are $\mathfrak{g}_0$-covariants because $\eta_2 \equiv \operatorname{tr}(\gamma^+\gamma^-)$ is an invariant polynomial, and $\gamma^\pm$ are covariant maps. Besides, the maps
\begin{equation*}
    \gamma^\pm \gamma^\mp \gamma^\pm
        \colon (u^+,u^-) \longmapsto u^\pm u^\mp u^\pm
\end{equation*}
are also covariants. This feature will happen again at each step, hence
\begin{subequations} \label{4 I1 contains R-combinarions of odd powers}
\begin{align}
    \mathcal{I}_1
        &\supseteq \mathcal{R} \bigl\{
            \gamma^\pm ,\
            \gamma^\pm \gamma^\mp \gamma^\pm ,\
            \ldots \bigr\} ,
    \\
    \mathcal{I}_1^{2s+1}
        &\supseteq \bigoplus_{t=0}^s \mathcal{R}^{2(s-t)} \bigl\{ (\gamma^\pm \gamma^\mp)^t \, \gamma^\pm \bigr\} .
\end{align}
\end{subequations}

Insofar as trace-like polynomials are considered (for which we will use a tilde), the inclusion $\supseteq$ in \eqref{4 I1 contains R-combinarions of odd powers} is an equality. Indeed, consider we want a map of order $(2s+1)$ that cannot be split into a product of lower-degree map with invariant polynomials\footnote{They shall be called \emph{pure covariants}, in contrast to the \emph{composite covariants}.}; then we must have all $2s+1$ matrix entries contracted together, which must be done with Kronecker deltas.

Explicitly, for small values of $p$,
\begin{alignat*}{5}
    \widetilde{\mathcal{I}}_1^1
        &= \mathbb{K}\bigl\{ \gamma^\pm \bigr\}
        &&
        &&
        &&\cong \mathbb{K}^2 ,
    \\
    \widetilde{\mathcal{I}}_1^3
        &= \mathbb{K}\bigl\{
                M \, \gamma^- ,\ 
                N \, \gamma^+ \bigr\}
            \;&&\oplus\; \eta_2 \cdot \widetilde{\mathcal{I}}_1^1
            &&
        &&\cong \mathbb{K}^4 ,
    \\
    \widetilde{\mathcal{I}}_1^5
        &= \mathbb{K}\bigl\{
                M^2 \, \gamma^- ,\
                N^2 \, \gamma^+ \bigr\}
            \;&&\oplus\; \eta_2 \cdot \widetilde{\mathcal{I}}_1^3 
            \;&&\oplus\; \eta_4 \cdot \widetilde{\mathcal{I}}_1^1
        &&\cong \mathbb{K}^8 ,
    \\
    \widetilde{\mathcal{I}}_1^7
        &= \mathbb{K}\bigl\{
                M^3 \, \gamma^- ,\
                N^3 \, \gamma^+ \bigr\}
            \;&&\oplus\; \eta_2 \cdot \widetilde{\mathcal{I}}_1^5 
            \;&&\oplus\; \eta_4 \cdot \widetilde{\mathcal{I}}_1^3
            \;\oplus\; \eta_6 \cdot \widetilde{\mathcal{I}}_1^1
        \quad&&\cong \mathbb{K}^{16} ,
    \\
    \text{etc.},
\end{alignat*}
where
\begin{align} \label{4 def M and N}
    M \equiv \gamma^- \gamma^+ \in \mathfrak{gl}_m
    \qquad\text{and}\qquad
    N \equiv \gamma^+ \gamma^- \in \mathfrak{gl}_n . 
\end{align}

While $s\leqslant\min\{m,n\}$, we have new invariants $\eta_{2s}$ that contribute to new composite covariants; after that, they are no longer algebraically independent, but we can still construct higher order covariants in that way, hence
\begin{equation*}
    \dim \widetilde{\mathcal{I}}_1^{2s+1}
        = 2^{s+1} .
\end{equation*}

\subsubsection{Description of \texorpdfstring{$\mathcal{I}_0^p$}{I0p}}

For the maps of type $\mathrm{S}^p(U) \longrightarrow \mathfrak{g}_0$, we can further split $\mathfrak{g}_0 \cong \mathfrak{sl}_m \oplus \mathfrak{sl}_n$ and focus only in one sector, say
    $\mathrm{S}^p(U) \longrightarrow \mathfrak{sl}_m$ .
Everything works analogously for $\mathfrak{sl}_n$.

Similarly to the previous Subsection, there is no way of arranging an odd number of $\gamma^+$'s and $\gamma^-$'s and land on $\mathfrak{sl}_m$, thus
\begin{equation}
    \mathcal{I}_0^{2s+1} = 0 .
\end{equation}
The only way to do so is by taking $M \equiv \gamma^- \gamma^+ \in \mathfrak{gl}_m$ (and higher powers) and projecting it onto $\mathfrak{sl}_m$ by removing its trace. Namely, if
\begin{alignat*}{3}
    \pi_0 \colon \mathfrak{gl}(V_0) &\longtwoheadrightarrow \mathfrak{sl}(V_0) 
    \\
    M &\longmapsto M - \tfrac{1}{m} \operatorname{tr}(M) \, \mathbb{1}_m
\end{alignat*}
is said projection, the maps
\begin{alignat*}{3}
    \pi_0(M^s) 
        = M^s - \tfrac{1}{m} \operatorname{tr}(M^s) \, \mathbb{1}_m
\end{alignat*}
are covariants for every $s\in\mathbb{N}$.

If $\pi_1 \colon \mathfrak{gl}(V_1) \longtwoheadrightarrow \mathfrak{sl}(V_1)$ is the other projection, then
\begin{equation}
    \mathcal{I}_0
        \supseteq \mathcal{R} \bigl\{ \pi_0(M^s) , \pi_1(N^s) \, \big| \, s\in\mathbb{N} \bigr\} .
\end{equation}
Again, as far as we consider trace-like covariants, the above inclusion is an equality. For small $p$, we have
\begin{alignat*}{5}
    \widetilde{\mathcal{I}}_0^2
        &= \mathbb{K}\bigl\{
                \pi_0(M) ,\
                \pi_1(N) \bigr\}
            &&
        &&\cong \mathbb{K}^2 ,
    \\
    \widetilde{\mathcal{I}}_0^4
        &= \mathbb{K}\bigl\{
                \pi_0(M^2) ,\ 
                \pi_1(N^2) \bigr\}
            \;&&\oplus\;
            \eta_2 \cdot \widetilde{\mathcal{I}}_0^2
        &&\cong \mathbb{K}^4 ,
    \\
    \widetilde{\mathcal{I}}_0^6
        &= \mathbb{K}\bigl\{
                \pi_0(M^3) ,\ 
                \pi_1(N^3) \bigr\}
            \;&&\oplus\;
            \eta_2 \cdot \widetilde{\mathcal{I}}_0^4
            \;\oplus\;
            \eta_4 \cdot \widetilde{\mathcal{I}}_0^2
        &&\cong \mathbb{K}^8 ,
    \\
    \widetilde{\mathcal{I}}_0^8
        &= \mathbb{K}\bigl\{
                \pi_0(M^4) ,\ 
                \pi_1(N^4) \bigr\}
            \;&&\oplus\;
            \eta_2 \cdot \widetilde{\mathcal{I}}_0^6
            \;\oplus\;
            \eta_4 \cdot \widetilde{\mathcal{I}}_0^4
            \;\oplus\;
            \eta_6 \cdot \widetilde{\mathcal{I}}_0^2
        \quad&&\cong \mathbb{K}^{16} ,
    \\
    \text{etc.}
\end{alignat*}
and
\begin{align*}
    \dim \widetilde{\mathcal{I}}_0^{2s} = 2^s .
\end{align*}

\subsubsection{Determinant-like covariants}

Recall that the determinant-like polynomials $\delta^\pm$ are elements of $\mathcal{R} \equiv \operatorname{Inv}_{\mathfrak{g}_0} \mathbb{K}[U]$ only when $m=n$. In that case, we will have two special covariants of degree $n-1$: the \emph{adjugate matrices} $\mu^\pm$ of $\gamma^\pm$, defined by
\begin{align} \label{4 def adjugate}
    \mu^\pm \gamma^\pm
    = \gamma^\pm \mu^\pm    
        = \det(\gamma^\pm) \, \mathbb{1}_n .
\end{align}
Since $\gamma^\pm \in U^\pm$, we must have $\mu^\pm \in U^\mp$. 

The RHS of \eqref{4 def adjugate} is $\mathfrak{g}_0$-invariant, because $\det(\gamma^\pm)$ is a $\mathfrak{g}_0$-invariant polynomial. Taking $g\in\mathrm{SL}(V_0)$ and $h\in\mathrm{SL}(V_1)$, we know that $(g,h) \vartriangleright \gamma^+ = h \gamma^+ g^{-1}$; it is straightforward
to check that
\begin{equation*}
    (g,h) \vartriangleright \mu^+
        = g \, \mu^+ \, h^{-1} .
\end{equation*}
But that is precisely the action of $(g,h)$ on an element of $U^-$, thus the map
\begin{align*}
    \mu^+ \colon \mathrm{S}^{n-1}(U^+) &\longrightarrow U^-
\end{align*}
is a $\mathfrak{g}_0$-covariant. Similarly for $\mu^-$.

\subsection[The \texorpdfstring{\nth{1}}{1st} page of HS spectral sequence]{The \texorpdfstring{\nth{1}}{1st} page of the Hochschild--Serre spectral sequence} \label{subsec: 1st page HS adjoint}

From the discussion in the previous Subsection, we can characterize the space $\mathcal{I} = \bigoplus_{p>0} \operatorname{Inv}_{\mathfrak{g}_0} \operatorname{Hom}\bigl( \mathrm{S}^p (U) , \mathfrak{g} \bigr)$ as the space of $\mathcal{R}$-linear combinations of the covariant maps
\begin{subequations} \label{4 covariant maps}
\begin{alignat}{4}
    \pi_0 (\gamma^- \gamma^+)^s
        &\colon&
        \mathrm{S}^{2s}(U) &\longrightarrow \mathfrak{sl}(V_0) ,
    \\
    \pi_1 (\gamma^+ \gamma^-)^s
        &\colon&
        \mathrm{S}^{2s}(U) &\longrightarrow \mathfrak{sl}(V_1) ,
    \\
    \Gamma_\pm^s \equiv (\gamma^\pm \gamma^\mp)^s \, \gamma^\pm
        &\colon&
        \mathrm{S}^{2s+1}(U) &\longrightarrow U ,
    \\
    \mu^\pm
        &\colon&
        \mathrm{S}^{n-1}(U^\pm) &\longrightarrow U^\mp ,
\end{alignat}
\end{subequations}
for $s\in\mathbb{N}$.

The $q=0$ row of the HS spectral sequence for $\mathfrak{psl}(n|n)$ in adjoint coefficients is given by $\mathcal{I}^p$. The $p=0$ column is, as before, ``given'' by the cohomology ring $H^q(\mathfrak{g}_0)$ --- but there are no \nth{0} degree covariants. From \eqref{2 1st page decomposition}, we have the full description of $E_1$ as the bi-graded tensor product of $H^\bullet(\mathfrak{g}_0)$ and $\mathcal{I}^\bullet$. This tensor structure is preserved by the differentials, so that we can focus only on the generators aforementioned.

\Cref{fig2} shows a schematization of the \nth{1} page with all the generators, \emph{up to coefficients in $\mathcal{R}$}. Since the elements of $\mathcal{R}$ are $\mathfrak{g}_0$-invariant polynomials, they are in the kernel of the differentials and do not affect transgressions.

\begin{figure}[!ht]
\centering
\begin{tikzpicture}
\begin{axis}[
    axis lines = middle,
    axis line style = {->, >=Stealth},
    xlabel = {$p$},
    ylabel = {$q$},
    x label style = {anchor = west},
    y label style = {anchor = south},
    xmin = -0.5,    xmax = 9,
    ymin = -0.5,    ymax = 10,
    xtick = {1, 2, 3, 4, 7},
    xticklabels = {$1$, $2$, $3$, $4$, $n-1$},
    ytick = {1, 3, 7, 9},
    yticklabels = {$1$, $3$, $2n-3$, $2n-1$},
    grid = none,
    axis x line = bottom,
    axis y line = left,
    x axis line style = {draw=none},
    y axis line style = {draw=none},
    label style = {font=\small},
]


\draw (axis cs:-1,0) -- (axis cs:4.9,0);
\draw[->, >=Stealth] (axis cs:5.1,0) -- (axis cs:9,0);
\draw[] (axis cs:4.9,-0.2) -- (axis cs:4.9,+0.2);
\draw[] (axis cs:5.1,-0.2) -- (axis cs:5.1,+0.2);

\draw (axis cs:0,-1) -- (axis cs:0,4.9);
\draw[->, >=Stealth] (axis cs:0,5.1) -- (axis cs:0,10);
\draw[] (axis cs:-0.2,4.9) -- (axis cs:+0.2,4.9);
\draw[] (axis cs:+0.2,5.1) -- (axis cs:-0.2,5.1);

\filldraw[black, fill=white] (0,1.1) circle (1.5pt);
    \draw (0,1.4) node[anchor=west] {\scriptsize$\varphi'_1$};
\filldraw[black, fill=white] (0,0.9) circle (1.5pt);
    \draw (0,0.6) node[anchor=west] {\scriptsize$\varphi''_1$};

\filldraw[black, fill=white] (0,3.1) circle (1.5pt);
    \draw (0,3.4) node[anchor=west] {\scriptsize$\varphi'_3$};
\filldraw[black, fill=white] (0,2.9) circle (1.5pt);
    \draw (0,2.6) node[anchor=west] {\scriptsize$\varphi''_3$};

\filldraw[black, fill=white] (0,7.1) circle (1.5pt);
    \draw (0,7.4) node[anchor=west] {\scriptsize$\varphi'_{2n-3}$};
\filldraw[black, fill=white] (0,6.9) circle (1.5pt);
    \draw (0,6.6) node[anchor=west] {\scriptsize$\varphi''_{2n-3}$};

\filldraw[black, fill=white] (0,9.1) circle (1.5pt);
    \draw (0,9.4) node[anchor=west] {\scriptsize$\varphi'_{2n-1}$};
\filldraw[black, fill=white] (0,8.9) circle (1.5pt);
    \draw (0,8.6) node[anchor=west] {\scriptsize$\varphi''_{2n-1}$};

\filldraw[cyan] (0.9,0) circle (1.5pt);
\filldraw[cyan] (1.1,0) circle (1.5pt);
    \draw (1,0.5) node[cyan] {\footnotesize$\Gamma_\pm^1$};

\filldraw[black] (1.9,0) circle (1.5pt);
\filldraw[black] (2.1,0) circle (1.5pt);
    \draw (2,0.5) node {\footnotesize$\pi_{0/1}^1$};

\filldraw[cyan] (2.9,0) circle (1.5pt);
\filldraw[cyan] (3.1,0) circle (1.5pt);
    \draw (3,0.5) node[cyan] {\footnotesize$\Gamma_2^\pm$};

\filldraw[black] (3.9,0) circle (1.5pt);
\filldraw[black] (4.1,0) circle (1.5pt);
    \draw (4,0.5) node {\footnotesize$\pi_{0/1}^2$};

\filldraw[black] (5.9,0) circle (1.5pt);
\filldraw[black] (6.1,0) circle (1.5pt);

\filldraw[black] (6.9,-0.125) circle (1.5pt);
\filldraw[black] (7.1,-0.125) circle (1.5pt);
\filldraw[red] (6.9,0.125) circle (1.5pt);
\filldraw[red] (7.1,0.125) circle (1.5pt);
    \draw (7,0.5) node[red] {\footnotesize$\mu_\pm$};

\filldraw[black] (7.9,0) circle (1.5pt);
\filldraw[black] (8.1,0) circle (1.5pt);


\end{axis}
\end{tikzpicture}
\caption{\label{fig2}Schematization of the \nth{1} page of the Hochschild--Serre spectral sequence for $\mathfrak{psl}(n|n)$ in the adjoint representation; full dots represent the generators of $\mathcal{I}$ up to coefficients in $\mathcal{R}$ --- where $\pi_0^k$ is a shorthand notation for $\pi_0(M^k)$ etc. The fermionic ghost generators $\varphi'_{2k-1},\varphi''_{2k-1} \in E_1^{0,2k-1}$ are the same as before; however, as there is no \nth{0} order covariant map, they themselves are not present in the \nth{1} page, but their products with the covariants are. The bosonic ghost generators (the covariant maps) appear in pairs for each $E_1^{p,0}$, except for $p = n-1$, where the adjugate matrices also appear.}
\end{figure}

The space $E_1^{2n-1,0}$ is composed of the odd-degree trace-like covariants $\Gamma^\pm_{n-1}$, but also lower degree covariants multiplied by invariant polynomials in $\mathcal{R}$. However, $\mathcal{I}$ is not a free module over $\mathcal{R}$.

\begin{proposition} \label{prop delta mu}
The composite covariants $\delta^\pm \mu^\mp$ are $\widetilde{\mathcal{R}}$-linear combinations of the trace-like pure covariants, with $\widetilde{\mathcal{R}} \coloneqq \mathbb{K}\bigl[ \eta_{2} ,\ \ldots ,\ \eta_{2n} \bigr]$. 
\end{proposition}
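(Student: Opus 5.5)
We prove the statement by an explicit matrix identity built from the adjugate and the Cayley--Hamilton theorem. Work with $m=n$, so that $\gamma^\pm$ are square $n\times n$ matrices and $N=\gamma^+\gamma^-$, $M=\gamma^-\gamma^+$ are $n\times n$. Consider $\delta^+\mu^-$ first; the other case follows by exchanging $+\leftrightarrow-$. Since $\mu^-$ is the adjugate of $\gamma^-$, the defining relation \eqref{4 def adjugate} gives $\gamma^-\mu^-=\delta^-\mathbb{1}_n$. Multiplying by the adjugate of $\gamma^+$, we get
\begin{equation*}
    \delta^+\mu^- = \mu^-\gamma^+\mu^+ = \operatorname{adj}(\gamma^+\gamma^-)\,\gamma^+ .
\end{equation*}
Here we use the multiplicativity $\operatorname{adj}(AB)=\operatorname{adj}(B)\operatorname{adj}(A)$, which holds as a polynomial identity, so no invertibility assumption is needed. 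Thus $\delta^+\mu^- = \operatorname{adj}(N)\,\gamma^+$, and similarly $\delta^-\mu^+=\operatorname{adj}(M)\,\gamma^-$.

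Next, express $\operatorname{adj}(N)$ as a polynomial in $N$ via Cayley--Hamilton:
\begin{equation*}
    \operatorname{adj}(N) = (-1)^{n-1}\sum_{k=0}^{n-1} c_{k}(N)\,N^{n-1-k},
\end{equation*}
where the $c_k(N)$ are, up to sign, the elementary symmetric functions of the eigenvalues of $N$ (with $c_0=1$). By Newton's identities over a field of characteristic zero, each $c_k$ is a polynomial with rational coefficients in the power sums $\operatorname{tr}(N^j)=\eta_{2j}$, $j\leqslant k\leqslant n-1$. Substituting gives
\begin{equation*}
    \delta^+\mu^- = \sum_{k} \pm\, c_k(\eta_2,\ldots,\eta_{2k})\,N^{n-1-k}\gamma^+ = \sum_k \pm\, c_k\,\Gamma_+^{\,n-1-k},
\end{equation*}
which is an $\widetilde{\mathcal{R}}$-linear combination of the trace-like pure covariants $\Gamma^s_+=(\gamma^+\gamma^-)^s\gamma^+$. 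The same argument gives the analogous expression for $\delta^-\mu^+$ in terms of $\Gamma^s_-$ with $M$.

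The only real obstacle is ensuring that the adjugate manipulations are legitimate as identities of polynomial maps rather than only for invertible $\gamma^\pm$. I would handle this by noting that both sides are polynomial in the entries of $\gamma^\pm$. They agree on the Zariski-dense set where $\gamma^\pm$ are invertible, since there $\mu^\pm=\delta^\pm(\gamma^\pm)^{-1}$ and $\operatorname{adj}(N)=\det(N)N^{-1}$. Because $\mathbb{K}$ is infinite, agreement on this dense set gives equality of polynomials. Recording the explicit leading term, $\delta^+\mu^-=\Gamma_+^{n-1}-\eta_2\Gamma_+^{n-2}+\cdots$, will be useful later for the adjoint relation in \Cref{thm: psl adjoint coefficients}.
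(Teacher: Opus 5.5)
Your proof is correct and is essentially the paper's argument. Both reduce $\delta^\pm\mu^\mp$ to Cayley--Hamilton for $N=\gamma^+\gamma^-$ (or $M=\gamma^-\gamma^+$) together with Newton's identities: the paper shows $f\gamma^+=0$ and cancels $\gamma^+$ on the invertible locus, while you package the same cancellation as $\delta^+\mu^-=\operatorname{adj}(N)\,\gamma^+$, and your Zariski-density justification avoids the topological limit the paper uses.

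There is one slip in your closing aside, and it does not affect the proposition itself. Since $\operatorname{adj}(N)=(-1)^{n-1}\bigl(N^{n-1}-\eta_2 N^{n-2}+\cdots\bigr)$, the expansion you record needs an overall factor $(-1)^{n-1}$; for example, $\delta^+\mu^-=\eta_2\Gamma^0_+-\Gamma^1_+$ when $n=2$.
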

\begin{proof}
Consider the $n \times n$ matrix $M \equiv \gamma^- \gamma^+ \in \mathfrak{gl}(V_0)$ and define the function
\begin{equation*}
    f \equiv \Gamma\indices*{^-_{n-1}}
        + c\indices{_{n-1}} \Gamma\indices*{^-_{n-2}}
        + \cdots
        + c\indices{_1} \Gamma\indices*{^-_0}
        + (-1)^n \, \delta^- \mu^+ 
        \colon \mathrm{S}^{n-1}(U) \longrightarrow U^- ,
\end{equation*}
where the functions $c_s$ are the coefficients of the characteristic polynomial of $M \equiv \gamma^- \gamma^+$. In the light of Newton's identities, they can be written in terms of the trace-like invariant polynomials $\eta_{2s} = \operatorname{tr}(M^s)$, for $s\in\{1,\ldots,n\}$ --- i.e. elements of $\widetilde{\mathcal{R}}$.
\par\noindent
Since $M^s = \Gamma^-_{s-1} \gamma^+$ and $\det(M) = \delta^+ \delta^-$, the polynomial
\begin{equation*}
    f \gamma^+
        = M^n + c_{n-1} M^{n-1} + \cdots + c_1 M + (-1)^n \det(M) \, \mathbb{1}_n
\end{equation*}
vanishes identically due to the Cayley--Hamilton theorem.
\par\noindent
Thinking of $\gamma^+$ as a matrix, we have $f = f \gamma^+ (\gamma^+)^{-1} = 0$ whenever $\gamma^+$ is invertible. But every square matrix in $\mathbb{K}$ is the limit of a sequence of invertible matrices, so $f=0$ everywhere.
\end{proof}

\subsection{The BRST differential on the covariants} \label{subsec: BRST on covariants}

As mentioned in \Cref{subsec: spectralseq}, the differentials on the Hochschild--Serre spectral sequence are lifts of the BRST differential to the associated filtration and its cohomologies, schematically given by \eqref{2 dr schematic}. For that reason, it is useful to understand how $Q_{\operatorname{ad}}$ acts on the covariant maps \eqref{4 covariant maps}.

\begin{proposition} \label{prop: BRST on Gamma pm}
The BRST operator of $\mathfrak{psl}(n|n)$ in adjoint coefficients acts on the odd-degree covariants $\Gamma^\pm_s \colon \mathrm{S}^{2s+1}(U) \longrightarrow U^\pm$ as
\begin{align} \label{4 Qad Gamma pm s}
    Q^\mathfrak{psl}_{\operatorname{ad}}\bigl( \Gamma^\pm_s \bigr)
        = \pi_0(M^{s+1}) + \pi_1(N^{s+1}) .
\end{align}
\end{proposition}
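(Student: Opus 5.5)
We will compute directly, viewing $\Gamma^\pm_s$ as a $U$-valued ghost polynomial $\phi\otimes\tau$ and writing $Q^{\mathfrak{psl}}_{\operatorname{ad}} = Q^{\mathfrak{psl}}\otimes\mathrm{id} + \Theta$ as in \eqref{2 BRST operator V}. We treat $\Gamma^+_s=(\gamma^+\gamma^-)^s\gamma^+$; the case $\Gamma^-_s$ follows by exchanging the roles of $V_0,V_1$, i.e. $a\leftrightarrow b$, $\gamma^+\leftrightarrow\gamma^-$, $M\leftrightarrow N$.

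\textbf{Step 1: the $Q\otimes\mathrm{id}$ part.} Use the matrix form of \eqref{2 BRST gl with indices}, $Q(\gamma^+)=-b\gamma^+ + \gamma^+a$ and $Q(\gamma^-)=-a\gamma^-+\gamma^-b$, together with \eqref{2 Q psl vs Q gl} to pass to $\mathfrak{psl}(n|n)$. The bosonic ghosts $\gamma^\pm$ have no $\gamma\gamma$-term in their differential, so the projection $\pi$ only affects $a,b$ through removal of their traces. Apply $Q$ as a derivation to the product $\gamma^+\gamma^-\gamma^+\cdots\gamma^+$. The inner terms telescope: each $\gamma^+$ contributes $+\gamma^+ a$ at its right, and the next $\gamma^-$ contributes $-a\gamma^-$ at its left, and these cancel pairwise. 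Similarly the inner $b$'s cancel. Only the boundary terms survive:
\begin{equation*}
    Q\bigl(\Gamma^+_s\bigr) = -b\,\Gamma^+_s + \Gamma^+_s\, a .
\end{equation*}
Because $\Gamma^+_s$ is $\mathfrak{g}_0$-covariant, this should be exactly the term that cancels against the $(a,b)$-part of $\Theta$.

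\textbf{Step 2: the $\Theta$ part.} Apply the third and fourth lines of \eqref{2 Theta on coordinates} with $\phi=\Gamma^+_s$, which is even ($\deg\phi=0$ in $\operatorname{Pol}$). The bosonic-ghost part of the adjoint action already appears with a $\pm$ sign, so it will cancel the result of Step 1. This is the infinitesimal form of covariance, and we will check it against the sign convention $(-1)^{\deg\phi}$. What remains is the $\gamma$-part of $\Theta$, which sends the $U^+$-valued polynomial $\Gamma^+_s$ into $\mathfrak{g}_0$ through the bracket $[U^-,U^+]$. Concretely, the $U^+$-component multiplied against the ghost $\gamma^-$ on either side gives
\begin{equation*}
    \gamma^-\Gamma^+_s = M^{s+1}\in\mathfrak{gl}(V_0)
    \qquad\text{and}\qquad
    \Gamma^+_s\gamma^- = N^{s+1}\in\mathfrak{gl}(V_1) .
\end{equation*}
These land in the two bosonic blocks with equal signs, because the anticommutator $[u^-,u^+]=u^-u^+ + u^+u^-$ has both terms positive.

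\textbf{Step 3: projecting to $\mathfrak{psl}$.} The value $M^{s+1}\oplus N^{s+1}$ lies in $\mathfrak{gl}(n|n)_0$. Composing with the projection $\mathfrak{gl}(n|n)\to\mathfrak{psl}(n|n)$ strips off the $\mathbb{J}$ and $\mathbb{1}_{2n}$ components. Since $\operatorname{tr}M^{s+1}=\operatorname{tr}N^{s+1}=\eta_{2s+2}$, this projection is exactly $\pi_0(M^{s+1})+\pi_1(N^{s+1})$, which gives \eqref{4 Qad Gamma pm s}.

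\textbf{Main obstacle.} The delicate point is sign bookkeeping: the Koszul signs in $\Theta$, and the ordering of a ghost $\gamma^\mp$ relative to the matrix-valued polynomial when forming the products $\gamma^-\Gamma^+_s$ and $\Gamma^+_s\gamma^-$. These signs determine both that the $a,b$ terms cancel exactly and that the two bosonic blocks come with the same sign rather than opposite signs. I would first verify the case $s=0$, where $\Gamma^\pm_0=\gamma^\pm$ is the identity covariant. There $Q_{\operatorname{ad}}(\gamma^+)$ must reproduce the bracket $[U^+,U^-]$ of $\mathfrak{psl}$, which fixes the conventions. The general case then follows from the telescoping in Step 1.
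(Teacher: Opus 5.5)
Your proposal is correct and follows essentially the same route as the paper. The $Q\otimes\mathrm{id}$ part leaves only the boundary terms $-b\,\Gamma^+_s+\Gamma^+_s a$, and these cancel against the $(a,b)$-part of $\Theta$; your covariance argument, that this is $d_0$ of a $\mathfrak{g}_0$-invariant, shows this without sign-chasing. The $\gamma^-$-part of $\Theta$ then gives $M^{s+1}+N^{s+1}$, and since $\operatorname{tr}M^{s+1}=\operatorname{tr}N^{s+1}$, the projection to $\mathfrak{psl}(n|n)$ yields $\pi_0(M^{s+1})+\pi_1(N^{s+1})$, exactly as in the paper.

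One small slip: for the $U^+$-valued $\Gamma^+_s$ (index structure $\phi^{I}{}_{j}$), only the fourth line of \eqref{2 Theta on coordinates} applies; the third line is the one needed for $\Gamma^-_s$.
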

\begin{proof}
Let us start with $s=0$, i.e. with the identity maps $\gamma^\pm \colon U^\pm \longrightarrow U^\pm$. Using the index convention introduced in \eqref{2 index convention} 
and by virtue of \eqref{2 Theta on coordinates}, we have
\begin{subequations} \label{4 Theta gamma pm}
\begin{align} 
    \Theta(\gamma^+)
        &= \bigl( b\indices{^I_L} \, \gamma\indices{^L_j} - \gamma\indices{^I_\ell} \, a\indices{^\ell_j} \bigr) \otimes \tau\indices{_I^j}
        + M\indices{^i_j} \otimes \tau\indices{_i^j}
        + N\indices{^I_J} \otimes \tau\indices{_I^J} ,
    \\
    \Theta(\gamma^-)
        &= \bigl( a^i{}_{\!\ell} \, \gamma^\ell{}_{\!J} - \gamma^i{}_{\!L} \, b^L{}_{\!J} \bigr) \otimes \tau_i{}^J
        + M^i{}_j \otimes \tau_i{}^j
        + N^I{}_J \otimes \tau_I{}^J .
\end{align}
\end{subequations}
The terms in parentheses are precisely the negative of the ones that come from $Q(\gamma^\pm)$ in \eqref{2 BRST gl with indices}, hence
\begin{align*}
    Q_{\operatorname{ad}}(\gamma^+)
        = Q_{\operatorname{ad}}(\gamma^-)
        = M + N .
\end{align*}
Similarly, for $s\in\mathbb{N}$, we have
\begin{align*}
    \Theta(\Gamma^+_s)
        &= b \, \Gamma^+_s - \Gamma^+_s a
            + M^{s+1} + N^{s+1} ,
    \\
    \Theta(\Gamma^-_s)
        &= a \, \Gamma^-_s - \Gamma^-_s b
            + M^{s+1} + N^{s+1} ;
\end{align*}
in turn, we can write $\Gamma^+_s = N^s \gamma^+$, so
\begin{align*}
    Q(\Gamma^+_s)
        &= - b \, \Gamma^+_s + \Gamma^+_s \, a ,
    \\
    Q(\Gamma^-_s)
        &= - a \, \Gamma^-_s + \Gamma^-_s \, b ,
\end{align*}
thus
\begin{align*}
    Q_{\operatorname{ad}}(\Gamma^\pm_s)
        = M^{s+1} + N^{s+1} .
\end{align*}
Up until now, both $Q$ and $\Theta$ were part of the BRST operator of $\mathfrak{gl}(n|n)$; the projection \eqref{2 Q psl vs Q gl} to $\mathfrak{psl}(n|n)$ yields \eqref{4 Qad Gamma pm s}, as we wanted.
\end{proof}

According to \eqref{4 Qad Gamma pm s}, the identity map $\gamma \equiv \gamma^+ + \gamma^-$ on $U$ is not even closed, but $\gamma^+ - \gamma^-$ is. Thus, we already expect that $\gamma^+ - \gamma^-$ will generate $H^1(\mathfrak{g};\mathfrak{g})$.

\begin{proposition} \label{prop: BRST on pi M and pi N}
The BRST operator of $\mathfrak{psl}(n|n)$ in adjoint coefficients acts on the even-degree covariants $\pi_0(M^s) \colon \mathrm{S}^{2s}(U) \longrightarrow \mathfrak{sl}(V_0)$ and $\pi_1(N^s) \colon \mathrm{S}^{2s}(U) \longrightarrow \mathfrak{sl}(V_1)$ as
\begin{subequations} \label{4 Qad Ms Ns}
\begin{align}
    Q^\mathfrak{psl}_{\operatorname{ad}}\bigl( \pi_0(M^s) \bigr)
        &= \Gamma_s^+ - \Gamma_s^- ,
    \\
    Q^\mathfrak{psl}_{\operatorname{ad}}\bigl( \pi_1(N^s) \bigr)
        &= \Gamma_s^- - \Gamma_s^+ .
\end{align}
\end{subequations}
\end{proposition}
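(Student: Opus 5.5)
We will argue as in the proof of \Cref{prop: BRST on Gamma pm}: compute $Q_{\operatorname{ad}} = Q\otimes\mathrm{id} + \Theta$ for $\mathfrak{gl}(n|n)$ on the matrix-valued polynomials $M^s\otimes\tau\indices{_i^j}$ and $N^s\otimes\tau\indices{_I^J}$, and only at the end apply the projection \eqref{2 Q psl vs Q gl} to $\mathfrak{psl}(n|n)$. We begin with $s=1$. By \eqref{2 BRST gl with indices} and the Leibniz rule, using $M=\gamma^-\gamma^+$ and that $\gamma^\pm$ are even,
\begin{equation*}
    Q(M) = (-a\gamma^- + \gamma^- b)\gamma^+ + \gamma^-(-b\gamma^+ + \gamma^+ a) = -aM + Ma ,
\end{equation*}
the $b$-terms cancelling. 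Iterating gives $Q(M^s) = -aM^s + M^s a$, and similarly $Q(N^s) = -bN^s + N^s b$.

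Next we apply the first formula of \eqref{2 Theta on coordinates} with $\phi = M^s$, which has even degree. The bosonic part is $aM^s - M^s a$, and it cancels $Q(M^s)$ exactly as in \Cref{prop: BRST on Gamma pm}. The fermionic part is $\gamma^+ M^s - M^s\gamma^-$. With the paper's convention that an entry-matrix attached to the off-diagonal basis is read through its index type, this is $N^s\gamma^+ - \gamma^- N^s$. Since $\gamma^+ M^s = (\gamma^+\gamma^-)^s\gamma^+ = N^s\gamma^+ = \Gamma^+_s$ and $M^s\gamma^- = \gamma^-(\gamma^+\gamma^-)^s = \Gamma^-_s$, we get $Q_{\operatorname{ad}}(M^s) = \Gamma^+_s - \Gamma^-_s$. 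The same computation with the second formula of \eqref{2 Theta on coordinates} gives $Q_{\operatorname{ad}}(N^s) = \gamma^- N^s - N^s\gamma^+ = \Gamma^-_s - \Gamma^+_s$.

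Finally we pass to $\pi_0(M^s) = M^s - \tfrac1n\eta_{2s}\mathbb{1}_n$. The invariant $\eta_{2s}$ is $Q$-closed, and $\Theta$ applied to a multiple of the identity gives $\eta_{2s}(\gamma^+ - \gamma^-)$. After projecting to $\mathfrak{psl}(n|n)$, the identity parts $\mathbb{1}_n\oplus 0$ become proportional to $\mathbb{J}$ modulo the centre, and $\mathbb{J}$ is removed. So the correction term contributes $\mp\tfrac1n\eta_{2s}(\gamma^+-\gamma^-)$.

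The main obstacle is this correction: it does not obviously vanish, and it would spoil \eqref{4 Qad Ms Ns} if it survived. We will first check whether the $\mathfrak{psl}$ projection of the coefficient module also kills it. The adjoint action of the identity blocks on $U$ is not central; what gives trivial action is $\mathbb{1}_{2n}$, while $\mathbb{J}$ acts on $U$ by $\pm2$. So we must verify, in the quotient $\mathfrak{sl}(n|n)/\mathfrak{i}_{2n}$, that the $\tfrac1n\eta_{2s}$ terms coming from $\pi_0$ and from the full trace cancel. We expect this to follow because $\Theta$ is computed with the $\mathfrak{psl}$ bracket, where $\mathfrak{sl}_n$-valued coefficients are traceless by construction. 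If instead a residual term $\tfrac{2}{n}\eta_{2s}(\gamma^\pm)$ appears, it lies in $\mathcal{R}\cdot\mathcal{I}_1$, and we would record it as a correction modulo $\mathcal{R}$-multiples of lower covariants. That correction is harmless for the spectral sequence argument, because $\Gamma^+_0-\Gamma^-_0$ is closed.
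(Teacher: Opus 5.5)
Your first two paragraphs are exactly the paper's proof: compute $Q_{\operatorname{ad}}(M^s)$ in $\mathfrak{gl}(n|n)$, cancel the $a$-terms against $Q(M^s)=[M^s,a]$, and read off $\gamma^+M^s-M^s\gamma^-=\Gamma^+_s-\Gamma^-_s$. The paper stops there, noting that the right-hand side is $\mathfrak{g}_1$-valued and so untouched by the projection to $\mathfrak{psl}(n|n)$.

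You were right to be suspicious, because the projection acts on the \emph{input}. The element $M^s\oplus0$ has supertrace $\eta_{2s}$, so it is not in $\mathfrak{sl}(n|n)$. It differs from $\pi_0(M^s)\oplus0$ by $\tfrac{\eta_{2s}}{n}(\mathbb{1}_n\oplus0)=\tfrac{\eta_{2s}}{2n}(\mathbb{1}_{2n}+\mathbb{J})$, and the $\mathbb{J}$-part acts on $U$ by $\pm2$.

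The gap in your proposal is the last step. The term does not cancel, and ``traceless by construction'' is what creates it, not what kills it. Since $\pi_0(M^s)$ is $\mathfrak{g}_0$-covariant, the $a$-terms cancel. Applying the first formula of \eqref{2 Theta on coordinates} to $\phi=\pi_0(M^s)$ itself, the $U^\pm$-blocks of $Q^{\mathfrak{psl}}_{\operatorname{ad}}(\pi_0(M^s))$ are $\gamma^+\pi_0(M^s)$ and $-\pi_0(M^s)\gamma^-$, and no projection modifies them. What is actually true is
\begin{align*}
    Q^{\mathfrak{psl}}_{\operatorname{ad}}\bigl(\pi_0(M^s)\bigr)
        &= \Gamma^+_s-\Gamma^-_s-\tfrac{1}{n}\,\eta_{2s}\,(\gamma^+-\gamma^-) ,
    \\
    Q^{\mathfrak{psl}}_{\operatorname{ad}}\bigl(\pi_1(N^s)\bigr)
        &= \Gamma^-_s-\Gamma^+_s+\tfrac{1}{n}\,\eta_{2s}\,(\gamma^+-\gamma^-) .
\end{align*}
The case $n=1$ settles it: there $\pi_0(M^s)=0$, while $\Gamma^+_s-\Gamma^-_s=(\gamma^+\gamma^-)^s(\gamma^+-\gamma^-)\neq0$, and the corrected right-hand side does vanish.

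So the displayed identity cannot be proved as stated; neither your argument nor the paper's establishes it. What your computation honestly yields is the corrected formula: $\pi_0(M^s)+\pi_1(N^s)$ is closed, but $\Gamma^+_s-\Gamma^-_s$ is exact only up to $\tfrac1n\eta_{2s}(\gamma^+-\gamma^-)$.

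Your fallback does not rescue the statement. Recording a correction means proving a different formula. The residual is $-\tfrac1n\eta_{2s}(\gamma^+-\gamma^-)$, not $\tfrac2n\eta_{2s}\gamma^\pm$. Nor is it harmless: precisely because $\gamma^+-\gamma^-$ is closed, the correction shifts cohomology classes rather than disappearing.

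For $s=1$, $\Gamma^+_1-\Gamma^-_1$ is cohomologous to $\tfrac1n\eta_2(\gamma^+-\gamma^-)$, which is a nonzero class in $E_\infty^{3,0}$. The only $d_1$-image in $E_1^{3,0}$ is the line spanned by the combination above. The differentials $d_2$ and $d_3$ into $E^{3,0}$ start from $E^{1,1}$ and $E^{0,2}$, which vanish since $H^1(\mathfrak{g}_0)=H^2(\mathfrak{g}_0)=0$.

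This matters downstream, because the proof of \Cref{thm: psl adjoint coefficients} uses exactness of every $\Gamma^+_s-\Gamma^-_s$ with $s\neq0$. A direct check: $\mu^+\mu^-=\operatorname{adj}(M)$, and $Q^{\mathfrak{psl}}_{\operatorname{ad}}\bigl(\pi_0(\mu^+\mu^-)\bigr)=\delta^+\mu^--\delta^-\mu^+-\tfrac1n\operatorname{tr}(\mu^+\mu^-)\,(\gamma^+-\gamma^-)$. Since $\operatorname{tr}\operatorname{adj}(M)\equiv(\eta_2)^{n-1}/(n-1)!$ modulo the transgressed $\eta_{2k}$ with $k\geqslant2$, the coefficient in front of $(\eta_2)^{n-1}(\gamma^+-\gamma^-)$ is $\tfrac{1}{n!}$, not $\tfrac{1}{(n-1)!}$.
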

\begin{proof}
We start by evaluating how the coordinates of $M$ and $N$ behave under the action of $Q$:
\begin{subequations} \label{4 Q M N}
\begin{alignat}{4}
    Q M\indices{^i_j}
        &= M\indices{^i_\ell} \, a\indices{^\ell_j}
            - a\indices{^i_\ell} \, M\indices{^\ell_j}
        \quad&&\Longrightarrow\quad
        QM &= [M,a] ,
    \\
    Q N\indices{^I_J}
        &= N\indices{^I_K} \, b\indices{^K_J}
            - b\indices{^I_L} \, N\indices{^L_J}
        \quad&&\Longrightarrow\quad
        QN &= [N,b] .
\end{alignat}
\end{subequations}
Using \eqref{2 Theta on coordinates}, we have
\begin{align*}
    \Theta(M^s)
        &= [a,M^s] + \gamma^+ M^s - M^s \gamma^- ,
    \\
    \Theta(N^s)
        &= [b,N^s] + \gamma^- N^s - N^s \gamma^+ .
\end{align*}
The terms $[M^s,a]$ and $[N^s,b]$ are precisely the action of the BRST operator in trivial coefficients on $M$ and $N$. Indeed, we know that this is true for $s=1$, as we showed in \eqref{4 Q M N}; by induction, we see that
\begin{align*}
    Q(M^s)
        &= Q(M^{s-1}) \, M + M^{s-1} \, Q(M)
        \\
        &= [M^{s-1},a] \, M + M^{s-1} \, [M,a]
        = [M^s,a] ,
\end{align*}
as the bracket above is, in fact, a supercommutator --- see \eqref{2 supercommutator}.
\par\noindent
Since $\gamma^+ M^s = N^s \gamma^+ = \Gamma^+_s$ and similarly to $\Gamma_s^-$, it follows that
\begin{align*}
    Q_{\operatorname{ad}} (M^s)
        &= \Gamma_s^+ - \Gamma_s^- ,
    \\
    Q_{\operatorname{ad}} (N^s)
        &= \Gamma_s^- - \Gamma_s^+ .
\end{align*}
As the RHS is $\mathfrak{g}_1$-valued, it is unaffected by the projection \eqref{2 Q psl vs Q gl}. Therefore \eqref{4 Qad Ms Ns} follows, as we wanted.
\end{proof}

\begin{proposition} \label{prop: adjugates are Q-closed for psl}
The adjugate covariants $\mu^\pm \colon \mathrm{S}^{n-1}(U^\pm) \longrightarrow U^\mp$ are closed under the BRST operator of $\mathfrak{psl}(n|n)$ in adjoint coefficients, i.e.
\begin{equation}
    Q^\mathfrak{psl}_{\operatorname{ad}}(\mu^\pm)
        = 0 .
\end{equation}
\end{proposition}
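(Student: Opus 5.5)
The plan is to compute $Q_{\operatorname{ad}}(\mu^\pm) = (Q\otimes\mathrm{id})(\mu^\pm) + \Theta(\mu^\pm)$ first inside $\mathfrak{gl}(n|n)$, as in the proofs of \Cref{prop: BRST on Gamma pm} and \Cref{prop: BRST on pi M and pi N}, and then apply the projection \eqref{2 Q psl vs Q gl} at the end. By the symmetry $+\leftrightarrow-$ (exchanging the roles of $a$ and $b$), it suffices to treat $\mu^+$. This is a $U^-$-valued polynomial of degree $n-1$ in the bosonic ghosts $\gamma^+$ alone, so as a ghost polynomial it has even statistics.

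\textbf{Trivial-coefficient part.} From \eqref{2 BRST gl with indices}, $Q(\gamma^+) = -b\gamma^+ + \gamma^+ a$. This is the infinitesimal form of the action $(g,h)\vartriangleright\gamma^+ = h\gamma^+g^{-1}$, with the ghosts $a,b$ playing the role of the infinitesimal parameters. We showed that $\mu^+$ is a covariant transforming as $(g,h)\vartriangleright\mu^+ = g\,\mu^+h^{-1}$. Differentiating this identity (or applying $Q$ as a derivation to $\gamma^+\mu^+ = \mu^+\gamma^+ = \delta^+\mathbb{1}_n$, using $Q(\delta^+)=0$ because $\delta^+$ is $\mathfrak{g}_0$-invariant, and then multiplying by $\gamma^+$ for invertible $\gamma^+$ and extending by density as in \Cref{prop delta mu}) should give
\begin{equation*}
    Q(\mu^+) = -a\,\mu^+ + \mu^+ b .
\end{equation*}

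\textbf{The $\Theta$ part.} We use the third line of \eqref{2 Theta on coordinates} for $U^-$-valued polynomials with $\deg(\mu^+)$ even. This gives $a\mu^+ - \mu^+ b$, which cancels exactly against $Q(\mu^+)$, just as happened for $\Gamma^\pm_s$. The remaining piece is $\mathfrak{g}_0$-valued and is built from the products $\gamma^+\mu^+$ and $\mu^+\gamma^+$. By the defining relation \eqref{4 def adjugate}, both products equal $\delta^+\mathbb{1}_n$. So, whatever the relative sign, the leftover is of the form $\delta^+\,(\pm\mathbb{1}_n \oplus \pm\mathbb{1}_n)$. Its components therefore lie in the scalar (trace) part of $\mathfrak{gl}(V_0)$ and of $\mathfrak{gl}(V_1)$.

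\textbf{Projection to $\mathfrak{psl}$.} Under $\pi_0$ and $\pi_1$ (equivalently, restricting to the supertrace kernel and quotienting by $\mathfrak{i}_{2n}$), multiples of $\mathbb{1}_n$ in each block are sent to zero. This is precisely the mechanism by which $M^{s+1}+N^{s+1}$ became $\pi_0(M^{s+1})+\pi_1(N^{s+1})$ in \Cref{prop: BRST on Gamma pm}. Hence $Q^{\mathfrak{psl}}_{\operatorname{ad}}(\mu^+) = 0$.

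The main obstacle is the first step: establishing $Q(\mu^+) = -a\mu^+ + \mu^+b$ with correct signs, since $\mu^+$ is a nonlinear polynomial in $\gamma^+$. I would derive it cleanly from the Leibniz rule on $\gamma^+\mu^+ = \delta^+\mathbb{1}_n$ together with the density argument, rather than by expanding cofactors.
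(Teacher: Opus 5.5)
Your structure matches the paper's: split $Q_{\operatorname{ad}} = Q\otimes\mathrm{id} + \Theta$, cancel the $a\mu^+ - \mu^+ b$ terms, and kill the scalar remainder. Getting $Q(\mu^+)$ from covariance, or from the Leibniz rule on $\gamma^+\mu^+ = \delta^+\mathbb{1}_n$, is a cleaner substitute for the paper's Levi-Civita index computation.

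\textbf{The gap.} The key formula $Q(\mu^+) = -a\mu^+ + \mu^+ b$ is false in the setting you chose, namely inside $\mathfrak{gl}(n|n)$, where $a,b$ are the full $\mathfrak{gl}_n$ ghosts. There $\mu^+$ is only $\mathrm{SL}$-covariant. For $(g,h)\in\mathrm{GL}_n\times\mathrm{GL}_n$, the action $\gamma^+\mapsto h\gamma^+g^{-1}$ gives $\mu^+\mapsto \det(h)\det(g)^{-1}\,g\,\mu^+h^{-1}$, and $\delta^+$ is not $\mathfrak{gl}_n\oplus\mathfrak{gl}_n$-invariant. The correct formulas are
\begin{equation*}
    Q(\delta^+) = \bigl(\operatorname{tr}(a)-\operatorname{tr}(b)\bigr)\,\delta^+ ,
    \qquad
    Q(\mu^+) = -a\,\mu^+ + \mu^+ b + \bigl(\operatorname{tr}(a)-\operatorname{tr}(b)\bigr)\,\mu^+ .
\end{equation*}
Hence $Q^{\mathfrak{gl}}_{\operatorname{ad}}(\mu^+) = \bigl(\operatorname{tr}(a)-\operatorname{tr}(b)\bigr)\mu^+ + \delta^+\,\mathbb{1}_{2n}$, which is exactly the paper's intermediate result. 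Both of your justifications tacitly assume traceless $a,b$: differentiating the $\mathrm{SL}$-covariance, and invoking $Q(\delta^+)=0$ via $\mathfrak{g}_0$-invariance.

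\textbf{Why your last step does not cover it.} The extra term is $U^-$-valued. Your final step only discards multiples of $\mathbb{1}_n$ in the $\mathfrak{g}_0$-valued part of the coefficient, so it never touches this term. The term vanishes for a different reason: $\operatorname{tr}(a)-\operatorname{tr}(b)$ is the supertrace viewed as a ghost, so it restricts to zero on $\mathfrak{sl}(n|n)$. This is the first of the two cancellations in the paper's proof, and it acts on the ghost side of \eqref{2 Q psl vs Q gl}, not on the coefficients.

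\textbf{The fix.} The repair is immediate, with two options:
\begin{itemize}
  \item keep the term and remove it by restricting the ghosts to the supertrace kernel; or
  \item work in $\mathfrak{psl}(n|n)$ from the outset. There $Q^{\mathfrak{psl}}(\gamma^+) = -b\gamma^+ + \gamma^+ a$ with traceless $a,b$, so $Q^{\mathfrak{psl}}(\delta^+)=0$ and $Q^{\mathfrak{psl}}(\mu^+) = -a\mu^+ + \mu^+ b$ genuinely hold.
\end{itemize}
In the Leibniz route you can then simply cancel $\delta^+$, which is a non-zero-divisor in the ghost polynomial algebra, instead of using a density argument. With this correction, your argument is a complete and index-free proof.
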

\begin{proof}
It suffices to consider only $\mu^+$, which reads in coordinates as
\begin{equation*}
    \mu^+
        = \epsilon\indices{^{i_1}^\ldots^{i_n}} \epsilon\indices{_{J_1}_\ldots_{J_n}} \,
        \gamma\indices{^{J_1}_{i_1}} \, \cdots \, \gamma\indices{^{J_{n-1}}_{i_{n-1}}} \otimes \tau\indices{_{i_n}^{J_n}} .
\end{equation*}
For $n=2$, we have
\begin{align*}
    \mu^+
        &= \epsilon\indices{^i^j} \epsilon\indices{_I_J} \gamma\indices{^I_i} \otimes \tau\indices{_j^J} ,
    \\
    Q_{\operatorname{ad}} (\mu^+)
        &= \Bigl(
            \epsilon\indices{^i^j} \, b\indices{^K_{[I}} \epsilon\indices{_{J]}_K}
            - \epsilon\indices{_I_J} \, \epsilon\indices{^k^{[i}} a\indices{^{j]}_k}
        \Bigr) \, \gamma^J{}_j \otimes \tau\indices{_i^I}
        + \det(\gamma^+) \, \mathbb{1}_{2n} ,
\end{align*}
where $S\indices{^{[i}} T\indices{^{j]}} \coloneqq S^i T^j - S^i T^j$ is the index anti-symmetrization of $S \otimes T$. To reach the expression above, we used \eqref{2 Theta on coordinates} and the defining property \eqref{4 def adjugate} of the adjugate matrix, taking into account the two different $\mathfrak{sl}_n$ sectors of $\mathfrak{psl}(n|n)_0$.
\par\noindent
Let $T\indices*{^i^j_I_J}$ be the tensor in parenthesis, so that
\begin{equation*}
    Q_{\operatorname{ad}}(\mu^+)
        = T\indices*{^i^j_I_J} \, \gamma\indices{^J_j} \otimes \tau\indices{_i^I} 
            + \det(\gamma^+) \, \mathbb{1}_{2n} ;
\end{equation*}
by its defining expression, $T$ has to be totally anti-symmetric in its upper indices and lower indices, hence $T\indices*{^i^j_I_J} = T \, \epsilon\indices{^i^j} \epsilon\indices{_I_J}$, for some scalar $T$ (that depends on $a$ and $b$). Thus
\begin{align*}
    Q_{\operatorname{ad}}(\mu^+)
        = T \mu^+ + \det(\gamma^+) \, \mathbb{1}_{2n} .
\end{align*}
It is easy to check that
\begin{alignat*}{4}
    i &\neq j
        \quad &&\Longrightarrow \quad
        \epsilon\indices{^k^{[i}} \, a\indices{^{j]}_k}
        &&= \epsilon\indices{^j^i} \operatorname{tr}(a) ,
        \\
    I &\neq J
        \quad &&\Longrightarrow \quad
        b\indices{^K_{[I}} \, \epsilon\indices{_{J]}_K}
        &&= \epsilon\indices{_J_I} \operatorname{tr}(b) ,
\end{alignat*}
thus
\begin{align*}
    T\indices*{^i^j_I_J}
        = \bigl( \operatorname{tr}(a) - \operatorname{tr}(b) \bigr) \, \epsilon\indices{_I_J} \epsilon\indices{^i^j}
    \quad\Longrightarrow\quad
    T = \operatorname{tr}(a) - \operatorname{tr}(b) .
\end{align*}
For a general $n\in\mathbb{N}$, a straightforward computation with adequate index relabelling yields
\begin{align*}
    Q_{\operatorname{ad}} (\mu^+)
        &= \sum_{s=1}^{n} \Biggl(
            \epsilon\indices{_{J_1}_\ldots_{J_n}}
            \epsilon\indices{^{i_1}^\ldots^{i_{s-1}}^k^{i_{s+1}}^\ldots^{i_n}} a\indices{^{i_s}_k}
            + {} \\ &\qquad {}
            - \epsilon\indices{^{i_1}^\ldots^{i_n}} \epsilon\indices{_{J_1}_\ldots_{J_{s-1}}_K_{J_{s+1}}_\ldots_{J_n}} b\indices{^K_{J_s}} \Biggr) \, \gamma\indices{^{J_1}_{i_1}} \cdots \gamma\indices{^{J_{n-1}}_{i_{n-1}}} \otimes \tau\indices{_{i_n}^{J_n}}
        + {} \\ &\quad {}
        + \det(\gamma^+) \, \mathbb{1}_{2n} ;
\end{align*}
again, we have a totally anti-symmetric tensor $T\indices*{^{i_1}^\ldots^{i_n}_{J_1}_\ldots_{J_n}}$, which has to take the form $T \, \epsilon\indices{^{i_1}^\ldots^{i_n}} \epsilon\indices{_{J_1}_\ldots_{J_n}}$. As before, the summations over $k$ and $K$ will yield
\begin{alignat*}{4}
    i_1 &\neq \cdots \neq i_n
    \quad&&\Longrightarrow\quad
    \sum_{s=1}^{n} \epsilon\indices{_{J_1}_\ldots_{J_n}} \epsilon\indices{^{i_1}^\ldots^{i_{s-1}}^k^{i_{s+1}}^\ldots^{i_n}} \, a\indices{^{i_s}_k}
        &&= \epsilon\indices{^{i_1}^\ldots^{i_n}} \operatorname{tr}(a)
    \\
    J_1 &\neq \cdots \neq J_n
    \quad&&\Longrightarrow\quad
    \sum_{s=1}^{n} \epsilon\indices{^{i_1}^\ldots^{i_n}} \epsilon\indices{_{J_1}_\ldots_{J_{s-1}}_K_{J_{s+1}}_\ldots_{J_n}} \, b\indices{^K_{J_s}}
        &&= \epsilon\indices{_{J_1}_\ldots_{J_n}} \operatorname{tr}(b) ,
\end{alignat*}
so that again
\begin{equation*}
    T = \operatorname{tr}(a) - \operatorname{tr}(b) ,
\end{equation*}
therefore
\begin{equation*}
    Q_{\operatorname{ad}}(\mu^+)
        = \bigl( \operatorname{tr}(a) - \operatorname{tr}(b) \bigr) \, \mu^+ + \det(\gamma^+) \, \mathbb{1}_{2n} .
\end{equation*}
This is the image of $\mu^+$ under the BRST operator of $\mathfrak{gl}(n|n)$. By going to $\mathfrak{sl}(n|n)$, we remove the first term (as the supertrace must vanish), and by quotienting out the ideal generated by $\mathbb{1}_{2n}$, nothing is left\footnotemark. Hence $\mu^+$ is closed.
\end{proof}

\footnotetext{
    The last step in the proof of \Cref{prop: adjugates are Q-closed for psl} tells us that the adjugate covariants $\mu^\pm$ are not closed for $\mathfrak{sl}(n|n)$, since $Q_{\operatorname{ad}}^{\mathfrak{sl}}(\mu^+) = \det(\gamma^+) \, \mathbb{1}_{2n} \neq 0$.
}

We can summarize \Cref{prop: BRST on Gamma pm,,prop: BRST on pi M and pi N,prop: adjugates are Q-closed for psl} as:
\begin{alignat*}{3}
    \left.
    \begin{aligned}
        \pi_0(M^s) + \pi_1(N^s) \\
        \Gamma_s^+ - \Gamma_s^-
    \end{aligned}
    \right\}
        &\in \operatorname{im}(Q_{\operatorname{ad}}^{\mathfrak{psl}}) ,
        \quad &&\forall s\in\mathbb{N} ,
    \\
    \left.
    \begin{aligned}
        \pi_0(M^s) - \pi_1(N^s) \\
        \Gamma_s^+ + \Gamma_s^-
    \end{aligned}
    \right\}
        &\notin \ker(Q_{\operatorname{ad}}^{\mathfrak{psl}}) ,
        \quad &&\forall s\in\mathbb{N} ,
    \\
    \left.
    \begin{aligned}
        \Gamma^+_0 - \Gamma_0^- \\
        \mu^+ ,\ \mu^-
    \end{aligned}
    \right\}
        &\in \ker(Q_{\operatorname{ad}}^{\mathfrak{psl}}) / \operatorname{im}(Q_{\operatorname{ad}}^{\mathfrak{psl}}) .
\end{alignat*}

\subsection[From the \texorpdfstring{\nth{2}}{2nd} to the \texorpdfstring{$\infty^{\text{th}}$}{infinitieth} page of the HS spectral sequence]{The \texorpdfstring{\nth{2}}{2nd} to the \texorpdfstring{$\infty^{\text{th}}$}{infinitieth} page of the Hochschild--Serre spectral sequence} \label{subsec: 2nd page HS adjoint}

From \eqref{2 dr schematic} and \eqref{2 delta schematic}, we know what $d_1$ must be:
\begin{align*}
    d_1 \simeq j^\ast \, i^{-1} \, Q \, j^{-1} ,
    \quad\text{up to cohomology classes} ;
\end{align*}
in a diagram, we are interested in the path
\begin{equation*}
  \begin{tikzcd}
    E_1^{p,k-p} \arrow{r}{\Delta}
        & H^{k+1}(F^{p+1} \mathcal{C}) \arrow{r}{j^\ast}
        & E_1^{p+1,k-p} ,
  \end{tikzcd}
\end{equation*}
where $\Delta \cong u^{-1} \, Q \, j^{-1}$ (up to cohomology classes) is the connecting homomorphism. The differential $d_1$ is the map that moves horizontally along $E_1^{\bullet,\bullet}$.

Starting at the $E_1^{p,0} \cong \operatorname{Inv}_{\mathfrak{g}_0} \bigl( \mathbb{K}[U]^p \otimes \mathfrak{g} \bigr)$ row, we have
\begin{equation*}
  \begin{tikzcd}
    E_1^{p,0} \arrow{r}{\Delta}
        & H^{p+1}(F^{p+1} \mathcal{C}) \arrow{r}{j^\ast}
        & E_1^{p+1,0} ;
  \end{tikzcd}
\end{equation*}
since $Q(\phi)$ has the same number of bosonic ghosts as $\phi\in\mathbb{K}[U]$, the maps $i$ and $j$ (in the definition of $\Delta$) and the homomorphism $j^\ast$ effectively act as identities, thus $d_1 \simeq Q$. We then have
\begin{equation*}
    \left.
    \begin{matrix}
        \gamma^+ - \gamma^- \\
        \mu^+ ,\ \mu^-
    \end{matrix}
    \right\}
    \in \ker(d_1) / \operatorname{im}(d_1) = E_2^{p,q} .
\end{equation*}
The reason why $\gamma^+ - \gamma^- \notin \operatorname{im}(d_1)$ is obvious: there is nothing at $E_1^{0,0}$ that could be its pre-image under $d_1$! For $\mu^\pm$, we note the following.

\begin{proposition} \label{prop: upsilon charge}
The even supervector field $\Upsilon \in \operatorname{Vect}(\Pi\mathfrak{g})$, associated to the dual of the map $\operatorname{id}_{U^+} - \operatorname{id}_{U^-} \colon \mathfrak{g} \longrightarrow \mathfrak{g}$, is given in coordinates by
\begin{align*}
    \Upsilon
        \coloneqq \gamma\indices{^I_i} \, \frac{\partial}{\partial\gamma\indices{^I_i}}
        - \gamma\indices{^j_J} \, \frac{\partial}{\partial\gamma\indices{^j_J}}
\end{align*}
and commutes with the BRST operator of $\mathfrak{gl}(m|n)$ in trivial coefficients.
\end{proposition}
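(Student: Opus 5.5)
The plan is to check the commutator $[\Upsilon,Q]$ on generators. Both $\Upsilon$ and $Q$ are derivations of $\operatorname{Pol}^\bullet(\Pi\mathfrak{g})$: $\Upsilon$ is even and $Q$ is odd. Hence their commutator $[\Upsilon,Q]=\Upsilon Q-Q\Upsilon$ is again an (odd) derivation. It therefore suffices to show that it vanishes on the first-order ghost coordinates $a$, $b$, $\gamma^+$, $\gamma^-$, which generate the algebra.

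The first step is to record that $\Upsilon$ is the grading operator for the ``charge'' in which $\gamma^+$ carries weight $+1$, $\gamma^-$ carries weight $-1$, and $a$, $b$ carry weight $0$. On any monomial, $\Upsilon$ acts as multiplication by (number of $\gamma^+$ factors) minus (number of $\gamma^-$ factors); this follows from the Euler-type form of the vector field in coordinates. The statement $[\Upsilon,Q]=0$ is then equivalent to $Q$ preserving this charge. Checking that on generators amounts to verifying that $Q(x)$ has the same charge as $x$ for each coordinate $x$.

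The second step reads off the charges from the matrix form \eqref{2 BRST gl with indices}:
\begin{itemize}
\item $Q(a)=-aa-\gamma^-\gamma^+$: charge $0$, since $\gamma^-\gamma^+$ has charge $-1+1=0$;
\item $Q(b)=-bb-\gamma^+\gamma^-$: charge $0$ for the same reason;
\item $Q(\gamma^+)=-b\gamma^++\gamma^+a$: charge $+1$;
\item $Q(\gamma^-)=-a\gamma^-+\gamma^-b$: charge $-1$.
\end{itemize}
Hence $\Upsilon Q(x)=Q\Upsilon(x)$ for every generator $x$. Concretely, for $x=\gamma^+$ both sides equal $Q(\gamma^+)$, for $x=\gamma^-$ both sides equal $-Q(\gamma^-)$, and for $x=a,b$ both sides vanish. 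The derivation property then gives $[\Upsilon,Q]=0$ on all of $\operatorname{Pol}^\bullet(\Pi\mathfrak{gl}(m|n))$.

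I expect the only delicate point to be the identification of $\Upsilon$ with the dual of $\operatorname{id}_{U^+}-\operatorname{id}_{U^-}$, together with the sign conventions. The map $\operatorname{id}_{U^+}-\operatorname{id}_{U^-}$ extends by zero on $\mathfrak{g}_0$ to an even endomorphism $D$ of $\mathfrak{g}$. Its transpose acts on $\Pi\mathfrak{g}^\ast$ by $\gamma^\pm\mapsto\pm\gamma^\pm$ and $a,b\mapsto 0$, and extending this transpose as an even derivation produces exactly the stated coordinate expression. No Koszul signs appear, because $\Upsilon$ is even.

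As a conceptual cross-check, $D$ is a derivation of $\mathfrak{gl}(m|n)$: $[U^+,U^-]\subset\mathfrak{g}_0$ has weight $0$, and $[U^\pm,U^\pm]=0$. Derivations of $\mathfrak{g}$ induce derivations of the Chevalley--Eilenberg complex commuting with $d_{\mathrm{CE}}$, which is the expected reason for the result. The explicit generator check above is the proof I would write.
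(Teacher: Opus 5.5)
Your proof is correct, and it takes a somewhat different route from the paper's.

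\textbf{The paper's route.} The paper works at the operator level in a general homogeneous basis. It expands $\Upsilon Q(\phi)$ and $Q\Upsilon(\phi)$ for arbitrary $\phi$, and uses the identity $[Q,\partial/\partial c^A] = f^D{}_{BA}\, c^B\, \partial/\partial c^D$ to write $Q\Upsilon-\Upsilon Q$ as an explicit vector field. It then argues that the coefficients cancel by checking separately whether the index $D$ lies in $\mathfrak{g}_0$ or in $\mathfrak{g}_1$.

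\textbf{Your route.} You note that $[\Upsilon,Q]$ is itself a derivation, so it suffices to test it on $a,b,\gamma^\pm$. There the matrix formulas for $Q$ show at once that each $Q(x)$ is homogeneous of the same $\upsilon$-charge as $x$.

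\textbf{What yours buys.} In substance both arguments compute the components of the same vector field. Your version needs no structure-constant bookkeeping and no sign conventions for $[Q,\partial/\partial c^A]$. It depends only on which monomials occur in $Q$, not on their precise coefficients. It also gives directly the form in which the result is used later, namely that $Q$ preserves $\upsilon$-charge. Finally, it transfers verbatim to $\mathfrak{sl}(n|n)$ and $\mathfrak{psl}(n|n)$, since passing to those algebras does not change the $\gamma^+/\gamma^-$ content of any monomial in $Q$.

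\textbf{Your cross-check.} This is worth keeping. The paper's opening sentence infers commutation with $d_{\mathrm{CE}}$ merely from $\Upsilon$ being dual to $\operatorname{id}_{U^+}-\operatorname{id}_{U^-}$. What is actually needed is that this map is a derivation of $\mathfrak{g}$, which you verify. For $\mathfrak{gl}(m|n)$ the map is even inner, equal to $-\tfrac12\operatorname{ad}_{\mathbb{J}}$. So there $\Upsilon$ is a multiple of the Lie derivative along $\mathbb{J}$, and Cartan's formula gives a one-line proof. On $\mathfrak{psl}(n|n)$, where $\mathbb{J}$ is absent, that shortcut is unavailable, and your charge argument is the one that still applies directly.
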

\begin{proof}
By definition, when restricted to \nth{1} order monomials, $\Upsilon$ acts as the dual of the map $\operatorname{id}_{U^+} - \operatorname{id}_{U^-} \equiv \gamma^+ - \gamma^-$, hence it commutes with the Chevalley--Eilenberg differential. Furthermore, in matrix form, we have
\begin{align*}
    \Upsilon Q (\phi)
        &= \gamma^+ \frac{\partial}{\partial\gamma^+}\bigl( Q(\phi) \bigr)
            - \gamma^- \frac{\partial}{\partial\gamma^-} \bigl( Q(\phi) \bigr) ,
    \\
    Q \Upsilon(\phi)
        &= Q(\gamma^+) \frac{\partial\phi}{\partial\gamma^+}
            + \gamma^+ Q\biggl(\frac{\partial\phi}{\partial\gamma^+}\biggr)
            - Q(\gamma^-) \frac{\partial\phi}{\partial\gamma^-}
            - \gamma^- Q\biggl(\frac{\partial\phi}{\partial\gamma^-}\biggr) ;
\end{align*}
but
\begin{align*}
    \biggl[ Q , \frac{\partial}{\partial c\indices{^A}} \biggr]
        &\equiv Q \, \frac{\partial}{\partial c\indices{^A}}
            + (-1)^{\bar{A}} \, \frac{\partial}{\partial c\indices{^A}} Q
        = f\indices*{^D_B_A} \, c\indices{^B} \frac{\partial}{\partial c\indices{^D}} ,
\end{align*}
so that
\begin{align*}
    Q\Upsilon - \Upsilon Q
        &= Q(\gamma^+) \frac{\partial}{\partial\gamma^+}
            - Q(\gamma^-) \frac{\partial}{\partial\gamma^-}
        + {} \\ &\qquad{}
        + \Bigl( f^D_{B\alpha^+} \gamma^+ c^B
            - f^D_{B\alpha^-} \gamma^- c^B \Bigr) \frac{\partial}{\partial c^D} ,
\end{align*}
where the index $\alpha^\pm$ labels the basis of $U^\pm$. For $D$ ranging in $\mathfrak{g}_0$ (i.e. $\bar{D} = 0$), the term in parentheses vanishes; for $D$ ranging in $\mathfrak{g}_1 = U$, the second line cancels the first. Therefore $[Q,\Upsilon]=0$.
\end{proof}

As an operator in $\operatorname{Pol}^\bullet(\Pi\mathfrak{g})$, the spectrum of $\Upsilon$ is $\mathbb{Z}$, which induces a $\mathbb{Z}$-grading we shall call \emph{$\upsilon$-charge} (``upsilon charge''). Its interpretation is rather easy: the $\upsilon$-charge of a monomial is the number of $\gamma^+$'s minus the number of $\gamma^-$'s; the $\upsilon$-charge of $\gamma^\pm$ is $\pm1$, and of $a$ or $b$ is $0$. 

\Cref{prop: upsilon charge} asserts that $Q^\mathfrak{gl}$ in trivial coefficients preserves the $\upsilon$-charge. This is an alternative way to see that the invariant polynomials $\delta^\pm$ \emph{cannot} be in the image of $Q^{\mathfrak{psl}}$ --- and, by extension, of any $d_r$ in \Cref{sec: trivial coefficients}.

\bigskip

For a general representation, however, we have to correct $\Upsilon$ as follows.

\begin{proposition}
Consider the even supervector field
\begin{equation*}
    \Upsilon_\rho
        \coloneqq \Upsilon \otimes \operatorname{id} + \operatorname{id} \otimes \tfrac{1}{2} \rho(\mathbb{J}) 
        \quad \in \operatorname{Vect}(\Pi\mathfrak{g}) \otimes V,
\end{equation*}
where $\rho$ is the $\mathfrak{g}$-action on $V$, and $\mathbb{J} = (+\mathbb{1}_m,-\mathbb{1}_n)$ is the element of $\mathfrak{gl}(m|n)$ defined in \eqref{2 def J generator of supertrace}. Then $\Upsilon_\rho$ commutes with the BRST operator $Q_\rho$ of $\mathfrak{gl}(m|n)$.
\end{proposition}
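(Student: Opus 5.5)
Expanding $Q_\rho = Q\otimes\operatorname{id} + \Theta$ and $\Upsilon_\rho = \Upsilon\otimes\operatorname{id} + \operatorname{id}\otimes\tfrac12\rho(\mathbb{J})$, the commutator $[Q_\rho,\Upsilon_\rho]$ splits into four terms:
\begin{equation*}
    [Q_\rho,\Upsilon_\rho]
        = [Q,\Upsilon]\otimes\operatorname{id}
        + \bigl[ Q\otimes\operatorname{id} , \operatorname{id}\otimes\tfrac12\rho(\mathbb{J}) \bigr]
        + [\Theta,\Upsilon\otimes\operatorname{id}]
        + \bigl[ \Theta , \operatorname{id}\otimes\tfrac12\rho(\mathbb{J}) \bigr] .
\end{equation*}
The first term vanishes by \Cref{prop: upsilon charge}. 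The second vanishes because the two operators act on different tensor factors. The proof therefore reduces to showing that the last two terms cancel.

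Write $\Theta(\phi\otimes v) = (-1)^{\deg\phi}\,\phi\, c^A\otimes\rho(\tau_A)v$. Since $\Upsilon$ is an even derivation, we get
\begin{equation*}
    [\Upsilon\otimes\operatorname{id},\Theta](\phi\otimes v) = (-1)^{\deg\phi}\,\phi\,\Upsilon(c^A)\otimes\rho(\tau_A)v .
\end{equation*}
Here $\Upsilon(c^A) = \upsilon_A c^A$, where $\upsilon_A = \pm1$ for $c^A$ a coordinate on $U^\pm$ and $\upsilon_A = 0$ on $\mathfrak{g}_0$. Hence this term is $\Theta$ with $\rho(\tau_A)$ replaced by $\upsilon_A\rho(\tau_A)$. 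For the other term, $\tfrac12[\rho(\mathbb{J}),\Theta]$ replaces $\rho(\tau_A)$ by $\tfrac12[\rho(\mathbb{J}),\rho(\tau_A)] = \tfrac12\rho([\mathbb{J},\tau_A])$, since $\rho$ is a representation and $\mathbb{J}$ is even.

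The key computation is then $[\mathbb{J},\tau_A]$ in $\mathfrak{gl}(m|n)$. Since $\mathbb{J}=\operatorname{diag}(\mathbb{1}_m,-\mathbb{1}_n)$, it commutes with $\mathfrak{g}_0$. On the odd blocks it acts by $\pm2$: it gives $-2$ on $U^+ = \operatorname{Hom}(V_0,V_1)$ and $+2$ on $U^-$ (checking with $\mathbb{J}u^+ - u^+\mathbb{J} = -u^+ - u^+$). So $\tfrac12\rho([\mathbb{J},\tau_A]) = -\upsilon_A\rho(\tau_A)$, which exactly cancels the previous term. This sign and the factor $\tfrac12$ are the main points to get right. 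I will fix the orientation of the commutators and the sign convention for the $\upsilon$-charge of $\gamma^\pm$ relative to $U^\pm$ so that they are consistent with \Cref{prop: upsilon charge}. If the sign came out reversed, it would indicate that $\Upsilon$ should be paired with $-\tfrac12\rho(\mathbb{J})$, so I will double-check using the adjoint example: there $\gamma^+\otimes\tau$ with $\tau\in U^+$ should have total charge $0$.

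The grading conventions also need care. $\Upsilon$ and $\rho(\mathbb{J})$ are both even, so all the brackets above are ordinary commutators and no Koszul signs beyond $(-1)^{\deg\phi}$ appear. Finally, to handle $\mathfrak{sl}$ or $\mathfrak{psl}$ later, note that the identity holds at the $\mathfrak{gl}(m|n)$ level, and $\Upsilon_\rho$ preserves the projections in \eqref{2 Q psl vs Q gl}, since $\mathbb{J}$ and $\mathbb{1}$ span charge-zero directions.
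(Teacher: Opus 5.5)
Your proposal is correct and follows essentially the same route as the paper: you dispose of $[Q,\Upsilon]$ using \Cref{prop: upsilon charge}, then cancel the two $\Theta$-contributions by comparing $\Upsilon(c^{\alpha^\pm}) = \pm c^{\alpha^\pm}$ with $\tfrac12[\mathbb{J},\tau_{\alpha^\pm}] = \mp\tau_{\alpha^\pm}$ (and $[\mathbb{J},\mathfrak{g}_0]=0$). Your sign computation is right as written, so the hedging about a possible sign reversal is unnecessary; the term $[Q\otimes\operatorname{id},\operatorname{id}\otimes\tfrac12\rho(\mathbb{J})]=0$, which you state explicitly, is only implicit in the paper.
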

\begin{proof}
For an arbitrary $V$-valued ghost polynomial $\phi \otimes v \in \operatorname{CE}^\bullet_\rho (\mathfrak{g},V)$, we have
\begin{align*}
    [\Upsilon_\rho , Q_\rho] (\phi\otimes v)
        &= [\Upsilon,Q](\phi) \otimes v + {} \\
        &\quad{} +
            (-1)^{\deg(\phi)} \, \phi \Bigl( 
                \Upsilon(c^A) \otimes \rho(\tau_A)
                - c^A \otimes \tfrac{1}{2} \rho\bigl([\tau_A,\mathbb{J}]\bigr) \Bigr) \cdot v ,
\end{align*}
and \Cref{prop: upsilon charge} already guarantees the first line vanishes for every scalar-valued ghost polynomial $\phi$.
\par\noindent
From the definition of $\Upsilon$, we have
\begin{align*}
    \Upsilon(c^A) \otimes \rho(\tau_A)
        = c^{\alpha^+} \otimes \rho(\tau_{\alpha^+})
        - c^{\alpha^-} \otimes \rho(\tau_{\alpha^-}) ,
\end{align*}
where $\alpha^\pm$ labels the basis of $U^\pm \subset \mathfrak{g}_1$. In turn, we know that 
\begin{align*}
    [\mathfrak{g}_0 , \mathbb{J}] = 0
    \qquad\text{and}\qquad
    [\tau_{\alpha^\pm},\mathbb{J}] = \pm 2\tau_{\alpha^\pm} ,
    \qquad\forall\alpha^\pm \in\{1,\ldots,\dim U^\pm\},
\end{align*}
hence
\begin{equation*}
    c^A \otimes \tfrac{1}{2} \rho\bigl( [\tau_A,\mathbb{J}] \bigr)
        = c^{\alpha^+} \otimes \rho(\tau_{\alpha^+})
        - c^{\alpha^-} \otimes \rho(\tau_{\alpha^-}) ,
\end{equation*}
i.e. $[\Upsilon_\rho,Q_\rho] = 0$.
\end{proof}

Although $\mathbb{J}$ is not present in $\mathfrak{psl}(n|n)$ --- in fact, $\mathbb{J}$ is the element removed from $\mathfrak{gl}(n|n)$ when restricting to $\mathfrak{sl}(n|n)$ ---, we can work with $Q_\rho$ for $\mathfrak{gl}(n|n)$ and see that there is a preserved $\upsilon$-charge then. For $\rho = \operatorname{ad}$, we have that
\begin{itemize}
    \item the adjugate covariants $\mu^\pm$ have $\upsilon = \pm(m-1)$,

    \item the odd-degree trace-like covariants $\Gamma_s^\pm$ have $\upsilon = \pm1$,

    \item the even-degree trace-like covariants $\pi_0(M^s)$ and $\pi_1(N^s)$ have $\upsilon = 0$,

    \item the trace-like invariant polynomials $\eta_{2k}$ have $\upsilon = 0$,

    \item the determinant polynomials $\delta^\pm$ have $\upsilon = \pm n$.
\end{itemize}

The preservation of the $\upsilon$-charge by $Q_{\operatorname{ad}}$ of $\mathfrak{gl}(n|n)$ guarantees, then, that no covariant can ever reach $\mu^\pm$ by $Q_{\operatorname{ad}}$.

\bigskip

We then come to the second main theorem of this paper:

\begin{theorem} \label{thm: psl adjoint coefficients}
The cohomology of $\mathfrak{psl}(n|n)$ in adjoint coefficients is
\begin{align*}
    H^\bullet_{\operatorname{ad}} \bigl( \mathfrak{psl}(n|n) \bigr)
    \cong \mathcal{K}\bigl\{ (\gamma^+ - \gamma^-) ,\ \mu^\pm \bigr\} \big/ \mathcal{J}_{\operatorname{ad}} ,
\end{align*}
i.e. $\mathcal{K}$-linear combinations of the pure covariants $\gamma^+ - \gamma^-$ and $\mu^\pm$, where
\begin{equation*}
    \mathcal{K}
        \coloneqq H\bigl( \mathfrak{psl}(n|n) ; \mathbb{K} \bigr) .
\end{equation*}
quotiented by the ideal $\mathcal{J}_{\operatorname{ad}}$ is generated by the relation
\begin{align} \label{4 covariants nontrivial relation} 
    \delta^+ \, \mu^- - \delta^- \, \mu^+
        &= \frac{1}{n!} \, (\eta_2)^n \bigl( \gamma^+ - \gamma^- \bigr) ,
\end{align}
\end{theorem}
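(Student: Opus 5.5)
The plan is to run the Hochschild--Serre spectral sequence for $\bigl(\mathfrak{psl}(n|n),\mathfrak{sl}_n\oplus\mathfrak{sl}_n;\operatorname{ad}\bigr)$, following the same pattern as the proof of \Cref{thm: psl trivial coefficients}. By \eqref{2 1st page decomposition} we have $E_1^{p,q}\cong H^q(\mathfrak{g}_0)\otimes\mathcal{I}^p$. By the discussion in \Cref{subsec: 1st page HS adjoint}, $\mathcal{I}$ is the $\mathcal{R}$-module spanned by the covariants \eqref{4 covariant maps}, subject to the relation of \Cref{prop delta mu}. The structure of $E_1$ as a module over the trivial-coefficient page is preserved by all differentials, so I would treat $E_r$ for the adjoint representation as a module over the trivial spectral sequence. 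In particular, the transgressions $d_{2k}(\varphi'_{2k-1})=d_{2k}(\varphi''_{2k-1})=\eta_{2k}$ remain valid after multiplication by covariants. The final answer should then appear as a $\mathcal{K}$-module.

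The first step computes $E_2$ on the row $q=0$, using $d_1\simeq Q^{\mathfrak{psl}}_{\operatorname{ad}}$ together with \Cref{prop: BRST on Gamma pm,prop: BRST on pi M and pi N,prop: adjugates are Q-closed for psl}. Split $\mathcal{I}$ into the $\upsilon$-charge sectors. In charge $0$ and $\pm1$ (the trace-like sector), $d_1$ pairs the elements as follows. It sends $\pi_0(M^s)-\pi_1(N^s)$ to $2(\Gamma^+_s-\Gamma^-_s)$, and $\Gamma^+_s+\Gamma^-_s$ to $2\bigl(\pi_0(M^{s+1})+\pi_1(N^{s+1})\bigr)$. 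It kills $\pi_0(M^s)+\pi_1(N^s)$ and $\Gamma^+_s-\Gamma^-_s$. So the trace-like part of $\ker d_1/\operatorname{im}d_1$ is $\widetilde{\mathcal{R}}\{\gamma^+-\gamma^-\}$; here the only survivor is $s=0$, since $\pi(M^0)=0$. This step needs some care: one must check that $\mathcal{R}$-combinations do not create extra kernel. That should follow because $d_1$ is $\mathcal{R}$-linear and the pairing above is an isomorphism between the relevant free $\widetilde{\mathcal{R}}$-pieces. In the charge sectors $\pm(n-1)$, the adjugates $\mu^\pm$ are closed by \Cref{prop: adjugates are Q-closed for psl}. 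They are not exact, because $\upsilon$-charge is conserved by the extended $\Upsilon_{\operatorname{ad}}$ and no other covariant of that charge and degree lies in the domain. Multiplying by $\delta^\pm$ moves between charge sectors, and the relation of \Cref{prop delta mu} identifies $\delta^\mp\mu^\pm$ with trace-like terms. Hence $E_2^{\bullet,0}\cong\mathcal{R}\{\gamma^+-\gamma^-,\mu^+,\mu^-\}$ modulo the image of that relation.

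The second step handles the higher pages. The columns $p>0$ are $H(\mathfrak{g}_0)\otimes E_2^{\bullet,0}$. The differentials $d_{2k}$ act only through the trivial-coefficient factors: the pair $\varphi'_{2k-1},\varphi''_{2k-1}$ is sent to $\eta_{2k}$ times the covariant. The covariants themselves lie in $\ker d_r$, because they sit on the bottom row and are genuinely $Q_{\operatorname{ad}}$-closed. They are also not hit, by the $\upsilon$-charge argument and the fact that $E_1^{0,\bullet}=0$. So the outcome is the tensor product of the trivial-coefficient computation with the span of $\gamma^+-\gamma^-$ and $\mu^\pm$. Concretely, this means replacing $\mathcal{R}\otimes H(\mathfrak{g}_0)$ by $\mathcal{K}$, setting $\eta_{2k}\sim0$ for $k\geqslant2$, and keeping $\psi_{2k-1}$.

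The last step derives \eqref{4 covariants nontrivial relation}. Combine the Cayley--Hamilton identity of \Cref{prop delta mu} for $\mu^+$ with the analogous one for $\mu^-$, obtained by exchanging the roles of $M$ and $N$. Subtract them so that the $\upsilon$-charge-$\pm1$ trace-like terms assemble into $\Gamma_s^+-\Gamma_s^-$. For $s\geqslant1$ these are exact, and all coefficients $\eta_{2k}$ with $k\geqslant2$ vanish in $\mathcal{K}$. What remains is the $s=0$ term, whose coefficient is the $\eta_2$-only part of the characteristic polynomial coefficient, namely $(\eta_2)^{n}/n!$, multiplying $\gamma^+-\gamma^-$. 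This reproduces the shape of \eqref{3 nontrivial relation cohomology}; it is also consistent with multiplying \eqref{3 nontrivial relation cohomology} by $\gamma^\pm$ and using $\delta^\pm\gamma^\mp\propto\ldots$. I expect this bookkeeping to be the main obstacle. First, the exponent and prefactor of $\eta_2$ must be pinned down exactly; the introduction states $(\eta_2)^{n-1}/(n-1)!$, so the degree count must be checked carefully against $\deg\mu^\pm=n-1$. Second, one must show this is the \emph{only} relation, which I would argue by an SFT-style degree count in each $\upsilon$-sector of $E_\infty$.
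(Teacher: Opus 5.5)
Your route is the paper's route: the same Hochschild--Serre sequence, $d_1\simeq Q^{\mathfrak{psl}}_{\operatorname{ad}}$ on the covariants via the propositions of \Cref{subsec: BRST on covariants}, the $\upsilon$-charge to keep $\mu^\pm$ alive, the transgressions $\varphi'_{2k-1}\mapsto\eta_{2k}$ to pass from $\mathcal{R}$ to $\mathcal{K}$, and the difference of the two Cayley--Hamilton identities of \Cref{prop delta mu}.

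The gap is in the last step, the one you flag as the main obstacle, and you resolve it the wrong way. The surviving $s=0$ term carries $c_1$, the coefficient of $\lambda$ in $\det(\lambda-M)$. This coefficient is homogeneous of degree $n-1$ in $M$, and its $\eta_2$-part is $(-1)^{n-1}(\eta_2)^{n-1}/(n-1)!$. The quantity $(\eta_2)^n/n!$ that you wrote is the $\eta_2$-part of $\det M=\delta^+\delta^-$, i.e.\ of $c_0$. That is what enters \eqref{3 nontrivial relation cohomology}, not this relation. Homogeneity settles the exponent: $\delta^\pm\mu^\mp$ has degree $2n-1$ while $(\eta_2)^k(\gamma^+-\gamma^-)$ has degree $2k+1$, so $k=n-1$. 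Hence the displayed $(\eta_2)^n$ cannot be right, and indeed the paper's proof and introduction end with $(\eta_2)^{n-1}$.

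A smaller point: $d_1$ is not an isomorphism between free $\widetilde{\mathcal{R}}$-pieces. By Cayley--Hamilton (with $c_n=1$), $\sum_{s=0}^{n-1}c_{s+1}(\Gamma^+_s+\Gamma^-_s)=(-1)^{n-1}(\delta^+\mu^-+\delta^-\mu^+)$ is an extra $d_1$-cocycle in the trace-like sector. It is harmless only because it is already counted in $\mathcal{R}\{\mu^\pm\}$.

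There is also a problem you inherit from the paper. The input ``$\Gamma^+_s-\Gamma^-_s$ is exact for $s\geqslant1$'' from \Cref{prop: BRST on pi M and pi N} drops a trace term. We have $\pi_0(M^s)=M^s-\frac1n\eta_{2s}\mathbb{1}_n$ and $(\mathbb{1}_n,0)=\frac12(\mathbb{1}_{2n}+\mathbb{J})$, with $[U,\mathbb{J}]=U$. So the projection does affect the $\mathfrak{g}_1$-valued part, and up to an overall sign
\begin{equation*}
    Q^{\mathfrak{psl}}_{\operatorname{ad}}\bigl(\pi_0(M^s)\bigr)=\Gamma^+_s-\Gamma^-_s-\tfrac1n\eta_{2s}(\gamma^+-\gamma^-) .
\end{equation*}
Without the last term, applying $d_1$ to the identity $\pi_0(M^n)=-\sum_{s=1}^{n-1}c_s\pi_0(M^s)$ would force $\det(M)(\gamma^+-\gamma^-)=0$. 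For $n=1$, where $\pi_0\equiv0$, it would even force $(\eta_2)^s(\gamma^+-\gamma^-)=0$.

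So $\Gamma^+_s-\Gamma^-_s\equiv\frac1n\eta_{2s}(\gamma^+-\gamma^-)$ for all $s\geqslant0$. Using $\sum_{s=0}^{n-1}c_{s+1}\operatorname{tr}(M^s)=\operatorname{tr}\operatorname{adj}(-M)=c_1$, the subtraction yields
\begin{equation*}
    \delta^+\mu^--\delta^-\mu^+
        \equiv \tfrac{1}{n}\operatorname{tr}\bigl(\operatorname{adj}M\bigr)\,(\gamma^+-\gamma^-)
        \equiv \tfrac{1}{n!}\,(\eta_2)^{n-1}\,(\gamma^+-\gamma^-) .
\end{equation*}

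You can check the prefactor independently. Set $R=\delta^+\mu^--\delta^-\mu^+-\kappa(\gamma^+-\gamma^-)$ and pair it with the cocycle $\mu^+$ through the invariant form $\operatorname{str}(XY)$. This pairing is a chain map to trivial coefficients and gives $\pm\delta^+\bigl(n\kappa-\operatorname{tr}\operatorname{adj}M\bigr)$. For $\kappa=(\eta_2)^{n-1}/(n-1)!$, the value the paper's proof obtains, this is $\pm\frac{n-1}{(n-1)!}\delta^+(\eta_2)^{n-1}$. For $n\geqslant2$ that is nonzero in $\mathcal{K}$ by \Cref{thm: psl trivial coefficients}, so $R$ could not be exact. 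With $\kappa=(\eta_2)^{n-1}/n!$ everything is consistent.

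So redo the $d_1$ computation with the trace term before doing the bookkeeping. The relation should carry $(\eta_2)^{n-1}/n!$.
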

\begin{proof}
From the previous discussion, we know that $\gamma^+ - \gamma^-$ and the adjugates $\mu^\pm$ are the only ``pure'' covariants that are not in the image of $d_1 \simeq Q$.
\par\noindent
In a general spot of the \nth{1} page, we have \eqref{thm: Hochschild Serre}, and a general element is of the form $\psi \wedge \phi$, with $\psi\in H^q(\mathfrak{g}_0) = H^q(\mathfrak{sl}_n \oplus \mathfrak{sl}_n)$ and $\phi$ a $p^{\text{th}}$-order covariant, thus
\begin{align*}
    Q_{\operatorname{ad}}(\psi \wedge \phi)
        = Q(\psi) \wedge \phi
            - \psi \wedge Q_{\operatorname{ad}}(\phi) .
\end{align*}
At the \nth{2} page, the second term always vanishes, since all surviving covariants ($\gamma^+ - \gamma^-$, $\delta^\pm$ and their $\mathcal{R}$-linear combinations) stay in the kernel of $Q$.
\par\noindent
From trivial coefficients, we know that $\psi$ has to be of the type \eqref{2 fermionic generators}, which transgress to $\eta_{2k} \equiv \operatorname{tr}(M^k) = \operatorname{tr}(N^k)$. This means that, for $k>1$, 
\begin{equation*}
    \eta_{2k} \wedge \phi
        = d_{2k}^{} \bigl( \varphi'_{2k-1} \wedge \phi \bigr) .
\end{equation*}
The only invariant polynomials left are $\eta_2$ and $\delta^\pm$, with the restriction \eqref{3 nontrivial relation cohomology}. Therefore, we know that the cohomology in adjoint coefficients is a module over $\mathcal{K} = H\bigl( \mathfrak{psl}(n|n) ; \mathbb{K} \bigr)$ --- described in \Cref{thm: psl trivial coefficients}.
\par\noindent
Finally, to prove \eqref{4 covariants nontrivial relation}, we make use of \Cref{prop delta mu}:
\begin{align*}
    (-1)^n \bigl( \delta^+ \mu^- - \delta^- \mu^+ \bigr)
        &= \bigl( \Gamma\indices*{^-_{n-1}} - \Gamma\indices*{^+_{n-1}} \bigr)
            + {} \\ &\qquad {} 
            + c_{n-1} \bigl( \Gamma\indices*{^-_{n-2}} - \Gamma\indices*{^+_{n-2}} \bigr)
            + \cdots 
            + c_1 \bigl( \Gamma\indices*{^-_0} - \Gamma\indices*{^+_0} \bigr) ,
\end{align*}
because the coefficients $c_s$ from the characteristic polynomials of $M$ and $N$ are the same. As we know from \Cref{prop: BRST on pi M and pi N}, all $\Gamma^-_s - \Gamma^+_s$ are exact for $s\neq0$, hence
\begin{equation*}
    \delta^+ \mu^- - \delta^- \mu^+
        = \frac{1}{(n-1)!} (\eta_2)^{n-1} \bigl( \gamma^+ - \gamma^- \bigr)
        \qquad\text{in cohomology},
\end{equation*}
as we wanted.
\end{proof}

\subsubsection{Cohomology of \texorpdfstring{$\mathfrak{psl}(2|2)$}{psl(2|2)} in adjoint coefficients}

Let us look at $\mathfrak{psl}(2|2)$ as an example. Its cohomology in adjoint coefficients is
\begin{align*}
    H^\bullet_{\operatorname{ad}}\bigl( \mathfrak{psl}(2|2) \bigr)
    = H^\bullet(\mathfrak{sl}_2)
        \otimes \frac{\mathbb{K}[ \eta_2 ,\ \delta^\pm ]}{\bigl( \delta^+ \delta^- = \tfrac{1}{2} (\eta_2)^2 \bigr)}
        \otimes \frac{\mathbb{K}\bigl\{ (\gamma^+ - \gamma^-) , \mu^\pm \bigr\}}{\mathcal{J}_{\operatorname{ad}}} .
\end{align*}
More explicitly, in \Cref{table: psl 2 2 adjoint} we present the generators up to \nth{7} cohomology, classified by their $\upsilon$-charge.

\begin{table}[H]
\centering
\begin{tabular}{|c|*{4}{>{\arraybackslash}m{2.3cm}|}c|}
    \hline\hline
    \multirow{2}{0.25cm}{\centering$\boldsymbol{k}$}
        & \multicolumn{5}{c|}{ \textbf{generators of} $\boldsymbol{H^k\bigl( \mathfrak{psl}(2|2) ; \operatorname{ad} \bigr)}$ }
    \\ 
        & \centering $\boldsymbol{\upsilon=0}$
        & \centering $\boldsymbol{\upsilon=\pm2}$
        & \centering $\boldsymbol{\upsilon=\pm4}$
        & \centering $\boldsymbol{\upsilon=\pm6}$
        & 
    \\ \hline\hline
    1   & $\bullet\; [\mathbb{J},\gamma]$
        & $\bullet\; \mu^\pm$
        & & &
    \\ \hline
    2   & & & & &
    \\ \hline
    3   & $\bullet\; \eta_2 [\mathbb{J},\gamma]$
            \newline
            $\bullet\; \delta^+ \mu^-$
        & $\bullet\; \eta_2 \mu^\pm$
            \newline
            $\bullet\; \delta^\pm [\mathbb{J},\gamma]$
        & $\bullet\; \delta^\pm \mu^\pm$
        & &
    \\ \hline
    4   & $\bullet\; \psi_3 [\mathbb{J},\gamma]$
        & $\bullet\; \psi_3 \mu^\pm$
        & & &
    \\ \hline
    5   & $\bullet\; (\eta_2)^2 [\mathbb{J},\gamma]$ 
            \newline
            $\bullet\; \eta_2 \delta^+ \mu^-$
        & $\bullet\; \eta_2 \delta^\pm [\mathbb{J},\gamma]$
            \newline
            $\bullet\; (\eta_2)^2 \mu^\pm$
        & $\bullet\; (\delta^\pm)^2 [\mathbb{J},\gamma]$
            \newline
            $\bullet\; \eta_2 \delta^\pm \mu^\pm$
        & $\bullet\; (\delta^\pm)^2 \mu^\pm$
        &
    \\ \hline
    6   & $\bullet\; \psi_3 \eta_2 [\mathbb{J},\gamma]$
            \newline
            $\bullet\; \psi_3 \delta^+ \mu^-$
        & $\bullet\; \psi_3 \delta^\pm [\mathbb{J},\gamma]$
            \newline
            $\bullet \; \psi_3 \eta_2 \mu^\pm$
        & $\bullet \; \psi_3 \delta^\pm \mu^\pm$
        & &
    \\ \hline
    7   & $\bullet\; (\eta_2)^3 [\mathbb{J},\gamma]$
        & $\bullet\; (\eta_2)^2 \mu^\pm [\mathbb{J},\gamma]$
            \newline
            $\bullet\; (\eta_2)^3 \mu^\pm$
            \newline
            $\bullet\; (\eta_2)^2 \delta^+ \mu^-$
        & $\bullet\; \eta_2 (\delta^\pm)^2 [\mathbb{J},\gamma]$
            \newline
            $\bullet\; (\eta_2)^2 \delta^\pm \mu^\pm$
        & $\bullet\; (\delta^\pm)^3 [\mathbb{J},\gamma]$
            \newline
            $\bullet\; \eta_2 (\delta^\pm)^2 \mu^\pm$
        & $\star$
    \\ \hline
    \multicolumn{6}{c}{$\vdots$}
\end{tabular}

\begin{flushright}
$\star = \quad$
\begin{tabular}{|c|c|c|}
    \hline\hline
    $\boldsymbol{k}$
        & $\boldsymbol{\cdots}$
        & $\boldsymbol{\upsilon=\pm8}$
    \\ \hline\hline
    7   & $\cdots$ & $\bullet\; (\delta^\pm)^3 \mu^\pm$
    \\ \hline
\end{tabular}
\end{flushright}
\caption{Generators of the low order cohomology spaces of $\mathfrak{psl}(2|2)$ in adjoint coefficients, classified by $\upsilon$-charge. Note that $\frac{1}{2} [\gamma,\mathbb{J}] = \gamma^+ - \gamma^-$.}
\label{table: psl 2 2 adjoint}
\end{table}

We see how quickly the dimension of the cohomology groups of $\mathfrak{psl}(2|2)$ grow. For higher $n$, the cohomology of $\mathfrak{psl}(n|n)$ will contain more fermionic ghost generators $\psi$ and will allow for higher powers of $\eta_2$ (before imposing the non-trivial relation), which enrich the cohomology structure.

\section{Final remarks}

In this paper, we computed explicitly the cohomology of the projective special linear Lie superalgebras $\mathfrak{psl}(n|n)$ in trivial and adjoint coefficients, using their associated Hochschild--Serre spectral sequences.

We showed how the determinants of the odd-blocks in the canonical decomposition \eqref{2 gl decomposition} play an important role in $\mathfrak{sl}(n|n)$ and $\mathfrak{psl}(n|n)$ and, moreover, how there are non-trivial relations between their cohomology classes --- which are described by the second fundamental theorem of invariant theory.

As mentioned in \Cref{sec: intro}, we borrowed the motivation from physics and the AdS/CFT duality, in which the real Lie superalgebra $\mathfrak{psu}(2,2|4)$ appears. Once again, we have the isomorphism in cohomology
\begin{equation*}
    H\bigl( \mathfrak{psl}_{\mathbb{C}}(n|n) ; V_\mathbb{C} \bigr)
        \cong \mathbb{C} \, H\bigl( \mathfrak{psu}(n|n) , V \bigr) ,
\end{equation*}
i.e., the cohomology of $\mathfrak{psl}(n|n)$ over $V_\mathbb{C} = V \otimes_{\mathbb{R}} \mathbb{C}$ is the complexification of the cohomology of $\mathfrak{psu}(n|n)$. Therefore, what we presented here can be extended to $\mathfrak{psu}(n|n)$ once one takes the \emph{real forms} of $\mathfrak{psl}(n|n)$ and its cohomology.

\addtocontents{toc}{\protect\vspace{\beforebibskip}}
\addcontentsline{toc}{section}{Acknowledgements}
\section*{Acknowledgements}

I would like to thank Andrei Mikhailov for the valuable discussions.

I am grateful for the hospitality of Perimeter Institute where part of this work was carried out. Research at Perimeter Institute is supported in part by the Government of Canada through the Department of Innovation, Science and Economic Development and by the Province of Ontario through the Ministry of Colleges and Universities.

    \addtocontents{toc}{\protect\vspace{\beforebibskip}}
    \addcontentsline{toc}{section}{\refname}
    \printbibliography[title={References}]

\end{document}